\documentclass[11pt,a4paper]{article}

\usepackage[T1]{fontenc}
\usepackage[utf8]{inputenc}

\usepackage{listings}
\usepackage{todonotes}
\usepackage{multicol}
\usepackage{amssymb}
\usepackage{amsthm}
\usepackage{mathtools}

\usepackage{amsfonts}
\usepackage{enumerate}
\usepackage{microtype}
\usepackage{paralist}
\usepackage{algorithm}
\usepackage{algpseudocode}

\algrenewcommand\algorithmicindent{0.7em}

\usepackage{graphicx}
\usepackage{subcaption}

% mathtools extends amsmath
\usepackage{mathtools}
% \parens[\Bigg or \big or \bigg ...]{}
\usepackage{amsthm}
\usepackage{pgfplots}
\usepackage[inline]{enumitem}
\usepackage{cleveref}
\usepackage{authblk}

\theoremstyle{definition}
\newtheorem{definition}{Definition}[section]

\newtheorem{remark}[definition]{Remark}
\newtheorem{claim}[definition]{Claim}
\newtheorem{requirement}[definition]{Requirement}

\theoremstyle{plain}
\newtheorem{corollary}[definition]{Corollary}
\newtheorem{lemma}[definition]{Lemma}

\newtheorem{theorem}[definition]{Theorem}

\numberwithin{equation}{section}

\DeclarePairedDelimiter{\abs}{\lvert}{\rvert}

\DeclarePairedDelimiter{\parens}{\lparen}{\rparen}
\DeclarePairedDelimiter{\bracks}{\lbrack}{\rbrack}
\DeclarePairedDelimiter{\braces}{\{}{\}}

\newcommand{\algid}[1] {$#1$}

\newcommand{\euler}{e}

%Paper dependent
%\newcommand{\cws}{Cheap Work Stealing}
\newcommand{\wsslong}{Work Stealing and Spreading}
\newcommand{\wss}{WSS}

\newcommand{\wslong}{Work Stealing}
\newcommand{\ws}{WS}

\newcommand{\iexecd}{\iota}

\algblockdefx[atomically]{BeginAtomic}{EndAtomic}{\textbf{atomically do}}{\textbf{done}}

\title{On the analysis of scheduling algorithms for structured parallel computations}

\date{\today}

\author[1]{Guilherme Rito\thanks{Work done while author was at NOVA Laboratory for Computer Science and Informatics,  Departamento de Informática, Faculdade de Ciências e Tecnologia, Universidade NOVA de Lisboa.}}

\author[2]{Hervé Paulino}

\affil[1]{ETH Zurich, guilherme.teixeira@inf.ethz.ch}

\affil[2]{NOVA Laboratory for Computer Science and Informatics,  Departamento de Informática, Faculdade de Ciências e Tecnologia, Universidade NOVA de Lisboa, herve.paulino@fct.unl.pt}

\begin{document}
\maketitle

\begin{abstract}
  Algorithms for scheduling structured parallel computations have been widely studied in the literature.
  For some time now, \wslong\ is one of the most popular for scheduling such computations, and its performance has been studied in both theory and practice.
  Although it delivers provably good performances, the effectiveness of its underlying load balancing strategy is known to be limited for certain classes of computations, particularly the ones exhibiting irregular parallelism (\textit{e.g.} depth first searches).
  Many studies have addressed this limitation from a purely load balancing perspective, viewing computations as sets of independent tasks, and then analyzing the expected amount of work attached to each processor as the execution progresses.
  However, these studies make strong assumptions regarding work generation which, despite being standard from a queuing theory perspective --- where work generation can be assumed to follow some random distribution --- do not match the reality of structured parallel computations --- where the work generation is not random, only depending on the structure of a computation.

  % For example, it is assumed that work arrives to the system according to some random distribution (usually the Poisson distribution), while, for multithreaded computations, work generation only depends on what has previously been executed (e.g. thread spawns).
  In this paper, we introduce a formal framework for studying the performance of structured computation schedulers, define a criterion that is appropriate for measuring their performance, and present a methodology for analyzing the performance of randomized schedulers.
  % a new approach for studying the performance of schedulers' load balancers that only makes assumptions that are typical (and reasonable) for studying computation schedulers.
  % We showcase the advantages of this approach by presenting a theoretical scheduler that mixes work stealing and spreading and then proving that it successfully overcomes WS's limitations.
  We demonstrate the convenience of this methodology by using it to prove that the performance of \wslong\ is limited, and to analyze the performance of a \wsslong\ algorithm, which overcomes \wslong's limitation.
\end{abstract}

\section{Introduction}
\label{sec:intro}
The main goal of a structured computation's scheduler is to guarantee the fast completion of the execution of arbitrary structured computations.
For some time now, \wslong\ is one of the most popular algorithms for scheduling structured computations~\cite{DBLP:conf/spaa/AroraBP98,DBLP:journals/mst/AroraBP01,DBLP:journals/jacm/BlumofeL99,DBLP:conf/ppopp/AcarCR13,DBLP:journals/siamcomp/BerenbrinkFG03,DBLP:conf/icpp/CongKKLSW08,DBLP:conf/sc/DinanLSKN09,DBLP:conf/podc/HendlerS02,DBLP:conf/europar/QuintinW10}.
In \wslong\ (or \ws\ for short), each processor owns a deque that uses to keep track of its work.
Busy processors operate locally on their deques, adding and retrieving work from them as necessary, until they run out of work.
When that happens, a processor becomes a thief and starts a stealing phase, during which it targets other processors, uniformly at random, in order to steal work from their deques.

As proved in~\cite{DBLP:journals/mst/AroraBP01,DBLP:journals/jacm/BlumofeL99}, the expected execution time of any computation using \ws\ is asymptotically optimal.
Nevertheless, \ws's performance is known to be limited for the execution of computations that exhibit irregular parallelism (\textit{e.g.}~depth first computations where only a few threads actually generate work)~\cite{DBLP:conf/ppopp/AcarCR13,DBLP:journals/siamcomp/BerenbrinkFG03,DBLP:conf/icpp/CongKKLSW08,DBLP:conf/sc/DinanLSKN09}.
For coping with this limitation, numerous studies have been resorting to the use of steal-half deques~\cite{DBLP:conf/ppopp/AcarCR13,DBLP:conf/sc/DinanLSKN09,DBLP:conf/podc/HendlerS02,DBLP:conf/europar/QuintinW10} which allow thieves to take up to half of the work of their victims in a single steal operation.
The adoption of steal-half strategies by real-life schedulers has been mostly justified by the strategies' importance on distributed memory environments, where each steal attempt incurs in significant latency, making it worth to transfer a larger amount of work in a single steal.
On the other hand, the steal-half strategy has been formally proved, from a queuing theory perspective, to be an effective load balancing method for schedulers of independent tasks~\cite{DBLP:journals/siamcomp/BerenbrinkFG03}.
However, while this strategy may be ideal for independent task scheduling from a queuing theory perspective --- where tasks are assumed to arrive at a system according to some probability distribution, and work transfers are assumed to take constant time regardless of the amount of tasks transferred --- it remains unknown whether it is suitable for structured computation scheduling --- where work generation only depends on the structure of a computation, and where the time for a processor to transfer work from another processor is proportional to the amount of work transferred --- and so the problem of how to cope with \ws's limitation remains open.
Even more importantly, while there are well established methods for analyzing the performance of the load balancers of independent task schedulers --- usually based on the analysis of Markov chains --- and a well-defined goal --- which is typically to assure that the system's load does not grow unboundedly over time --- to the best of our knowledge there are no well-defined methods suitable for analyzing the performance of the load balancers of online structured computation schedulers, nor even well-defined goals.

%\todo[inline]{acho que as contribuicoes sao: algorithm short-term stability, a prova que o WS nao e' algorithm short-term stable,
%a proposta de um greedy wss que é algorithm short-term stable e que mostra que a utilização de spreading pode ser um caminho a explorar para se mitigar as limitações do WS em algumas classes de problemas}

To this extent, the contributions of this paper are:
\begin{itemize}[noitemsep,topsep=0pt,parsep=0pt,partopsep=0pt]
\item A formal framework for studying the performance of structured computation schedulers (Section~\ref{sec:preliminaries}).
  One of the key features of this framework is that it can be used to model most, if not all, practical scheduling algorithms.

\item The definition of \textit{algorithm short-term stability} (Section~\ref{subsec:stt}), which is an appropriate criterion for measuring the performance of online structured computation schedulers.
    
\item A methodology that allows to effectively study the performance of randomized computation schedulers (Section~\ref{sec:strategy-randomized-analysis}).
  We demonstrate its convenience by:
  \begin{enumerate*}
  \item using it to prove that the performance of \ws\ is indeed limited (Section~\ref{subsec:identifying-ws-limitation}); and
  \item presenting a (purely theoretical) variant of the \ws\ algorithm where processors attempt to spread work as it is generated, and then using our methodology to show that the algorithm overcomes the identified limitations of \ws\ (Section~\ref{sec:analyzing-wss}).
  \end{enumerate*}
  Despite being purely theoretical, the algorithm we present gives us insight on how the limitations of \ws\ can be addressed.
\end{itemize}
\section{A criterion to measure the performance of computation schedulers}
\label{sec:preliminaries}

%To measure the performance of scheduling algorithms, \todo[inline]{general explanation of what is short term stability and why it is important}

%\subsection{Formal framework}
%\label{subsec:formal-framework}

%%%%%%%%%%%%%%%%%%%%%%%%%%%%%%%%%%%%%%%%%%%%%%%%%%%%%%%%%%%%%%%%%%%%%%%%%%%%%%%%%%%%%%%%%%%%%%%%%%%%%%
%                                                                                                    %
%                                                                                                    %
%                                    Structure of computations                                       %
%                                                                                                    %
%                                                                                                    %
%%%%%%%%%%%%%%%%%%%%%%%%%%%%%%%%%%%%%%%%%%%%%%%%%%%%%%%%%%%%%%%%%%%%%%%%%%%%%%%%%%%%%%%%%%%%%%%%%%%%%%

Like in much previous work~\cite{DBLP:journals/mst/AcarBB02,DBLP:journals/tocs/AgrawalLHH08,DBLP:journals/mst/AroraBP01,DBLP:journals/jacm/BlumofeL99,DBLP:conf/spaa/MullerA16,DBLP:conf/isaac/TchiboukdjianGTRB10}, we model a computation as a \emph{dag} $G = \left (V, E \right)$, where each node $v \in V$ corresponds to an instruction, and each edge $\left(\mu_{1},\mu_{2}\right) \in E$ denotes an ordering constraint (meaning $\mu_{2}$ can only be executed after $\mu_{1}$).
Nodes with in-degree of 0 are referred to as \emph{roots}, while nodes with out-degree of 0 are called \emph{sinks}.
We make the two standard assumptions related with the structure of computations.
Let $G$ denote a computation's dag:
\begin{enumerate*}
\item there is only one root and one sink in $G$; and
\item the out-degree of any node within $G$ is at most two.
\end{enumerate*}

We consider that processors operate on discrete time steps, each executing one instruction --- that may or may not correspond to a computation node --- per time step.
The execution of a computation is carried out by a set of processors denoted by $Procs$ whose cardinality is denoted by $P$.
We assume that $P \geq 2$ (\textit{i.e.}~$Procs$ is composed by at least two processors), and that all processors operate synchronously in time steps.
Therefore, a computation's execution can be partitioned into discrete time steps, such that at each step every processor executes an instruction.
We refer to these time steps using non-negative integers, where $0$ is the first step and $i+1$ is the step succeeding $i$.

%%%%%%%%%%%%%%%%%%%%%%%%%%%%%%%%%%%%%%%%%%%%%%%%%%%%%%%%%%%%%%%%%%%%%%%%%%%%%%%%%%%%%%%%%%%%%%%%%%%%%%
%                                                                                                    %
%                                                                                                    %
%                                         States of nodes                                            %
%                                                                                                    %
%                                                                                                    %
%%%%%%%%%%%%%%%%%%%%%%%%%%%%%%%%%%%%%%%%%%%%%%%%%%%%%%%%%%%%%%%%%%%%%%%%%%%%%%%%%%%%%%%%%%%%%%%%%%%%%%

% Por um esquema de estados not ready -> ready -> executed
\begin{definition}
  \label{def:node-states}
  At any step during a computation's execution each node of the computation is in exactly one of the following states:
  \begin{itemize*}
  \item[\textbf{not ready}] --- if its predecessors have not yet been executed; 
  \item[\textbf{ready}] --- if its predecessors have been executed, but not the node itself; and
  \item[\textbf{executed}] --- if the node has been executed.
  \end{itemize*}
\end{definition}

As one may note, a node can only be \textbf{ready} if all the ordering constraints wrt (with respect to) the node are satisfied.
For example, at the first step of a computation's execution every node (except for the root) is \textbf{not ready}.
To ensure the correct execution of a computation, only nodes that are \textbf{not ready} can become \textbf{ready}, and only nodes that are \textbf{ready} can become \textbf{executed}.
For each step $i$, refer to the set of nodes that are:
\begin{enumerate*}
\item \textbf{not ready} by $NonReady_{i}$;
\item \textbf{ready} by $Ready_{i}$ (or simply $R_{i}$); and
\item \textbf{executed} by $Executed_{i}$.
\end{enumerate*}
Since only the nodes that are \textbf{ready} can become \textbf{executed}, $Executed_{i}$ can alternatively be defined as $Executed_{i} = \parens[\big]{\bigcup_{j \in \braces{1,\ldots,i-1}}R_{j}} - R_{i}$ (\textit{i.e.} the set of all nodes that were once ready, but no longer are).

For each step $i$, partition $R_{i}$ into $P$ sets (one per processor), and refer to $R_{i}\parens{p}$ --- processor $p$'s partition of $R_{i}$ ---  as the set of nodes that are \emph{attached} to processor $p$ at step $i$.
Say that a node was \emph{enabled} at step $i$ if it was \textbf{not ready} at step $i$ but is \textbf{ready} at step $i+1$, and, similarly, that a node was \emph{executed} at step $i$ if it was \textbf{ready} at step $i$ but is \textbf{executed} at step $i+1$.
In addition, say that a node $\mu$ is \emph{migrated} if $\mu \in R_{i}\left(p\right)$ and $\mu \in R_{i+1}\left(q\right)$, where $p \neq q$, for $p,q \in Procs$.
The next definition formalizes these ideas.

%%%%%%%%%%%%%%%%%%%%%%%%%%%%%%%%%%%%%%%%%%%%%%%%%%%%%%%%%%%%%%%%%%%%%%%%%%%%%%%%%%%%%%%%%%%%%%%%%%%%%%
%                                                                                                    %
%                                                                                                    %
%                                Sets of ready and computed nodes                                    %
%                                                                                                    %
%                                                                                                    %
%%%%%%%%%%%%%%%%%%%%%%%%%%%%%%%%%%%%%%%%%%%%%%%%%%%%%%%%%%%%%%%%%%%%%%%%%%%%%%%%%%%%%%%%%%%%%%%%%%%%%%

\begin{definition}
  \label{def:node-definitions}
  For each step $i$ and processor $p$, define the set of nodes \emph{enabled} by $p$ as $E_{i}\parens{p} = R_{i+1}\parens{p} - R_{i}$ and \emph{computed} (or \emph{executed}) by $p$ as $C_{i}\parens{p} = R_{i}\parens{p} - R_{i+1}$.
  Moreover, define the set of nodes \emph{migrated} from $p$ to all other processors as $M^{+}_{i}\parens{p} = R_{i}\parens{p} \cap \parens{R_{i+1} - R_{i+1}\parens{p}}$ and from all other processors to $p$ as $M^{-}_{i}\parens{p} = R_{i+1}\parens{p} \cap \parens{R_{i} - R_{i}\parens{p}}$.

%\todo[inline]{onde e' que sao utilizadas as seguintes definicoes no paper? Eu nao encontrei}
  For a set of processors $S \in \mathcal{P} \parens{Procs}$, define $R_{i}\parens{S} = \bigcup_{p \in S} R_{i}\parens{p}$, $E_{i}\parens{S} = \bigcup_{p \in S} E_{i}\parens{p}$, $C_{i}\parens{S} = \bigcup_{p \in S} C_{i}\parens{p}$, $M^{+}_{i}\parens{S} = \bigcup_{p \in S} M^{+}_{i}\parens{p}$, and $M^{-}_{i}\parens{S} = \bigcup_{p \in S} M^{-}_{i}\parens{p}$.
  Additionally, define $E_{i} = E_{i}\parens{Procs}$, $C_{i} = C_{i}\parens{Procs}$, $M^{+}_{i} = M^{+}_{i}\parens{Procs}$, and $M^{-}_{i} = M^{-}_{i}\parens{Procs}$.
\end{definition}

% As one may note, $M^{+}_{i} = M^{-}_{i}$, as $M^{+}_{i}\parens{Procs} = M^{-}_{i}\parens{Procs}$.

% As illustrated in Figure~\ref{}, $E_{i}$ and $C_{i}$ correspond, respectively, to the sets of nodes that transition from \textbf{not ready} to \textbf{ready} and from \textbf{ready} to \textbf{executed} at step $i$.
% Say something to introduce rounds?

%%%%%%%%%%%%%%%%%%%%%%%%%%%%%%%%%%%%%%%%%%%%%%%%%%%%%%%%%%%%%%%%%%%%%%%%%%%%%%%%%%%%%%%%%%%%%%%%%%%%%%
%                                                                                                    %
%                                                                                                    %
%                                           Rounds                                                   %
%                                                                                                    %
%                                                                                                    %
%%%%%%%%%%%%%%%%%%%%%%%%%%%%%%%%%%%%%%%%%%%%%%%%%%%%%%%%%%%%%%%%%%%%%%%%%%%%%%%%%%%%%%%%%%%%%%%%%%%%%%

Having defined these sets of nodes, we now introduce rounds.
Informally, a round is a sequence of time steps with constant length, such that every processor executes at most one node and no ready node is migrated more than once.

\begin{definition}
  \label{def:round}
  A \emph{round} is a sequence of $L$ time steps (for some constant $L \geq 1$) such that a computation's execution can be partitioned into equal-length rounds and for every round:
  \begin{enumerate*}
  \item no processor executes more than a single node; and 
  \item no node is migrated more than once.
  \end{enumerate*}
\end{definition}

Analogously to time steps, refer to rounds using non-negative integers, but with an additional bar, where $\overline{0}$ denotes the first round.
Throughout this paper, we let $L$ denote the length of rounds and $\overline{t}\left[i\right]$ denote the $i$-\textit{th} step of a round $\overline{t}$ (for $i \in \braces{0,\ldots,L-1}$).
As we will see, the length of rounds depends on the scheduling algorithm.

%%%%%%%%%%%%%%%%%%%%%%%%%%%%%%%%%%%%%%%%%%%%%%%%%%%%%%%%%%%%%%%%%%%%%%%%%%%%%%%%%%%%%%%%%%%%%%%%%%%%%%
%                                                                                                    %
%                                                                                                    %
%                        Sets of ready and computed nodes for rounds                                 %
%                                                                                                    %
%                                                                                                    %
%%%%%%%%%%%%%%%%%%%%%%%%%%%%%%%%%%%%%%%%%%%%%%%%%%%%%%%%%%%%%%%%%%%%%%%%%%%%%%%%%%%%%%%%%%%%%%%%%%%%%%

Now, we reintroduce the concepts we have already presented above concerning the states of nodes, but this time considering rounds rather than steps.

\begin{definition}
  \label{def:node-definitions-round}
  For each round $\overline{t}$ and processor $p$, define the set of nodes \emph{attached} to $p$ at round $\overline{t}$ as $R_{\overline{t}}\parens{p} = R_{\overline{t}\bracks{0}}\parens{p}$, \emph{enabled} by $p$ during $\overline{t}$ as \[E_{\overline{t}}\parens{p} = \bigcup_{i \in \braces{\overline{t}\bracks{0},\cdots,\overline{t}\bracks{L - 1}}}E_{i}\parens{p},\] \emph{executed} by $p$ during $\overline{t}$ as \[C_{\overline{t}}\parens{p} = \bigcup_{i \in \braces{\overline{t}\bracks{0},\cdots,\overline{t}\bracks{L - 1}}}C_{i}\parens{p},\] \emph{migrated} from $p$ to all other processors during $\overline{t}$ as \[M^{+}_{\overline{t}}\parens{p} = \bigcup_{i \in \braces{\overline{t}\bracks{0},\cdots,\overline{t}\bracks{L - 1}}}M^{+}_{i}\parens{p}\] and \emph{migrated} from all other processors to $p$ during $\overline{t}$ as \[M^{-}_{\overline{t}}\parens{p} = \bigcup_{i \in \braces{\overline{t}\bracks{0},\cdots,\overline{t}\bracks{L - 1}}}M^{-}_{i}\parens{p}.\]

  For a set of processors $S \in \mathcal{P} \parens{Procs}$, define $R_{\overline{t}}\parens{S} = \bigcup_{p \in S} R_{\overline{t}}\parens{p}$, $E_{\overline{t}}\parens{S} = \bigcup_{p \in S} E_{\overline{t}}\parens{p}$, $C_{\overline{t}}\parens{S} = \bigcup_{p \in S} C_{\overline{t}}\parens{p}$, $M^{+}_{\overline{t}}\parens{S} = \bigcup_{p \in S} M^{+}_{\overline{t}}\parens{p}$, and $M^{-}_{\overline{t}}\parens{S} = \bigcup_{p \in S} M^{-}_{\overline{t}}\parens{p}$.
  Additionally, define $E_{\overline{t}} = E_{\overline{t}}\parens{Procs}$, $C_{\overline{t}} = C_{\overline{t}}\parens{Procs}$, $M^{+}_{\overline{t}} = M^{+}_{\overline{t}}\parens{Procs}$, and $M^{-}_{\overline{t}} = M^{-}_{\overline{t}}\parens{Procs}$.
\end{definition}

The following result, Lemma~\ref{lemma:migrated-to-p-subset-migrated-from-all-but-p}, states that the set of nodes migrated to a processor $p$ during a round $\overline{t}$ is a subset of all the nodes that are migrated from all processors but $p$ during $\overline{t}$.
The proof of the following result can be found in the Appendix (Section~\ref{sub:1-proof-lemma-1}).

\begin{lemma}
  \label{lemma:migrated-to-p-subset-migrated-from-all-but-p}
  For any round $\overline{t}$ and $p \in Procs$, $M^{-}_{\overline{t}}\parens{p} \subseteq M^{+}_{\overline{t}}\parens{Procs - \braces{p}}$.
\end{lemma}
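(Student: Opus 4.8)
The plan is to reduce the round-level inclusion to a per-step inclusion and then recover the statement by taking unions. Since Definition~\ref{def:node-definitions-round} sets $M^{-}_{\overline{t}}\parens{p} = \bigcup_{i \in \braces{\overline{t}\bracks{0},\ldots,\overline{t}\bracks{L-1}}} M^{-}_{i}\parens{p}$ and, by unfolding the set-indexed definition, $M^{+}_{\overline{t}}\parens{Procs - \braces{p}} = \bigcup_{q \neq p}\bigcup_{i} M^{+}_{i}\parens{q} = \bigcup_{i} M^{+}_{i}\parens{Procs - \braces{p}}$, it suffices to establish the single-step inclusion $M^{-}_{i}\parens{p} \subseteq M^{+}_{i}\parens{Procs - \braces{p}}$ for every step $i$ in the round; the round-level claim then follows by taking the union over $i \in \braces{\overline{t}\bracks{0},\ldots,\overline{t}\bracks{L-1}}$.

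For the single-step inclusion, the key tool is that both $R_{i}$ and $R_{i+1}$ are partitioned across the processors. First I would take an arbitrary node $\mu \in M^{-}_{i}\parens{p}$ and unfold Definition~\ref{def:node-definitions}: this gives $\mu \in R_{i+1}\parens{p}$ and $\mu \in R_{i} - R_{i}\parens{p}$. Because $R_{i} = \bigcup_{q \in Procs} R_{i}\parens{q}$ is a partition and $\mu \in R_{i}$ with $\mu \notin R_{i}\parens{p}$, there is a (unique) processor $q \neq p$ with $\mu \in R_{i}\parens{q}$. The remaining task is to certify that this same $q$ witnesses $\mu \in M^{+}_{i}\parens{q}$, i.e. that $\mu \in R_{i}\parens{q} \cap \parens{R_{i+1} - R_{i+1}\parens{q}}$.

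The first conjunct holds by the choice of $q$. For the second, I would argue that $\mu \in R_{i+1}\parens{p} \subseteq R_{i+1}$, so $\mu \in R_{i+1}$; and since $R_{i+1}$ is likewise partitioned and $\mu$ already lies in the block $R_{i+1}\parens{p}$ of a distinct processor $p \neq q$, disjointness forces $\mu \notin R_{i+1}\parens{q}$. Hence $\mu \in R_{i+1} - R_{i+1}\parens{q}$, so $\mu \in M^{+}_{i}\parens{q} \subseteq M^{+}_{i}\parens{Procs - \braces{p}}$, which completes the single-step inclusion. Taking unions over the steps of $\overline{t}$ then yields the lemma.

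I expect this argument to be essentially bookkeeping rather than to contain a genuine obstacle: the only point that needs care is invoking disjointness of the partition at step $i+1$ to rule out $\mu \in R_{i+1}\parens{q}$, which is exactly what lets the processor holding $\mu$ at step $i$ serve as the source of the migration. It is worth noting that the round conditions of Definition~\ref{def:round} --- at most one execution per processor and at most one migration per node --- are not needed for this particular inclusion, since the witness $q$ is produced at the very step at which $\mu$ arrives at $p$; those conditions will instead become relevant when reasoning about the cardinalities of migrated sets elsewhere.
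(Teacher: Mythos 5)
Your proposal is correct and follows essentially the same route as the paper's proof: both reduce the round-level statement to the single-step inclusion $M^{-}_{i}\parens{p} \subseteq M^{+}_{i}\parens{Procs - \braces{p}}$ and then recover the lemma by taking unions over the steps of the round via Definition~\ref{def:node-definitions-round}. The only difference is stylistic --- the paper establishes the per-step claim through a chain of set-algebraic identities, whereas you argue element-wise by exhibiting the witnessing processor $q$ and invoking disjointness of the partitions $R_{i}$ and $R_{i+1}$; the substance (including your correct observation that the round conditions of Definition~\ref{def:round} are not needed here) is identical.
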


As one may note, the definition of round (Definition~\ref{def:round}) implies that for any processor $p$ and round $\overline{t}$, $\abs{C_{\overline{t}}\parens{p}} \leq 1$ and $M^{+}_{\overline{t}}\parens{p} \cap M^{+}_{\overline{t}}\parens{Procs - \braces{p}} = \emptyset$ (\textit{i.e.}~no two processors migrate the same node during the same round).
By Lemma~\ref{lemma:migrated-to-p-subset-migrated-from-all-but-p}, it then follows $M^{+}_{\overline{t}}\parens{p} \cap M^{-}_{\overline{t}}\parens{p} = \emptyset$.

% \todo[inline]{are the next results intermediate results for lemma 2.6? if so we have to justify their presence}

%%%%%%%%%%%%%%%%%%%%%%%%%%%%%%%%%%%%%%%%%%%%%%%%%%%%%%%%%%%%%%%%%%%%%%%%%%%%%%%%%%%%%%%%%%%%%%%%%%%%%%
%                                                                                                    %
%                                                                                                    %
%                                     Round Progression Lemma                                        %
%                                                                                                    %
%                                                                                                    %
%%%%%%%%%%%%%%%%%%%%%%%%%%%%%%%%%%%%%%%%%%%%%%%%%%%%%%%%%%%%%%%%%%%%%%%%%%%%%%%%%%%%%%%%%%%%%%%%%%%%%%

The next lemma is essential for the rest of our analysis, as it shows the connection between the set of nodes that are attached to each processor $p$ at some round $\overline{t}$, and the set of nodes that are attached to $p$ at round $\overline{t+1}$.
%
% Although this result may be obvious for steps, it is not so obvious when one considers rounds.
%
The proof of this result can be found in the Appendix (Section~\ref{sub:1-proof-lemma-2}).

\begin{lemma}[Round Progression Lemma]
\label{lemma:r-next}
For any round $\overline{t}$ and processor $p \in Procs$,
\begin{align*}
  R_{\overline{t+1}}\left(p\right) &= \left(E_{\overline{t}}\left(p\right) \cup R_{\overline{t}}\left(p\right) \cup M^{-}_{\overline{t}}\left(p\right)\right) - \left(C_{\overline{t}}\left(p\right) \cup M^{+}_{\overline{t}}\left(p\right)\right).
\end{align*}
\end{lemma}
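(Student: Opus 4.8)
The plan is to prove a step-level version of the identity by pure set algebra, and then lift it to rounds by induction on the $L$ steps of a round, where the two conditions defining a round (Definition~\ref{def:round}) are exactly what makes the lifting valid. First I would fix a single step $i$ and establish, directly from Definition~\ref{def:node-definitions}, the two ``incoming'' and ``outgoing'' identities
\[
  E_i\parens{p} \cup M^{-}_i\parens{p} = R_{i+1}\parens{p} \setminus R_i\parens{p}, \qquad C_i\parens{p} \cup M^{+}_i\parens{p} = R_i\parens{p} \setminus R_{i+1}\parens{p}.
\]
Both are short Boolean computations: for the first, $E_i\parens{p} = R_{i+1}\parens{p}\setminus R_i$ and $M^{-}_i\parens{p} = R_{i+1}\parens{p}\cap\parens{R_i\setminus R_i\parens{p}}$ combine, using $R_i\parens{p}\subseteq R_i$, into $R_{i+1}\parens{p}\setminus R_i\parens{p}$; the second is symmetric, using $R_{i+1}\parens{p}\subseteq R_{i+1}$. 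Since the set identity $R_{i+1}\parens{p} = \parens[\big]{R_i\parens{p}\cup\parens{R_{i+1}\parens{p}\setminus R_i\parens{p}}}\setminus\parens{R_i\parens{p}\setminus R_{i+1}\parens{p}}$ holds for arbitrary sets, substituting the two identities yields the step-level statement
\[
  R_{i+1}\parens{p} = \parens[\big]{E_i\parens{p}\cup R_i\parens{p}\cup M^{-}_i\parens{p}}\setminus\parens[\big]{C_i\parens{p}\cup M^{+}_i\parens{p}},
\]
which needs no assumption about rounds.

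The crucial ingredient for lifting is the following \emph{no-re-entry} property: within one round, if a node $\mu$ is attached to $p$ at some step but not at the immediately following step, then $\mu$ is attached to $p$ at no later step of the round. A node leaves $p$ only by being executed (landing in some $C$) or migrated out (landing in some $M^{+}$). If $\mu$ was executed it can never again be \textbf{ready}, since by Definition~\ref{def:node-states} a node's state only advances \textbf{not ready} $\to$ \textbf{ready} $\to$ \textbf{executed}, so it cannot reappear in any $R$. If $\mu$ was migrated out, then re-entering $p$'s partition would require either being enabled --- impossible, as $\mu$ is already \textbf{ready} and a \textbf{ready} node cannot revert to \textbf{not ready} --- or being migrated in, which would be a second migration of $\mu$ during the round, contradicting condition~(2) of Definition~\ref{def:round}. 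This is precisely where the definition of a round is used.

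With these two facts in hand, I would prove by induction on $k\in\braces{0,\ldots,L}$ the running invariant
\[
  R_{\overline{t}\bracks{0}+k}\parens{p} = \parens[\Big]{R_{\overline{t}}\parens{p}\cup\bigcup_{j<k}\parens[\big]{E_{\overline{t}\bracks{j}}\parens{p}\cup M^{-}_{\overline{t}\bracks{j}}\parens{p}}}\setminus\bigcup_{j<k}\parens[\big]{C_{\overline{t}\bracks{j}}\parens{p}\cup M^{+}_{\overline{t}\bracks{j}}\parens{p}},
\]
the base case $k=0$ being immediate from $R_{\overline{t}}\parens{p}=R_{\overline{t}\bracks{0}}\parens{p}$. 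For the step from $k$ to $k+1$ I would apply the step-level identity at step $\overline{t}\bracks{k}$ and then invoke no-re-entry to show that the nodes entering at that step, $E_{\overline{t}\bracks{k}}\parens{p}\cup M^{-}_{\overline{t}\bracks{k}}\parens{p}$, are disjoint from the nodes that have already left earlier in the round, $\bigcup_{j<k}\parens{C_{\overline{t}\bracks{j}}\parens{p}\cup M^{+}_{\overline{t}\bracks{j}}\parens{p}}$; this disjointness is exactly what lets the latest ``remove-then-add'' operation commute past the accumulated ``add-then-remove'', producing the invariant at $k+1$. Instantiating the invariant at $k=L$, identifying the two unions with the round-level sets of Definition~\ref{def:node-definitions-round}, and using $R_{\overline{t}\bracks{0}+L}\parens{p}=R_{\overline{t+1}}\parens{p}$ gives the claimed equality.

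The main obstacle is this lifting rather than the set algebra. Naively replacing the per-step incoming and outgoing sets by their unions over the round is valid only because no node oscillates in and out of $p$'s partition during a round; without the migrate-at-most-once condition a node could leave $p$ and return, so it would sit in both an outgoing and a later incoming set and be spuriously deleted by the round-level difference, breaking the identity. Thus condition~(2) of Definition~\ref{def:round}, together with the monotonicity of node states, does the essential work, and making this dependence explicit is the delicate part of the argument.
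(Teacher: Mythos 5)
Your proposal is correct, and it reaches the identity by a genuinely different route than the paper. The paper proves the two inclusions separately: for $\subseteq$ it establishes the step-level inclusion $R_{i+1}\left(p\right) \subseteq R_{i}\left(p\right) \cup E_{i}\left(p\right) \cup M^{-}_{i}\left(p\right)$, lifts it over the round by induction (Lemma~\ref{lemma:multiple-step-progression-lemma}), and separately proves that $R_{\overline{t+1}}\left(p\right)$ is disjoint from $C_{\overline{t}}\left(p\right)$ and from $M^{+}_{\overline{t}}\left(p\right)$; for $\supseteq$ it splits the right-hand side into three containments (for $M^{-}_{\overline{t}}\left(p\right)$, $R_{\overline{t}}\left(p\right)$ and $E_{\overline{t}}\left(p\right)$, each minus the outgoing sets), each established by its own step-indexed induction with substantial Boolean manipulation. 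You instead prove an exact per-step identity, packaged through the clean incoming/outgoing equations $E_{i}\left(p\right) \cup M^{-}_{i}\left(p\right) = R_{i+1}\left(p\right) - R_{i}\left(p\right)$ and $C_{i}\left(p\right) \cup M^{+}_{i}\left(p\right) = R_{i}\left(p\right) - R_{i+1}\left(p\right)$, and then run a single induction over the $L$ steps that maintains an exact running equality, with the round structure entering only through one localized no-re-entry fact guaranteeing that a step's incoming set is disjoint from the earlier steps' outgoing sets. Both arguments ultimately rest on the same two pillars --- monotonicity of node states and the migrate-at-most-once condition of Definition~\ref{def:round} --- but your decomposition is shorter, avoids handling the two inclusions separately, and isolates exactly where that hypothesis is needed; the paper's heavier decomposition has the side benefit that its intermediate pieces are reusable, \textit{e.g.}~Lemma~\ref{lemma:multiple-step-progression-lemma} reappears in the proof of Claim~\ref{claim:progression-connecting-lemma-4} supporting Lemma~\ref{lemma:connecting-lemma}. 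One phrasing caveat: your no-re-entry property should be stated so that it also forbids re-attachment at the boundary step $\overline{t+1}\left[0\right]$, which is what the last induction step ($k = L-1$) needs, since $\overline{t}\left[L-1\right]+1$ is the first step of the next round; your argument does prove this stronger form --- the re-entering migration would itself occur at step $\overline{t}\left[L-1\right]$, hence during round $\overline{t}$, yielding the same contradiction --- but the words ``no later step of the round'' do not literally cover it.
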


\subsection{Algorithm short-term stability}
\label{subsec:stt}

%%%%%%%%%%%%%%%%%%%%%%%%%%%%%%%%%%%%%%%%%%%%%%%%%%%%%%%%%%%%%%%%%%%%%%%%%%%%%%%%%%%%%%%%%%%%%%%%%%%%%%
%                                                                                                    %
%                                                                                                    %
%                             Requirements for scheduling algorithms                                 %
%                                                                                                    %
%                                                                                                    %
%%%%%%%%%%%%%%%%%%%%%%%%%%%%%%%%%%%%%%%%%%%%%%%%%%%%%%%%%%%%%%%%%%%%%%%%%%%%%%%%%%%%%%%%%%%%%%%%%%%%%%
We now move to present \emph{algorithm short-term stability}, the criterion that will be used to measure the performance of structured computation schedulers.
We begin by stating the following requirement, which gives us the guarantee that a processor $p$ only executes a node $\mu$ during a round $\overline{t}$ if $\mu$ is attached to $p$ at the beginning of that round.

\begin{requirement}
  \label{requirement:node-execution-constraints}
  For any round $\overline{t}$ and $p \in Procs$, we must have $R_{\overline{t}}\left(p\right) \supseteq C_{\overline{t}} \left(p\right)$.
\end{requirement}

%%%%%%%%%%%%%%%%%%%%%%%%%%%%%%%%%%%%%%%%%%%%%%%%%%%%%%%%%%%%%%%%%%%%%%%%%%%%%%%%%%%%%%%%%%%%%%%%%%%%%%
%                                                                                                    %
%                                                                                                    %
%                                     Busy and idle processors                                       %
%                                                                                                    %
%                                                                                                    %
%%%%%%%%%%%%%%%%%%%%%%%%%%%%%%%%%%%%%%%%%%%%%%%%%%%%%%%%%%%%%%%%%%%%%%%%%%%%%%%%%%%%%%%%%%%%%%%%%%%%%%

\begin{definition}[Busy and Idle Processors]
  \label{def:processor-state-round}
  Say that a processor $p \in Procs$ is \emph{idle} during a round $\overline{t}$ if $C_{\overline{t}}\parens{p} = \emptyset$, and, otherwise, say that $p$ is \emph{busy}.
  Moreover, denote the number of idle processors during a round $\overline{t}$ by $P^{idle}_{\overline{t}}$, and define $\alpha_{\overline{t}}$ as the ratio of idle processors, $\alpha_{\overline{t}} = P^{idle}_{\overline{t}}/P$.
\end{definition}

%%%%%%%%%%%%%%%%%%%%%%%%%%%%%%%%%%%%%%%%%%%%%%%%%%%%%%%%%%%%%%%%%%%%%%%%%%%%%%%%%%%%%%%%%%%%%%%%%%%%%%
%                                                                                                    %
%                                                                                                    %
%                                  Short-term execution stability                                    %
%                                                                                                    %
%                                                                                                    %
%%%%%%%%%%%%%%%%%%%%%%%%%%%%%%%%%%%%%%%%%%%%%%%%%%%%%%%%%%%%%%%%%%%%%%%%%%%%%%%%%%%%%%%%%%%%%%%%%%%%%%

Now, we introduce the notion of \emph{short-term stability}.
Intuitively, a set of processors $S$ is \emph{short-term stable} for some round $\overline{t}$ if the number of nodes attached to the processors in $S$ that are not executed is expected to \textbf{monotonically} decrease from round $\overline{t}$ to round $\overline{t+1}$.
% The main idea is that a set of processors $S$ is short-term stable for some round $\overline{t}$ if the number of nodes attached to the processors of $S$ that are not executed during $\overline{t+1}$ is expected to be at most the number of nodes attached to those same processors and that were not executed during round $\overline{t}$.

\begin{definition}[Short-term stability]
  \label{def:short-term-stability}
  A set of processors $S \in \mathcal{P} \parens{Procs}$ is \emph{short-term stable} for some round $\overline{t}$ during a computation's execution if $\mathrm{E}\bracks{\abs{R_{\overline{t+1}}\parens{S} - C_{\overline{t+1}}\parens{S}}} \leq \abs{R_{\overline{t}}\parens{S} - C_{\overline{t}}\parens{S}}$.
\end{definition}

% Due to the conventions related with the dag's structure (in particular, with the nodes' out-degree assumption), it also follows that $\abs{E_{\overline{t}}\parens{p}} \leq 2$.
% As one may note, during a busy round a processor may enable at most two nodes, implying that for any round $\overline{t}$ up to $2 \cdot P \cdot \left(1 - \alpha_{\overline{t}}\right)$ nodes can enabled.

Ideally, we would want to ensure \emph{short-term stability} for all rounds and wrt all processors (\textit{i.e.}~$S = Procs$).
However, since a processor can enable two nodes during one round, a scheduler may only be able to guarantee \emph{short-term stability} wrt all processors if at least half of them are idle during a round.
% For this reason, our analysis will focus on the rounds for which it is possible to guarantee short-term stability (\textit{i.e.}~any round $\overline{t}$ for which $\alpha_{\overline{t}} \geq \frac{1}{2}$).
For this reason, we will now move to introduce \emph{Algorithm short-term stability}, which is based on the same rationale as \emph{short-term stability}.

%%%%%%%%%%%%%%%%%%%%%%%%%%%%%%%%%%%%%%%%%%%%%%%%%%%%%%%%%%%%%%%%%%%%%%%%%%%%%%%%%%%%%%%%%%%%%%%%%%%%%%
%                                                                                                    %
%                                                                                                    %
%                                          S_{t} and U_{t}                                           %
%                                                                                                    %
%                                                                                                    %
%%%%%%%%%%%%%%%%%%%%%%%%%%%%%%%%%%%%%%%%%%%%%%%%%%%%%%%%%%%%%%%%%%%%%%%%%%%%%%%%%%%%%%%%%%%%%%%%%%%%%%

For each round $\overline{t}$, we classify processors according to whether they execute all their attached nodes during $\overline{t}$ or not.
If a processor $p$ executes all its attached nodes during round $\overline{t}$ (\textit{i.e.}~if $R_{\overline{t}}\left(p\right) = C_{\overline{t}}\left(p\right)$), then we say that $p$ is \emph{self-stable} at round $\overline{t}$.
Otherwise, we say that $p$ is \emph{non-self-stable} at round $\overline{t}$.
% For each round $\overline{t}$, partition processors according to this classification.

\begin{definition}
  \label{def:stable-unstable-processors}
  Define the set of \emph{self-stable} and \emph{non-self-stable} processors at some round $\overline{t}$ as $S_{\overline{t}} = \left\{p \in Procs \, | \,R_{\overline{t}}\left(p\right) = C_{\overline{t}}\left(p\right)\right\}$ and $U_{\overline{t}} = Procs - S_{\overline{t}}$, respectively.
\end{definition}

%%%%%%%%%%%%%%%%%%%%%%%%%%%%%%%%%%%%%%%%%%%%%%%%%%%%%%%%%%%%%%%%%%%%%%%%%%%%%%%%%%%%%%%%%%%%%%%%%%%%%%
%                                                                                                    %
%                                                                                                    %
%                                   Algorithm short-term stability                                   %
%                                                                                                    %
%                                                                                                    %
%%%%%%%%%%%%%%%%%%%%%%%%%%%%%%%%%%%%%%%%%%%%%%%%%%%%%%%%%%%%%%%%%%%%%%%%%%%%%%%%%%%%%%%%%%%%%%%%%%%%%%

Having this, we can finally define \emph{Algorithm short-term stability}, our criterion for measuring computation schedulers' performance.
Informally, the main idea is that if the ratio of idle processors at some round $\overline{t}$ is sufficiently high, then the amount of work attached to non-self-stable processors is expected to decrease, and, at the same time, the amount work attached to self-stable processors does not grow unboundedly.

% set up to introduce the main concept of this paper: \emph{algorithm short-term stability}.

\begin{definition}[Algorithm short-term stability]
  \label{def:short-term-stable-algorithm}
  A scheduling algorithm is \emph{algorithm short-term stable} with respect to an interval $I \subseteq \left]0;1\right[$, iff (if and only if) for any round $\overline{t}$,
  \begin{eqnarray*}
    \left(\alpha_{\overline{t}} \in I\right) & \Rightarrow & \bracks[\bigg]{\parens[\Big]{\mathrm{E}\bracks{\abs{R_{\overline{t+1}}\parens{U_{\overline{t}}} - C_{\overline{t+1}}\parens{U_{\overline{t}}}}} < \abs{R_{\overline{t}}\parens{U_{\overline{t}}} - C_{\overline{t}}\parens{U_{\overline{t}}}}} \\
    & & \quad \wedge \ \parens[\Big]{\forall p \in S_{\overline{t}}, \left|R_{\overline{t+1}}\left(p\right)\right| \leq L + 1}},
  \end{eqnarray*}
  where $L$ denotes the length of the rounds.
\end{definition}

Note that, contrarily to \emph{short-term stability}, \emph{algorithm short-term stability} requires that the expected number of nodes attached to processors of $U_{\overline{t}}$, that are not executed, \textbf{strictly} decreases from a round to the next.
In addition, by limiting the number of nodes that can become attached to a self-stable processor during a round, we disallow scheduling algorithms to keep ping-ponging work between non-self-stable and self-stable processors throughout the execution.
The insight for bounding the number of nodes to the length of rounds is that we are enforcing processors to have to accept each node they are given.

Intuitively, if an algorithm is \emph{algorithm short-term stable} wrt some non-empty interval $I$, then the algorithm's load balancer is sufficiently effective to guarantee that, regardless of the computation it is scheduling, for any round $\overline{t}$ such that $\alpha_{\overline{t}} \in I$, its performance is expected to be good (or, in other words, work accumulation is not expected).
On the other hand, if an algorithm cannot guarantee \emph{algorithm short-term stability} wrt any non-empty interval $I$ for any round during the execution of an arbitrary computation, then the effectiveness of its load balancer is limited, and thus may lead to work accumulation depending on the structure of the computation being executed.
%

% What does it mean if an algorithm is algorithm short-term stable?
% And what if it is not?
% What if it is not even short-term stable

% Throughout the paper, whenever we state that some processor $p$ is short-term stable, we mean that the set $\left\{p\right\}$ is short-term stable.
% By saying a processor $p$ is short term stable, we mean that the set $\left\{p\right\}$ is short-term stable. 
% By saying that a processor becomes more stable, we mean that the number of nodes it has attached decreases.
% This intuition will become more evident with Lemma~\ref{lemma:connecting-lemma} --- the Connecting Lemma.

%%%%%%%%%%%%%%%%%%%%%%%%%%%%%%%%%%%%%%%%%%%%%%%%%%%%%%%%%%%%%%%%%%%%%%%%%%%%%%%%%%%%%%%%%%%%%%%%%%%%%% 
%                                                                                                    %
%                                                                                                    %
%                                       Stability generalize                                         %
%                                                                                                    %
%                                                                                                    %
%%%%%%%%%%%%%%%%%%%%%%%%%%%%%%%%%%%%%%%%%%%%%%%%%%%%%%%%%%%%%%%%%%%%%%%%%%%%%%%%%%%%%%%%%%%%%%%%%%%%%%

As one may note, the definition of \emph{Algorithm short-term stability} relies on the overall behavior of the set of processors $U_{\overline{t}}$, for each round $\overline{t}$.
However, it is much simpler to reason about the behavior of each processor $p \in U_{\overline{t}}$, than it is to reason about the behavior of all processors of $U_{\overline{t}}$.
The next result is then crucial for the rest of our analysis, as it shows the relation between the behavior of all the processors of a set of processors $S$ and the behavior of each individual processor $p$ of $S$.
% Its proof can be found in the Appendix (Section~\ref{sub:1-proof-lemma-3}).

\begin{lemma}
  \label{lemma:stability-generalize}
  For any round $\overline{t}$ and $S \in \mathcal{P} \left(Procs\right)$, if \[\forall p \in S ,\, \mathrm{E}\left[\left|R_{\overline{t+1}}\left(p\right) - C_{\overline{t+1}}\left(p\right)\right|\right] < \left|R_{\overline{t}}\left(p\right) - C_{\overline{t}}\left(p\right)\right|\]
  then \[\mathrm{E}\left[\left|R_{\overline{t+1}}\left( S\right) - C_{\overline{t+1}}\left( S\right)\right|\right] < \left|R_{\overline{t}}\left( S\right) - C_{\overline{t}}\left( S\right)\right|.\]
\end{lemma}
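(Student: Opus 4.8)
The plan is to reduce each set-level quantity aggregated over $S$ to a sum of the corresponding per-processor quantities, apply the hypothesis termwise, and sum. The structural fact that makes this work is that the attached-node sets $R_{\overline{s}}(q)$, as $q$ ranges over $Procs$, form a partition of $R_{\overline{s}}$ (by the way $R_{\overline{s}}(q)$ is defined from the partition of $R_{\overline{s}\bracks{0}}$) and are therefore pairwise disjoint, both at round $\overline{s} = \overline{t}$ and at $\overline{s} = \overline{t+1}$.

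First I would prove, for an arbitrary round $\overline{s}$, the deterministic (pathwise) set identity
\[
R_{\overline{s}}(S) - C_{\overline{s}}(S) \;=\; \bigsqcup_{q \in S}\bigl(R_{\overline{s}}(q) - C_{\overline{s}}(q)\bigr),
\]
a disjoint union. This is where Requirement~\ref{requirement:node-execution-constraints} is indispensable: since $C_{\overline{s}}(q) \subseteq R_{\overline{s}}(q)$ and the $R_{\overline{s}}(q)$ are pairwise disjoint, a node $\mu \in R_{\overline{s}}(p)$ lies in no $C_{\overline{s}}(q)$ with $q \neq p$; hence subtracting $C_{\overline{s}}(S) = \bigcup_{q \in S} C_{\overline{s}}(q)$ from $R_{\overline{s}}(p)$ has exactly the same effect as subtracting $C_{\overline{s}}(p)$ alone. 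Taking cardinalities of the disjoint union then gives
\[
\abs{R_{\overline{s}}(S) - C_{\overline{s}}(S)} = \sum_{q \in S}\abs{R_{\overline{s}}(q) - C_{\overline{s}}(q)}.
\]

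Instantiating this at $\overline{s} = \overline{t}$ and $\overline{s} = \overline{t+1}$, and using that the $\overline{t+1}$ identity holds for every outcome of the randomized schedule, I would apply linearity of expectation to obtain
\[
\mathrm{E}\bigl[\abs{R_{\overline{t+1}}(S) - C_{\overline{t+1}}(S)}\bigr] = \sum_{q \in S} \mathrm{E}\bigl[\abs{R_{\overline{t+1}}(q) - C_{\overline{t+1}}(q)}\bigr].
\]
The hypothesis supplies, for each $q \in S$, the strict inequality $\mathrm{E}\bigl[\abs{R_{\overline{t+1}}(q) - C_{\overline{t+1}}(q)}\bigr] < \abs{R_{\overline{t}}(q) - C_{\overline{t}}(q)}$; summing these over $S$ and applying the $\overline{t}$ identity on the right closes the argument.

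The only genuine obstacle is the set-identity step, and within it the exclusion of ``cross-cancellation''---a node attached to $p$ being removed merely because it is computed by some other processor $q$. I would be careful to present this explicitly, as it is the hinge on which the whole reduction rests and is exactly what Requirement~\ref{requirement:node-execution-constraints} buys us. A minor bookkeeping point is that summing strict inequalities requires $S$ to be nonempty; this is harmless, since in every intended application $S$ is a set of non-self-stable processors and hence nonempty.
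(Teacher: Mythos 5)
Your proof is correct and takes essentially the same route as the paper's: both reduce the set-level quantities to per-processor sums using the partition of $R_{\overline{t}}$ across processors together with Requirement~\ref{requirement:node-execution-constraints} (which rules out the cross-cancellation you highlight), then invoke linearity of expectation and sum the hypothesis termwise. Your added caveat that $S$ must be nonempty for the summed strict inequality to hold is a legitimate corner case the paper's statement glosses over, but it does not change the argument in any intended application.
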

\begin{proof}
  First, recall that $R_{\overline{t}}$ is partitioned through the processors.
  By Requirement~\ref{requirement:node-execution-constraints}, $C_{\overline{t}}$ is also partitioned through the processors and $\forall p \in Procs$, $R_{\overline{t}}\left(p\right) \supseteq C_{\overline{t}} \left(p\right)$.
  Thus, \[\sum_{p \in S} \left|R_{\overline{t}}\left(p\right) - C_{\overline{t}}\left(p\right)\right| = \left|R_{\overline{t}}\left( S\right) - C_{\overline{t}}\left( S\right)\right|\]
  To conclude the proof, note that by the linearity of expectation, \[\sum_{p \in S}E\left[\left|R_{\overline{t+1}}\left(p\right) - C_{\overline{t+1}}\left(p\right)\right|\right] = E\left[\left|R_{\overline{t+1}}\left(S\right) - C_{\overline{t+1}}\left( S\right)\right|\right]\]
\end{proof}

%%%%%%%%%%%%%%%%%%%%%%%%%%%%%%%%%%%%%%%%%%%%%%%%%%%%%%%%%%%%%%%%%%%%%%%%%%%%%%%%%%%%%%%%%%%%%%%%%%%%%% 
%                                                                                                    %
%                                                                                                    %
%                                       The Connecting Lemma                                         %
%                                                                                                    %
%                                                                                                    %
%%%%%%%%%%%%%%%%%%%%%%%%%%%%%%%%%%%%%%%%%%%%%%%%%%%%%%%%%%%%%%%%%%%%%%%%%%%%%%%%%%%%%%%%%%%%%%%%%%%%%%

% Since, by Lemma~\ref{lemma:stability-generalize}, it suffices to study the behavior of each individual processor, we now state two results that are key for our analysis.

The following result is base for the analysis of schedulers, as it relates, for each round $\overline{t}$, the difference in the number of nodes that a processor $p$ enables during $\overline{t}$, that are migrated from $p$ during $\overline{t}$, and that $p$ executes during $\overline{t+1}$, with a result that is closely related with \textit{algorithm short-term stability} (the corresponding proof can be found in the Appendix, Section~\ref{sub:1-proof-lemma-4}).
% for using our method to analyze the performance of the randomized load balancing mechanisms used by structured computations' schedulers (and its proof can be found in the Appendix, Section~\ref{sub:1-proof-lemma-4}).
% In addition, as we will see, Corollary~\ref{corollary:connecting-corollary} is key for the analysis of both \ws\ and \wsslong.
% that, if the number of nodes enabled by a processor $p$ during a round $\overline{t}$ is smaller than the number of nodes migrated from $p$ to other processors during $\overline{t}$ plus the number of nodes executed by $p$ during round $\overline{t+1}$, then the processor is short-term stable.

\begin{lemma}
\label{lemma:connecting-lemma}
For any round $\overline{t}$ and processor $p \in Procs$, \[\left|E_{\overline{t}}\left(p\right)\right| < \left|C_{\overline{t+1}}\left(p\right)\right| + \abs[\big]{M^{+}_{\overline{t}}\left(p\right)}\] iff \[\left|R_{\overline{t+1}}\left(p\right) - C_{\overline{t+1}}\left(p\right)\right| < \left|R_{\overline{t}}\left(p\right) - C_{\overline{t}}\left(p\right)\right| + \abs[\big]{M^{-}_{\overline{t}}\left(p\right)}.\]
\end{lemma}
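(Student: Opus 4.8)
The plan is to reduce the stated equivalence to a single cardinality identity for $R_{\overline{t+1}}\parens{p}$, after which the ``iff'' becomes a matter of algebraic rearrangement. First I would invoke Requirement~\ref{requirement:node-execution-constraints}, which gives $C_{\overline{t}}\parens{p} \subseteq R_{\overline{t}}\parens{p}$ and $C_{\overline{t+1}}\parens{p} \subseteq R_{\overline{t+1}}\parens{p}$; this lets me replace the set differences on the right-hand inequality by numerical differences, namely $\abs{R_{\overline{t}}\parens{p} - C_{\overline{t}}\parens{p}} = \abs{R_{\overline{t}}\parens{p}} - \abs{C_{\overline{t}}\parens{p}}$ and likewise at round $\overline{t+1}$.

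The core step is to establish the cardinality identity
\[
\abs{R_{\overline{t+1}}\parens{p}} = \abs{R_{\overline{t}}\parens{p}} + \abs{E_{\overline{t}}\parens{p}} + \abs{M^{-}_{\overline{t}}\parens{p}} - \abs{C_{\overline{t}}\parens{p}} - \abs{M^{+}_{\overline{t}}\parens{p}},
\]
starting from the Round Progression Lemma (Lemma~\ref{lemma:r-next}). To do so I would verify, using the round constraints (each node migrates at most once, each processor executes at most one node, and node states only progress not-ready $\to$ ready $\to$ executed), that (i) the three sets $E_{\overline{t}}\parens{p}$, $R_{\overline{t}}\parens{p}$, $M^{-}_{\overline{t}}\parens{p}$ are pairwise disjoint; (ii) $C_{\overline{t}}\parens{p}$ and $M^{+}_{\overline{t}}\parens{p}$ are disjoint; and (iii) the subtracted set $C_{\overline{t}}\parens{p} \cup M^{+}_{\overline{t}}\parens{p}$ is contained in $E_{\overline{t}}\parens{p} \cup R_{\overline{t}}\parens{p} \cup M^{-}_{\overline{t}}\parens{p}$. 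With (i)--(iii) in hand, the rule $\abs{A - B} = \abs{A} - \abs{B}$ for $B \subseteq A$ applied to the right-hand side of Lemma~\ref{lemma:r-next} yields the identity.

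Finally I would substitute this identity, together with the simplifications from Requirement~\ref{requirement:node-execution-constraints}, into the right-hand inequality of the lemma. After cancelling the common terms $\abs{R_{\overline{t}}\parens{p}}$, $\abs{C_{\overline{t}}\parens{p}}$ and $\abs{M^{-}_{\overline{t}}\parens{p}}$ from both sides, the inequality $\abs{R_{\overline{t+1}}\parens{p} - C_{\overline{t+1}}\parens{p}} < \abs{R_{\overline{t}}\parens{p} - C_{\overline{t}}\parens{p}} + \abs{M^{-}_{\overline{t}}\parens{p}}$ collapses to $\abs{E_{\overline{t}}\parens{p}} - \abs{M^{+}_{\overline{t}}\parens{p}} < \abs{C_{\overline{t+1}}\parens{p}}$, which is exactly the left-hand inequality. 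Since every rewriting step is an equivalence, the two inequalities hold simultaneously, proving both directions of the ``iff'' at once.

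I expect the main obstacle to be part (iii) of the core step: a node enabled by $p$ during $\overline{t}$ and then migrated away in the same round lies in $M^{+}_{\overline{t}}\parens{p}$ but need \emph{not} lie in $R_{\overline{t}}\parens{p}$, so the subtracted set is not simply a subset of the attached nodes. The containment must instead route through $E_{\overline{t}}\parens{p}$, and ruling out the ``migrated-in then migrated-out'' case is precisely where the at-most-one-migration-per-round property of Definition~\ref{def:round} is needed.
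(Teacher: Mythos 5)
Your proposal is correct and follows the paper's own proof essentially step for step: the paper establishes precisely your facts (i)--(iii) as Claims~\ref{claim:progression-connecting-lemma-1}--\ref{claim:progression-connecting-lemma-4}, combines them with the Round Progression Lemma (Lemma~\ref{lemma:r-next}) to obtain the same cardinality identity $\left|R_{\overline{t+1}}\left(p\right)\right| = \left|E_{\overline{t}}\left(p\right)\right| + \left|R_{\overline{t}}\left(p\right)\right| + \left|M^{-}_{\overline{t}}\left(p\right)\right| - \left|C_{\overline{t}}\left(p\right)\right| - \left|M^{+}_{\overline{t}}\left(p\right)\right|$, and concludes with the same cancellation argument via Requirement~\ref{requirement:node-execution-constraints}. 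One small correction to your anticipated obstacle: the containment (iii) does not itself need to rule out the ``migrated-in then migrated-out'' case, since such a node lies in $M^{-}_{\overline{t}}\left(p\right)$ and hence already in the union; the at-most-one-migration property of Definition~\ref{def:round} is instead what secures the disjointness facts, in particular $\left(R_{\overline{t}}\left(p\right) \cup E_{\overline{t}}\left(p\right)\right) \cap M^{-}_{\overline{t}}\left(p\right) = \emptyset$ (Claim~\ref{claim:progression-connecting-lemma-2}).
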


The following Corollary then follows from Lemma~\ref{lemma:connecting-lemma}.

\begin{corollary}
\label{corollary:connecting-corollary}
For any round $\overline{t}$, if $\left(p \in U_{\overline{t}}\right) \Rightarrow \left(M^{-}_{\overline{t}}\left(p\right) = \emptyset\right)$, then
\[\left|E_{\overline{t}}\left(p\right)\right| < \left|C_{\overline{t+1}}\left(p\right)\right| + \abs[\big]{M^{+}_{\overline{t}}\left(p\right)}\] iff \[\left|R_{\overline{t+1}}\left(p\right) - C_{\overline{t+1}}\left(p\right)\right| < \left|R_{\overline{t}}\left(p\right) - C_{\overline{t}}\left(p\right)\right|.\]
\end{corollary}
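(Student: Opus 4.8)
The plan is to recognize the corollary as a one-step specialization of Lemma~\ref{lemma:connecting-lemma}: the two biconditionals share the \emph{same} left-hand inequality, namely $\abs{E_{\overline{t}}(p)} < \abs{C_{\overline{t+1}}(p)} + \abs{M^{+}_{\overline{t}}(p)}$, and their right-hand inequalities differ only by the nonnegative summand $\abs{M^{-}_{\overline{t}}(p)}$, which the lemma carries and the corollary omits. So the entire task reduces to showing that, under the hypothesis, this summand is zero for the processors the corollary concerns; once $\abs{M^{-}_{\overline{t}}(p)} = 0$ the two statements are literally identical and the biconditional transfers unchanged.

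First I would apply the hypothesis. The corollary is aimed at the non-self-stable processors $p \in U_{\overline{t}}$ (the processors that drive \emph{algorithm short-term stability}), and for any such $p$ the implication $(p \in U_{\overline{t}}) \Rightarrow (M^{-}_{\overline{t}}(p) = \emptyset)$ fires, yielding $M^{-}_{\overline{t}}(p) = \emptyset$ and hence $\abs{M^{-}_{\overline{t}}(p)} = 0$. Substituting this into the right-hand inequality of Lemma~\ref{lemma:connecting-lemma} collapses it to exactly $\abs{R_{\overline{t+1}}(p) - C_{\overline{t+1}}(p)} < \abs{R_{\overline{t}}(p) - C_{\overline{t}}(p)}$, which is the corollary's right-hand inequality; since the left-hand inequalities already agree verbatim, the biconditional of the corollary is inherited directly from that of the lemma.

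The genuine content lives entirely in Lemma~\ref{lemma:connecting-lemma}, so there is no real calculation to grind through here; the main (and only mild) obstacle is to get the quantifier scope right. The implication in the hypothesis forces $M^{-}_{\overline{t}}(p) = \emptyset$ \emph{only} when $p \in U_{\overline{t}}$, so the reduction above is valid precisely for non-self-stable $p$. For a self-stable $p \in S_{\overline{t}}$ one has $R_{\overline{t}}(p) = C_{\overline{t}}(p)$, which degenerates the corollary's right-hand inequality into $\abs{R_{\overline{t+1}}(p) - C_{\overline{t+1}}(p)} < 0$; salvaging the biconditional there would instead require separately bounding the migrations into $p$ (e.g.\ via the Round Progression Lemma), so I would keep the statement confined to $p \in U_{\overline{t}}$, which is exactly the regime in which it is subsequently invoked.
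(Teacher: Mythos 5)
Your proposal is correct and takes essentially the same route as the paper, whose entire proof consists of noting that the corollary "follows from Lemma~\ref{lemma:connecting-lemma}": under the hypothesis the term $\abs{M^{-}_{\overline{t}}\left(p\right)}$ vanishes, so the lemma's biconditional collapses to the corollary's, exactly as you argue. Your closing observation about quantifier scope --- that the reduction is only valid for $p \in U_{\overline{t}}$, which is precisely the regime in which the corollary is later invoked in the proofs of Theorems~\ref{thr:ws-not-stable} and~\ref{thr:wss-algorithm-short-term-stability} --- is a fair clarification of an imprecision in the statement, not a gap in your argument.
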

\section{A method to analyze randomized schedulers}
\label{sec:strategy-randomized-analysis}

% This is not clear at all.
% Why do we make this ordering?
%  Because it allows to separate each load balancing period, allowing us to study their effectiveness accurately.
% This ordering is crucial as it allows to study load balancing phases separately.

In order to analyze the performance of randomized scheduling algorithms, we introduce a few additional definitions and make some assumptions that are necessary to permit ordering the actions that processors take during the execution of computations, and, in particular, during each round.
The reason for the need to order the actions of processors will become apparent as we use it to analyze the \ws\ algorithm.
To aid the reader, as we present the extra definitions and assumptions that our methodology requires, we use a \ws\ algorithm (depicted in Algorithm~\ref{algo:ws}) to instantiate them and explain their meaning.
% \todo[inline]{not enough, why do we need the synchronous version?}

The \ws\ algorithm we analyze is a synchronous but behaviorally equivalent variant of the original non-blocking algorithm given in~\cite{DBLP:journals/mst/AroraBP01}.
Thus, each processor owns a lock-free deque object that supports three methods: \algid{pushBottom}, \algid{popBottom} and \algid{popTop}.
Only the owner of a deque may invoke the \algid{pushBottom} and \algid{popBottom} methods, which, respectively, add a node to the bottom of the deque, and remove and return the bottommost node of the deque, if any.
The \algid{popTop} method is invoked by processors searching for work, and for each invocation to this method, the deque's current topmost node is guaranteed to be removed and returned, either by such invocation or by some concurrent one\footnote{For a more careful description of the lock-free deque semantics, originally defined in~\cite{DBLP:journals/mst/AroraBP01}, please refer to Section~\ref{sec:deque-semantics}.}.
In addition to the deque, each processor has a variable \algid{assigned} that stores the node that it will execute next, if any.

\subsection{The methodology}
\label{subsec:methodology}

%%%%%%%%%%%%%%%%%%%%%%%%%%%%%%%%%%%%%%%%%%%%%%%%%%%%%%%%%%%%%%%%%%%%%%%%%%%%%%%%%%%%%%%%%%%%%%%%%%%%%%
%                                                                                                    %
%                                                                                                    %
%                                   Scheduling loop and iterations                                   %
%                                                                                                    %
%                                                                                                    %
%%%%%%%%%%%%%%%%%%%%%%%%%%%%%%%%%%%%%%%%%%%%%%%%%%%%%%%%%%%%%%%%%%%%%%%%%%%%%%%%%%%%%%%%%%%%%%%%%%%%%%

First of all, we require that the scheduling algorithm to be analyzed must be defined by a cycle such that:
\begin{enumerate}
\item at most one of the instructions composing any particular iteration of this cycle may correspond to a node's execution;
\item no node that is migrated to a processor $p$, who is executing an iteration of this cycle, can be migrated again (to another processor), before $p$ finishes the current iteration;
\item the length of any sequence of instructions that corresponds to some execution of this cycle is at most constant; and 
\item the full sequence of instructions executed by any processor can be partitioned into smaller sub-sequences, each corresponding to a particular execution of this cycle.
\end{enumerate}
Refer to this cycle as the \emph{scheduling loop}, and to any sequence of instructions that correspond to some iteration of a scheduling loop as \emph{scheduling iteration}.

As it can be observed in Algorithm~\ref{algo:ws}, the definition of the \ws\ algorithm naturally fits into scheduling loops (corresponding to lines 2 to 19):
\begin{enumerate*}
\item at most one of the instructions within the sequence of a scheduling iteration corresponds to the execution of a node (line 4);
\item no node that is migrated to a processor is migrated ever again, as it becomes the processor's new assigned node (line 23);
\item the length of any iteration of the scheduling loop is bounded by a constant; and
\item the full sequence of instructions executed by any processor can be partitioned into scheduling iterations.
\end{enumerate*}
% We do not formally prove that \ws\ can indeed be defined by a scheduling loop (\textit{i.e.}~that the loop starting at line 2 and ending at line 19 satisfies the requirements of a scheduling loop) because not only the formal proof is lengthy and falls out of the main scope of the paper, but also, and most importantly, it is easy to deduce this claim holds by observing the definition of Algorithm~\ref{algo:ws}.

%%%%%%%%%%%%%%%%%%%%%%%%%%%%%%%%%%%%%%%%%%%%%%%%%%%%%%%%%%%%%%%%%%%%%%%%%%%%%%%%%%%%%%%%%%%%%%%%%%%%%%
%                                                                                                    %
%                                                                                                    %
%                                           Synchronous WS                                           %
%                                                                                                    %
%                                                                                                    %
%%%%%%%%%%%%%%%%%%%%%%%%%%%%%%%%%%%%%%%%%%%%%%%%%%%%%%%%%%%%%%%%%%%%%%%%%%%%%%%%%%%%%%%%%%%%%%%%%%%%%%

  \begin{algorithm}[t]
    \caption{The synchronous \ws\ algorithm}
    \label{algo:ws}
    \begin{small}
      \begin{algorithmic}[1]
        \Procedure{Scheduler}{}
        \While{\textbf{not} finished ($computation$)}
        \If{ValidNode($assigned$)}
        \State $enabled \gets $execute($assigned$)\;
        \State $assigned \gets $ \textsc{none}\;
        \State synch($max\_phase_{I}\_length$, $\iexecd()$)\;
        \If{length($enabled$) $> 0$}
        \State $assigned \gets enabled\left[0\right]$\;
        \If{length($enabled$) $= 2$}
        \State $self.deque$.pushBottom($enabled\left[1\right]$)\;
        \EndIf
        \Else
        \State $assigned \gets self.deque$.popBottom()\;
        \EndIf
        \Else
        \State $self$.WorkMigration()\;
        \EndIf
        \State synch($max\_phase_{II}\_length$, $\iexecd()$)\;
        \EndWhile
        \EndProcedure
                \newline
        \Procedure{WorkMigration}{}
        \State $victim \gets $ UniformlyRandomProcessor()\;
        \State $assigned \gets victim.deque$.popTop()\;
        \State synch($max\_phase_{I}\_length$, $\iexecd()$)\;
        \EndProcedure
        \newline
        \Function{ValidNode}{$node$}
        \State \Return~$node\,\neq\,$\textsc{empty} \\  $\qquad and \quad node\,\neq\,$ \textsc{abort} \\ $\qquad and \quad node\,\neq\,$ \textsc{none}\;
        \EndFunction
      \end{algorithmic}
    \end{small}
  \end{algorithm}

%%%%%%%%%%%%%%%%%%%%%%%%%%%%%%%%%%%%%%%%%%%%%%%%%%%%%%%%%%%%%%%%%%%%%%%%%%%%%%%%%%%%%%%%%%%%%%%%%%%%%%
%                                                                                                    %
%                                                                                                    %
%                                               Phases                                               %
%                                                                                                    %
%                                                                                                    %
%%%%%%%%%%%%%%%%%%%%%%%%%%%%%%%%%%%%%%%%%%%%%%%%%%%%%%%%%%%%%%%%%%%%%%%%%%%%%%%%%%%%%%%%%%%%%%%%%%%%%%

To order the actions that processors take during scheduling iterations, each iteration can be partitioned into a sequence of \emph{phases}.
% This partitioning depends on the algorithm being studied.
In particular, for \ws, each iteration is partitioned into two phases:
\begin{description}[leftmargin=0cm,noitemsep,topsep=5pt,parsep=5pt,partopsep=0pt]
\item[Phase I]
  If a processor has a valid assigned node, it executes the node.
  Otherwise, it makes a steal attempt, and if the attempt succeeds the stolen node becomes the processor's new assigned node.
  
\item[Phase II]
  If a processor made a steal attempt in the previous phase, it takes no action during this phase.
  Otherwise, the processor executed a node in the previous stage, which enabled either 0, 1 or 2 nodes.
  If no node was enabled, the processor invokes \algid{popBottom} to fetch the bottommost node from its deque, if such node exists.
  If at least one node was enabled, one of the enabled nodes becomes the processor's new assigned node, whist the other node, if any, is pushed by the processor into the bottom of its own deque, via the \algid{pushBottom} method.
\end{description}

%%%%%%%%%%%%%%%%%%%%%%%%%%%%%%%%%%%%%%%%%%%%%%%%%%%%%%%%%%%%%%%%%%%%%%%%%%%%%%%%%%%%%%%%%%%%%%%%%%%%%%
%                                                                                                    %
%                                                                                                    %
%                                     Synchronization assumption                                     %
%                                                                                                    %
%                                                                                                    %
%%%%%%%%%%%%%%%%%%%%%%%%%%%%%%%%%%%%%%%%%%%%%%%%%%%%%%%%%%%%%%%%%%%%%%%%%%%%%%%%%%%%%%%%%%%%%%%%%%%%%%

At this point, we have already ordered the actions that each processor takes during the execution of every iteration.
However, this ordering by itself does not meet our needs, as we have to guarantee that all processors start the execution of each phase of every scheduling iteration at the same time.
Our first step towards that goal is to require all processors to begin working at the exact same time.
Refer to the step at which a processor $p$ executes its $i$-th instruction as $\chi\left(p,i\right)$.

\begin{requirement}
  \label{requirement:processors-start-synchronized}
  $\forall p \in Procs, \quad \chi\left(p,1\right) = 0$.
\end{requirement}

%%%%%%%%%%%%%%%%%%%%%%%%%%%%%%%%%%%%%%%%%%%%%%%%%%%%%%%%%%%%%%%%%%%%%%%%%%%%%%%%%%%%%%%%%%%%%%%%%%%%%%
%                                                                                                    %
%                                                                                                    %
%                                           Synch procedure                                          %
%                                                                                                    %
%                                                                                                    %
%%%%%%%%%%%%%%%%%%%%%%%%%%%%%%%%%%%%%%%%%%%%%%%%%%%%%%%%%%%%%%%%%%%%%%%%%%%%%%%%%%%%%%%%%%%%%%%%%%%%%%

Now, we present the \algid{synch} procedure, which allows to synchronize processors at the end of each phase.
The \algid{synch} procedure takes two input parameters:
\begin{enumerate*}
\item $maxPhaseLength$ --- the length of a longest sequence of instructions that may compose a given phase, and;
\item $currentPhaseLength$ --- the number of time steps during which the processor has been executing the current phase, until the procedure's invocation.
\end{enumerate*}
Given these parameters, \algid{synch} adds a sequence of $maxPhaseLength - currentPhaseLength$ no-op instructions, guaranteeing that the number of steps taken from the beginning of each phase's execution until the end of its call is the same for all processors.
To use \algid{synch}, we rely on the purely theoretical procedure $\iexecd$ to obtain the value of $currentPhaseLength$.
% the number of instructions executed by a processor, thus far, in the current phase.

% Recalling that each processor executes one instruction per time step, to implement the $\iexecd$ procedure it suffices to have an internal variable that stores the step at which the last call to the \algid{synch} procedure returned (or $0$ if it was not yet called), and then subtracting its value to the step at which the $\iexecd$ procedure was invoked.

% TODO: Figure~\ref{} illustrates how the \algid{synch} and $\iexecd$ procedures allow to synchronize processors at the end of each phase.

%%%%%%%%%%%%%%%%%%%%%%%%%%%%%%%%%%%%%%%%%%%%%%%%%%%%%%%%%%%%%%%%%%%%%%%%%%%%%%%%%%%%%%%%%%%%%%%%%%%%%%
%                                                                                                    %
%                                                                                                    %
%                                          Stage partitions                                          %
%                                                                                                    %
%                                                                                                    %
%%%%%%%%%%%%%%%%%%%%%%%%%%%%%%%%%%%%%%%%%%%%%%%%%%%%%%%%%%%%%%%%%%%%%%%%%%%%%%%%%%%%%%%%%%%%%%%%%%%%%%

% If our only assumption was that all processors begin executing the first scheduling iteration at step $0$, then, due to the likely discrepancy between the lengths of possible scheduling iteration, processors would become unsynchronized as the execution proceeded.
For last, we partition a computation's execution even further, by partitioning all rounds \textbf{equally} into sequences of \emph{stages}.
To formalize this idea, define a \emph{stage partition} $\ddot{s} \in \mathbb{N} \times \mathbb{N}$, as $\ddot{s} = \left(base,\textit{offset}\right)$, with $\textit{offset} > 0$, where $base$ and $\textit{offset}$ are, respectively, the starting step and length of the stage defined by $\ddot{s}$ within each round.
Refer to the $i$-th stage of a round $\overline{t}$ as $\overline{t}\left\langle i \right\rangle$.

\begin{definition}
  \label{def:stages}
  Let $L$ be the length of the rounds.
  Say that a set $\ddot{S}$ is a set of \emph{stage partitions} if $L = \sum_{\ddot{s} \in \ddot{S}} \pi_{2}\parens[\small]{\ddot{s}}$ and $\forall \ddot{s} \in \ddot{S}\, \parens[\big]{\bracks{\exists \ddot{r} \in \ddot{S} :\,\,\, \pi_{1}\parens{\ddot{s}} + \pi_{2}\parens{\ddot{s}} = \pi_{1}\parens{\ddot{r}}} \vee \bracks{\pi_{1}\left(\ddot{s}\right) + \pi_{2}\left(\ddot{s}\right) = L}}$, where $\pi_{i}(t)$ denotes the projection of the $i$-th element of tuple $t$.
\end{definition}

\begin{remark}
  \label{remark:round-requirements-met-by-definition-of-scheduling-loop}
  To analyze a scheduler's performance using our methodology it suffices to:
  \begin{enumerate*}
  \item define the scheduler by a scheduling loop;
  \item divide the actions that processors take during each iteration of the loop (by partitioning each scheduling iteration into phases); and
  \item insert a call to the synch procedure at the end of each phase.
  \end{enumerate*}
\end{remark}
\begin{proof}[Justification]
    By using the \algid{synch} and $\iexecd$ procedures, one can guarantee that any scheduling algorithm, that may be defined by a scheduling loop, can be modified so that processors are kept synchronized throughout any computation's execution, having that all processors begin the execution of the $i$-th phase of the $n$-th scheduling iteration at the exact same step.
  With this, the length of each round can be set to $\sum_{i \in Phases} length_{i}$, where $Phases$ denotes the set of phases that compose a scheduling iteration and $length_{i}$ denotes the length of the $i$-th phase\footnote{Note that, by including the call to the \algid{synch} procedure at the end of each phase, we ensure that the $i$-th phase of every scheduling iteration has the same length.}.
  Note that, since all processors execute each scheduling iteration synchronized, the definition of scheduling loop ensures us that the requirements of the definition of round are satisfied:
  \begin{enumerate*}
  \item each round has constant length;
  \item a computation's execution can be partitioned into a sequence of equal-length rounds;
  \item during each round no processor executes more than a single node; and
  \item no node is migrated more than once during a round.
  \end{enumerate*}
  Then, it only remains to partition each round into a sequence of stages, having one stage per phase, and ensuring that the execution of the $i$-th phase of a scheduling iteration coincides with the $i$-th stage of the corresponding round.
\end{proof}

In the synchronous \ws\ scheduler, depicted in Algorithm~\ref{algo:ws}, \linebreak $max\_phase_{I}\_length$ and $max\_phase_{II}\_length$ are two constants that correspond to the lengths of the longest sequences of instructions composing the first and second phases of \ws, respectively.
Thus, by Remark~\ref{remark:round-requirements-met-by-definition-of-scheduling-loop}, we can set the length of \ws's rounds to $max\_phase_{I}\_length + max\_phase_{II}\_length$, and partition each such round into two stages whose length matches the maximum length of the corresponding phase.
To proceed to analysis of \ws's performance, it only remains to show that \ws\ satisfies Requirement~\ref{requirement:node-execution-constraints}.
For \ws, say that a node $\mu$ is attached to a processor $p$ if one of the following conditions holds:
\begin{enumerate*}
\item $\mu$ is $p$'s currently assigned node;
\item $\mu$ is stored in $p$'s deque; or
\item $\mu$ is stored in $enabled\left[0\right]$ or $enabled\left[1\right]$ (see line 4 of Algorithm~\ref{algo:ws}).
\end{enumerate*}
At the beginning of any round, each node that is attached to a processor is either in its deque or is the processor's currently assigned node.
As it can be observed in Algorithm~\ref{algo:ws}, each processor only executes the node that is stored in its \algid{assigned} variable.
Since the value of this variable is not changed at least until the processor executes the node, then the node was already stored in the \algid{assigned} variable when the round began, and so the requirement is satisfied.

\subsection{\wslong's performance}
\label{subsec:identifying-ws-limitation}
To show that the synchronous \ws\ algorithm (as defined in Algorithm~\ref{algo:ws}) is not \emph{Algorithm short-term stable}, we will create a computation for which work tends to accumulate unboundedly in some busy processors' deques.
Before moving to the actual proof, however, we have to make an additional definition.

\begin{definition}
\label{def:nodes-stolen}
Refer to the set of nodes stolen at step $i$ from a processor $p$ as $Stolen^{+}_{i}\left(p\right)$, and to the set of nodes stolen by $p$ as $Stolen^{-}_{i}\left(p\right)$.
Moreover, for some round $\overline{t}$, define the set of nodes stolen during $\overline{t}$ from $p$ as \[Stolen^{+}_{\overline{t}}\left(p\right) = \bigcup_{i \in \left\{\overline{t}\left[0\right],\ldots,\overline{t}\left[L-1\right]\right\}}Stolen^{+}_{i}\left(p\right)\] and the set of nodes stolen by $p$ as \[Stolen^{-}_{\overline{t}}\left(p\right) = \bigcup_{i \in \left\{\overline{t}\left[0\right],\ldots,\overline{t}\left[L-1\right]\right\}}Stolen^{-}_{i}\left(p\right)\]
\end{definition}

\begin{lemma}
  \label{lemma:ws-nodes-stolen-round}
  For \ws, at any round $\overline{t}$, $M^{+}_{\overline{t}}\left(p\right) = Stolen^{+}_{\overline{t}}\left(p\right)$ and $M^{-}_{\overline{t}}\left(p\right) = Stolen^{-}_{\overline{t}}\left(p\right)$.
\end{lemma}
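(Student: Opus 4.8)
The plan is to prove each of the two set equalities by double inclusion, resting everything on a single structural observation about Algorithm~\ref{algo:ws}: the sole operation through which a node passes from one processor to a distinct one is a successful steal --- a \algid{popTop} on a victim's deque whose returned node becomes the thief's assigned node. Every other action in the scheduling loop (executing the assigned node, reading \algid{enabled[0]}/\algid{enabled[1]}, \algid{pushBottom}, \algid{popBottom}) touches only state local to the acting processor, and therefore keeps every node it handles attached to that same processor. First I would isolate this fact by a short case analysis over the lines of the loop, after which both equalities reduce to routine set manipulation.

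With that in hand, for the first equality I would argue step by step. For $M^{+}_{i}(p) \subseteq Stolen^{+}_{i}(p)$: if $\mu \in M^{+}_{i}(p) = R_{i}(p) \cap (R_{i+1} - R_{i+1}(p))$, then $\mu$ is attached to $p$ at step $i$ but to another processor at step $i+1$; since the local operations cannot change a node's owner, $\mu$ must have left $p$ through the one externally accessible handle on $p$'s nodes, its deque top, i.e.\ $\mu$ was stolen from $p$. For the reverse inclusion, if $\mu$ is stolen from $p$ at step $i$ then $\mu$ was the topmost deque node (so $\mu \in R_{i}(p)$) and becomes the thief $q$'s assigned node with $q \neq p$ (so $\mu \in R_{i+1}(q)$ and $\mu \notin R_{i+1}(p)$), whence $\mu \in M^{+}_{i}(p)$. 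Taking the union over $i \in \{\overline{t}[0], \ldots, \overline{t}[L-1]\}$ yields $M^{+}_{\overline{t}}(p) = Stolen^{+}_{\overline{t}}(p)$.

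The second equality is symmetric. If $\mu \in M^{-}_{i}(p)$ then $\mu$ enters $p$'s attached set from another processor, which by the structural fact can only happen via $p$'s own \algid{popTop}, so $\mu \in Stolen^{-}_{i}(p)$; conversely a node stolen by $p$ from a victim $q$ satisfies $\mu \in R_{i}(q)$ and becomes $p$'s assigned node, giving $\mu \in R_{i+1}(p) \cap (R_{i} - R_{i}(p)) = M^{-}_{i}(p)$, and unioning over the round's steps gives the claim. I expect the main obstacle to be the bookkeeping in the structural case analysis: one must check that the Phase~II interplay of executing a node, filling \algid{enabled}, and the following \algid{pushBottom}/\algid{popBottom} never reassigns an owner, and that a freshly stolen node is genuinely \textbf{ready} (hence in $R_{i+1}$, not already \textbf{executed}) at the next step. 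The definition of round (Definition~\ref{def:round}), which forbids migrating any node more than once, is what guarantees these per-step contributions aggregate into the round-level equalities without double counting.
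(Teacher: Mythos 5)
Your proposal is correct and follows essentially the same route as the paper: the paper's proof is a one-line appeal to Definition~\ref{def:nodes-stolen} and the specification of Algorithm~\ref{algo:ws}, which is precisely the structural observation you make explicit (stealing via \algid{popTop} is the only operation in \ws\ that transfers a node between processors, all other operations being local). Your double-inclusion and per-step-to-round aggregation simply spell out the details that the paper leaves implicit, so there is no substantive difference in approach.
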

\begin{proof}%[Proof of Lemma~\ref{lemma:ws-nodes-stolen-round}]
  Both results follow from Definition~\ref{def:nodes-stolen} and the specification of Algorithm~\ref{algo:ws}.
\end{proof}

We now move to obtain both lower and upper bounds on the expected number of nodes that are stolen from a \textit{non-self-stable} processor during a round.
% The proof of the following result can be found in the Appendix (Section~\ref{sub:4-proof-lemma-1}).

\begin{lemma}
  \label{lemma:bab-lower bounds on the probability that i is non-empty}
  Suppose there are $B$ bins and $B.\alpha$ balls, and that each ball is tossed independently and uniformly at random into the bins.
  For a bin $b_{i}$, let $Y_{i}$ be an indicator variable, defined as \[Y_{i} = \left\{\begin{matrix}
 \,1 & \text{if at least one ball lands in } b_{i};\\
 \,0 & \text{otherwise.}
\end{matrix}\right.\]
Then, $E\left[Y_{i}\right] = P\left\{Y_{i} = 1\right\} \geq 1 - \euler^{- \alpha}$.
\end{lemma}
\begin{proof}
  The probability that no ball lands in $b_{i}$ is $P\left\{Y_{i} = 0\right\} = \left(1 - \frac{1}{B}\right)^{B\alpha} \leq \euler^{- \alpha}$.
  To conclude, $E\left[Y_{i}\right] = P\left\{Y_{i} = 1\right\} \geq 1 - \euler^{- \alpha}$.
\end{proof}

\begin{lemma}
  \label{lemma:lower-and-upper-bounds-steals}
  For any round $\overline{t}$ and $p \in U_{\overline{t}}$ during a computation's execution using \ws, we have $1 - \euler^{- \alpha_{\overline{t}}} \leq \mathrm{E}\bracks{\abs{Stolen_{\overline{t}}^{+}\parens{p}}} \leq \alpha_{\overline{t}}$.
\end{lemma}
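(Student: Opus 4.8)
The plan is to reduce the statement to the balls-and-bins experiment of Lemma~\ref{lemma:bab-lower bounds on the probability that i is non-empty} and read off the two bounds from it. First I would record the combinatorial setup of a round under \ws: by Definition~\ref{def:processor-state-round} exactly $P^{idle}_{\overline{t}} = P\cdot\alpha_{\overline{t}}$ processors are idle during $\overline{t}$, and under \ws\ every idle processor spends Phase~I making a single steal attempt against a victim chosen by the \textsc{UniformlyRandomProcessor} routine, uniformly among the $P$ processors. I would therefore model these attempts as $P\alpha_{\overline{t}}$ balls thrown independently and uniformly into $P$ bins, so that the ball count equals $B\alpha$ with $B = P$ and $\alpha = \alpha_{\overline{t}}$, exactly matching the hypotheses of Lemma~\ref{lemma:bab-lower bounds on the probability that i is non-empty}; the bin associated with $p$ receives a ball precisely when some thief targets $p$.

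Before bounding the steals I would establish that $p$ always exposes a stealable node throughout Phase~I. Since $p \in U_{\overline{t}}$ (Definition~\ref{def:stable-unstable-processors}), $R_{\overline{t}}\parens{p} \neq C_{\overline{t}}\parens{p}$; as Requirement~\ref{requirement:node-execution-constraints} gives $C_{\overline{t}}\parens{p}\subseteq R_{\overline{t}}\parens{p}$ and a round executes at most one node, we get $\abs{R_{\overline{t}}\parens{p}} \geq \abs{C_{\overline{t}}\parens{p}} + 1$. Because at the start of a round every attached node is either the assigned node or sits in the deque, and the single executed node is the assigned one, the surplus node must reside in $p$'s deque, so $p$'s deque is non-empty at the start of $\overline{t}$. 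As the owner only touches its deque (via \algid{popBottom}/\algid{pushBottom}) in Phase~II, the round-start topmost node persists throughout Phase~I, which is when the steal attempts occur.

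For the lower bound, let $Y_{p}$ be the indicator that at least one thief targets $p$ during $\overline{t}$. Whenever $Y_{p}=1$, at least one \algid{popTop} is invoked on $p$'s non-empty deque in Phase~I, and by the deque semantics its current topmost node is guaranteed to be removed and returned to some thief, hence is a node stolen from $p$, i.e.\ an element of $Stolen^{+}_{\overline{t}}\parens{p}$ (Definition~\ref{def:nodes-stolen}). Thus $\abs{Stolen^{+}_{\overline{t}}\parens{p}} \geq Y_{p}$ pointwise, and taking expectations and applying Lemma~\ref{lemma:bab-lower bounds on the probability that i is non-empty} with $\alpha = \alpha_{\overline{t}}$ yields $\mathrm{E}\bracks{\abs{Stolen^{+}_{\overline{t}}\parens{p}}} \geq \mathrm{E}\bracks{Y_{p}} \geq 1 - \euler^{-\alpha_{\overline{t}}}$.

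For the upper bound, I would note that each thief performs at most one \algid{popTop} in the round and so removes at most one node from $p$; consequently $\abs{Stolen^{+}_{\overline{t}}\parens{p}}$ is bounded by the number of thieves that chose $p$ as victim, a sum of $P\alpha_{\overline{t}}$ independent Bernoulli$(1/P)$ indicators, whose mean is $P\alpha_{\overline{t}}\cdot\tfrac{1}{P} = \alpha_{\overline{t}}$ by linearity of expectation, giving $\mathrm{E}\bracks{\abs{Stolen^{+}_{\overline{t}}\parens{p}}} \leq \alpha_{\overline{t}}$. I expect the main obstacle to be the timing/semantics justification of the lower bound rather than the counting: one must argue carefully that $p$'s deque still holds its round-start topmost node at the moment the synchronized Phase~I steals execute, so that a targeting thief is guaranteed a successful removal. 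This is precisely what the phase decomposition together with the \algid{synch} barrier secure, combined with the deque guarantee that a pending topmost node is removed by some concurrent \algid{popTop}.
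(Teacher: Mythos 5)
Your proof is correct and takes essentially the same route as the paper's: the same reduction of the $P\alpha_{\overline{t}}$ independent, uniformly-targeted steal attempts to the balls-and-bins setting of Lemma~\ref{lemma:bab-lower bounds on the probability that i is non-empty} for the lower bound, linearity of expectation for the upper bound, and the Phase~I/Phase~II separation together with the deque semantics to guarantee that a targeted non-empty deque loses at least one node. The only difference is that you explicitly argue that a non-self-stable processor's deque is non-empty at the start of the round, a fact the paper's proof uses implicitly and only formalizes afterwards (Lemma~\ref{lemma:ws-base-results}); this is a minor tightening rather than a different approach.
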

\begin{proof}
  By observing Algorithm~\ref{algo:ws}, it follows that a processor makes a steal attempt iff it is idle, implying that exactly $P\alpha_{\overline{t}}$ steal attempts are made during round $\overline{t}$.
  Note that:
  \begin{enumerate*}
  \item steal attempts are independent from one another; and
  \item a steal attempt corresponds to targeting a processor uniformly at random and then invoking the \algid{popTop} method to its deque.
  \end{enumerate*}
  If we imagine that each steal attempt is a ball toss and that each processor's deque is a bin, it follows by Lemma~\ref{lemma:bab-lower bounds on the probability that i is non-empty} that the probability of $p$'s deque being targeted is at least $1 - \euler^{- \alpha_{\overline{t}}}$.
  On the other hand, the expected number of invocations to the \algid{popTop} method of any processor $p$'s deque is $\left(P\alpha_{\overline{t}}\right)/P = \alpha_{\overline{t}}$.
  Since $p$ may only invoke the \algid{popBottom} method of its deque during the second phase and the all the steal attempts take place during the first phase, then, taking into account the deque semantics (see Section~\ref{sec:deque-semantics}):
  \begin{enumerate*}
  \item if $p$'s deque is targeted by at least one steal attempt, then at least one node is stolen; and
  \item at most one node might be returned for each invocation to the \algid{popTop} method.
  \end{enumerate*}
  Thus, $\mathrm{E}\bracks{\abs{Stolen_{\overline{t}}^{+}\parens{p}}} \geq 1 - \euler^{- \alpha_{\overline{t}}}$ and $\mathrm{E}\bracks{\abs{Stolen_{\overline{t}}^{+}\parens{p}}} \leq \alpha_{\overline{t}}$.
\end{proof}

\begin{lemma}
  \label{lemma:ws-lower-upper-bounds-migrations}
  For any round $\overline{t}$, and $p \in U_{\overline{t}}$ we have $1 - \euler^{- \alpha_{\overline{t}}} \leq \mathrm{E}\bracks{\abs{M_{\overline{t}}^{+}\parens{p}}} \leq \alpha_{\overline{t}}$, where  $\alpha_{\overline{t}}$ is the ratio of idle processors.
\end{lemma}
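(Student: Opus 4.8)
The plan is to obtain this statement as an immediate consequence of the two preceding lemmas, since the quantity being bounded has already been bounded under a different name. First I would invoke Lemma~\ref{lemma:ws-nodes-stolen-round}, which for \ws\ asserts the set identity $M^{+}_{\overline{t}}(p) = Stolen^{+}_{\overline{t}}(p)$. This is an equality of sets holding pointwise on every execution (not merely in expectation), so in particular the cardinalities coincide: $\abs{M^{+}_{\overline{t}}(p)} = \abs{Stolen^{+}_{\overline{t}}(p)}$ as random variables. Taking expectations of both sides then yields $\mathrm{E}\bracks{\abs{M^{+}_{\overline{t}}(p)}} = \mathrm{E}\bracks{\abs{Stolen^{+}_{\overline{t}}(p)}}$.

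Next I would apply Lemma~\ref{lemma:lower-and-upper-bounds-steals}, whose hypotheses ($p \in U_{\overline{t}}$, an execution using \ws, and an arbitrary round $\overline{t}$) are exactly those of the present statement; it gives $1 - \euler^{-\alpha_{\overline{t}}} \leq \mathrm{E}\bracks{\abs{Stolen^{+}_{\overline{t}}(p)}} \leq \alpha_{\overline{t}}$. Substituting the equality of expectations established in the previous step concludes the argument, since the two displayed chains of inequalities then refer to the same number.

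There is essentially no obstacle here: the mathematical content lives entirely in Lemma~\ref{lemma:lower-and-upper-bounds-steals} (the balls-into-bins estimate of Lemma~\ref{lemma:bab-lower bounds on the probability that i is non-empty} combined with the deque semantics), while Lemma~\ref{lemma:ws-nodes-stolen-round} merely renames the relevant set. The one point to keep in mind is that the hypothesis $p \in U_{\overline{t}}$ is genuinely needed and is inherited from Lemma~\ref{lemma:lower-and-upper-bounds-steals}: non-self-stability of $p$ guarantees that $p$ has attached, unexecuted work available to be taken, which is precisely what underlies the lower bound $1 - \euler^{-\alpha_{\overline{t}}}$ (a targeted steal attempt only yields a migrated node when the victim's deque is non-empty at its top). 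For this reason the result is stated for $p \in U_{\overline{t}}$ rather than for arbitrary $p \in Procs$.
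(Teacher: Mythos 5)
Your proposal is correct and follows exactly the paper's own route: the paper's proof is the one-line observation that Lemma~\ref{lemma:ws-nodes-stolen-round} (the identity $M^{+}_{\overline{t}}\left(p\right) = Stolen^{+}_{\overline{t}}\left(p\right)$) and Lemma~\ref{lemma:lower-and-upper-bounds-steals} (the bounds on $\mathrm{E}\bracks{\abs{Stolen^{+}_{\overline{t}}\parens{p}}}$) together imply the result. Your write-up merely makes the implicit steps explicit (set equality, hence equality of cardinalities as random variables, hence equality of expectations), which is a faithful elaboration of the same argument.
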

\begin{proof}
  Lemmas~\ref{lemma:ws-nodes-stolen-round} and~\ref{lemma:lower-and-upper-bounds-steals} imply this result. 
\end{proof}

The next lemma follows from the behavior of the \ws\ algorithm. %, and its proof can be found in Appendix (Section~\ref{sub:4-proof-lemma-2}).
% In short, if a processor is \textit{non-self-stable}, then it must be busy, and so it does not steal work.
% On the other hand, if a processor is \textit{self-stable}, then it is either idle or only has one node assigned.
% In either case, at the end of the round, the processor can only have up to two assigned nodes.

\begin{lemma}
  \label{lemma:ws-base-results}
  Consider some processor $p \in Procs$ and some round $\overline{t}$ during the execution of a computation by \ws.
  If $p \in U_{\overline{t}}$ then $p$'s deque is non-empty and $M^{-}_{\overline{t}}\left(p\right) = \emptyset$.
  If $p \in S_{\overline{t}}$ then $\left|R_{\overline{t+1}}\left(p\right)\right| \leq 2$.
\end{lemma}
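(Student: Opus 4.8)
The plan is to derive both parts from a single structural invariant of WS: at the start of any round, any processor whose \algid{assigned} node is invalid --- equivalently, any idle processor, since in WS a processor makes a steal attempt during $\overline{t}$ exactly when $C_{\overline{t}}(p) = \emptyset$ --- has an empty deque, and therefore $R_{\overline{t}}(p) = \emptyset$. I would establish this invariant by induction over rounds, inspecting Algorithm~\ref{algo:ws}: a deque grows only through \algid{pushBottom} in Phase II (executed after a node that enables two children is executed) and shrinks only through the owner's \algid{popBottom} or a thief's \algid{popTop}; the \algid{assigned} variable turns invalid only through an empty \algid{popBottom} (which happens precisely when the deque is already empty) or a failed \algid{popTop} (which only a processor that is already stealing, hence has an empty deque, performs). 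Since an idle processor never executes and thus never pushes, its deque remains empty throughout the round.

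For the first claim, suppose $p \in U_{\overline{t}}$, so by Requirement~\ref{requirement:node-execution-constraints} we have $R_{\overline{t}}(p) \supsetneq C_{\overline{t}}(p)$. Were $p$ idle, then $C_{\overline{t}}(p) = \emptyset$ and the invariant would force $R_{\overline{t}}(p) = \emptyset = C_{\overline{t}}(p)$, contradicting $p \in U_{\overline{t}}$; hence $p$ is busy and $\abs{C_{\overline{t}}(p)} = 1$. Then $\abs{R_{\overline{t}}(p)} \geq 2$, and as exactly one attached node is the \algid{assigned} node that $p$ executes, at least one further attached node lies in $p$'s deque, so the deque is non-empty. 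Being busy, $p$ executes rather than invoking \textsc{WorkMigration}, so it steals nothing during $\overline{t}$; Lemma~\ref{lemma:ws-nodes-stolen-round} then gives $M^{-}_{\overline{t}}(p) = Stolen^{-}_{\overline{t}}(p) = \emptyset$.

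For the second claim, suppose $p \in S_{\overline{t}}$, i.e.\ $R_{\overline{t}}(p) = C_{\overline{t}}(p)$. Plugging this into the Round Progression Lemma (Lemma~\ref{lemma:r-next}) cancels the $R_{\overline{t}}(p)$ term against $C_{\overline{t}}(p)$ and yields $R_{\overline{t+1}}(p) \subseteq E_{\overline{t}}(p) \cup M^{-}_{\overline{t}}(p)$. I would then split on the (mutually exclusive) busy and idle cases. If $p$ is busy it executes a single node of out-degree at most two, so $\abs{E_{\overline{t}}(p)} \leq 2$, while being busy it does not steal and $M^{-}_{\overline{t}}(p) = \emptyset$; hence $\abs{R_{\overline{t+1}}(p)} \leq 2$. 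If $p$ is idle it executes nothing, so $E_{\overline{t}}(p) = \emptyset$, and a single steal attempt returns at most one node, so $\abs{M^{-}_{\overline{t}}(p)} \leq 1$ and $\abs{R_{\overline{t+1}}(p)} \leq 1 \leq 2$.

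The step I expect to be the main obstacle is making the deque-emptiness invariant fully rigorous, since it is the only place requiring care about how \algid{pushBottom}, \algid{popBottom} and \algid{popTop} interact across rounds and about the concurrent deque semantics (Section~\ref{sec:deque-semantics}); once it is in place, both parts reduce to short case analyses powered by Lemmas~\ref{lemma:r-next} and~\ref{lemma:ws-nodes-stolen-round} and the bound of two on each node's out-degree.
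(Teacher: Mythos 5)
Your proof is correct and follows essentially the same route as the paper's: both rest on the same inductive invariant (yours phrased contrapositively as ``an idle processor has an empty deque, hence $R_{\overline{t}}\left(p\right) = \emptyset$'', the paper's as ``a processor with an attached node executes a node during the round''), and both then handle the self-stable case by the same busy/idle split (single attached node versus none), using Lemma~\ref{lemma:ws-nodes-stolen-round} to get $M^{-}_{\overline{t}}\left(p\right) = \emptyset$ and, respectively, the out-degree bound of two and the single-steal bound to obtain $\left|R_{\overline{t+1}}\left(p\right)\right| \leq 2$. Your only departure is cosmetic: you invoke the Round Progression Lemma (Lemma~\ref{lemma:r-next}) with the cancellation $R_{\overline{t}}\left(p\right) = C_{\overline{t}}\left(p\right)$ to make the counting in the second claim explicit, where the paper states the same bound informally.
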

\begin{proof}
  By the definition of Algorithm~\ref{algo:ws} it can be proved by induction on the progression of a computation's execution that if a processor has at least one attached node at the beginning of round $\overline{t}$, then the processor executes a node during $\overline{t}$.
  From that, and by observing the algorithm, it follows that if $p$ has at least one node attached, then it does not make any steal attempt during $\overline{t}$, implying $Stolen^{-}_{\overline{t}}\left(p\right) = \emptyset$.
  Lemma~\ref{lemma:ws-nodes-stolen-round} then implies $M^{-}_{\overline{t}}\left(p\right) = \emptyset$.
  On the other hand, since $p$ always executes one of its attached nodes if there is any, it follows that if $p \in U_{\overline{t}}$ then $p$'s deque is not empty.

  If $p$ only has a single attached node, then $p \in S_{\overline{t}}$.
  Because it has one attached node, it follows $Stolen^{-}_{\overline{t}}\left(p\right) = \emptyset$.
  Again, Lemma~\ref{lemma:ws-nodes-stolen-round} then implies $M^{-}_{\overline{t}}\left(p\right) = \emptyset$.
  In addition, since the out-degree of any node is at most two (by our conventions regarding computations' structure), then at the end of the round $p$ has at most two attached nodes.
  
  Finally we show that if $R_{\overline{t}}\left(p\right) = \emptyset$, then at the end of the round $p$ has at most one attached node.
  If $p$ has no attached node, then its \algid{assigned} variable does not contain a valid node, implying that $p$ executes a call to the \algid{WorkMigration} procedure.
  Since each call only entails one invocation to the \algid{popTop} method, then, taking into account the method's semantics\footnote{More details on the appendix, Section~\ref{sec:deque-semantics}.} it follows that $p$ may only get at most one node from its steal attempt.
  Since after performing such attempt, $p$ takes no further action during the scheduling iteration other than simply waiting for it to end, we conclude the lemma holds.
\end{proof}

From Lemma~\ref{lemma:ws-base-results} and the definition of round (Definition~\ref{def:round}), it follows $\forall p \in S_{\overline{t}}, \left|R_{\overline{t+1}}\left(p\right)\right| \leq 2 \leq L + 1$, where $L$ is the length of the rounds, which is at least $1$.
Thus, if we were to show that \ws\ is \textit{algorithm short-term stable} wrt some interval $I \subseteq \left]0;1\right[$, then, by Corollary~\ref{corollary:connecting-corollary}, we would only have to prove that for any round $\overline{t}$ such that $\alpha_{\overline{t}} \in I$, we had $\left|E_{\overline{t}}\left(p\right)\right| < \mathrm{E}\bracks[\big]{\left|C_{\overline{t+1}}\left(p\right)\right| + \abs[\big]{M^{+}_{\overline{t}}\left(p\right)}}$.
Unfortunately, as we now prove, there is no non-empty interval $I$ wrt which \ws\ is \textit{algorithm short-term stable}.
%
% As we prove next, not only there is no non-empty interval $I$ wrt which \ws\ is capable of guaranteeing algorithm short-term stability, but in fact, there are scenarios for which $\forall \alpha_{\overline{t}} \in \left]0;1\right[, \mathrm{E}\bracks{\abs{R_{\overline{t+1}}\parens{U_{\overline{t}}} - C_{\overline{t+1}}\parens{U_{\overline{t}}}}} > \abs{R_{\overline{t}}\parens{U_{\overline{t}}} - C_{\overline{t}}\parens{U_{\overline{t}}}}$.
%
% A full proof of the following result is presented in the Appendix (Section~\ref{sub:4-proof-lemma-3}).

\begin{theorem}
  \label{thr:ws-not-stable}
  There is no non-empty interval $I \subseteq \left]0;1\right[$ such that \ws\ (as defined in Algorithm~\ref{algo:ws}) is algorithm short-term stable wrt $I$.
\end{theorem}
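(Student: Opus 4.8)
The plan is to prove this negative result by construction. Since \emph{algorithm short-term stability} wrt $I$ (Definition~\ref{def:short-term-stable-algorithm}) is a universally quantified implication over all rounds, to show it fails wrt a given $I$ it suffices to exhibit a single reachable round $\overline{t}$ with $\alpha_{\overline{t}} \in I$ that violates the \emph{first} conjunct --- the strict expected decrease of the unexecuted work attached to $U_{\overline{t}}$; I would never touch the second conjunct. Because this must be done for \emph{every} non-empty interval $I \subseteq \left]0;1\right[$, the real content is to produce, for a dense set of target idle-ratios, a computation and an execution prefix reaching a bad round at exactly that ratio.

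The computation I would use is a family of \emph{accumulating spines}. From the unique root, a small balanced binary ``spawner'' tree first creates $k$ spine-roots $s_{1},\dots,s_{k}$ (only to respect the single-root, out-degree-$\leq 2$ conventions). Each $s_{j}$ begins a long chain $v_{0}\to v_{1}\to\cdots$ in which every $v_{i}$ has out-degree two: its first enabled child is the continuation $v_{i+1}$ and its second enabled child is a leaf $w_{i}$ that enables nothing. Under \ws, the processor holding such a spine sets $v_{i+1}$ as its new \algid{assigned} node (line 8) and pushes $w_{i}$ to the bottom of its deque (line 10). Hence a spine holder stays busy, executes exactly one node and enables exactly two per round, and its accumulated leaves can only ever leave through steals. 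Making the chains long keeps spine holders busy spine processors across the rounds of interest.

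To land $\alpha_{\overline{t}}$ inside a prescribed $I$ I would first choose $P$ and $k\in\{1,\dots,P-1\}$ with $(P-k)/P\in I$; this is possible for any interval of positive length, since the ratios $(P-k)/P$ are dense in $\left]0;1\right[$ (a degenerate single irrational point makes the antecedent $\alpha_{\overline{t}}\in I$ unsatisfiable, so that case is vacuous and excluded). I would then argue \emph{positive-probability reachability} of the desired start-of-round state: there is an execution prefix in which (i) the $k$ spine-roots are stolen by $k$ distinct processors, after which each is busy and never steals again; (ii) the spines run for a few rounds so each of the $k$ deques holds at least one leaf; and (iii) a final \emph{quiet} round occurs in which every thief targets one of the $P-k$ empty non-spine deques, so no steal succeeds. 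Event (iii) has probability at least $((P-k)/P)^{P-k}>0$, so the whole prefix has positive probability and is a legitimate execution. At the resulting round $\overline{t}$ the $k$ spine holders are exactly the busy, non-self-stable processors while every other processor enters idle, so $U_{\overline{t}}$ is the set of spine holders and $\alpha_{\overline{t}}=(P-k)/P\in I$.

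Finally I would establish the violated inequality and identify the main obstacle. For each spine holder $p\in U_{\overline{t}}$ we have $M^{-}_{\overline{t}}(p)=\emptyset$ (Lemma~\ref{lemma:ws-base-results}), $\abs{E_{\overline{t}}(p)}=2$, and $\abs{C_{\overline{t+1}}(p)}=1$, while Lemma~\ref{lemma:ws-lower-upper-bounds-migrations} gives $\mathrm{E}\bracks{\abs{M^{+}_{\overline{t}}(p)}}\leq\alpha_{\overline{t}}<1$. Tracing the deque (or applying Lemma~\ref{lemma:r-next}), the unexecuted attached work evolves as $\abs{R_{\overline{t+1}}(p)-C_{\overline{t+1}}(p)}=\abs{R_{\overline{t}}(p)-C_{\overline{t}}(p)}+1-\abs{M^{+}_{\overline{t}}(p)}$, consistent with Lemma~\ref{lemma:connecting-lemma} (decrease would require $\abs{M^{+}_{\overline{t}}(p)}\geq 2$), so the expected per-processor change is at least $1-\alpha_{\overline{t}}>0$. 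Summing over $p\in U_{\overline{t}}$ by linearity of expectation on disjoint per-processor sets, exactly as in the proof of Lemma~\ref{lemma:stability-generalize}, gives $\mathrm{E}\bracks{\abs{R_{\overline{t+1}}(U_{\overline{t}})-C_{\overline{t+1}}(U_{\overline{t}})}}\geq\abs{R_{\overline{t}}(U_{\overline{t}})-C_{\overline{t}}(U_{\overline{t}})}+k(1-\alpha_{\overline{t}})$, strictly exceeding $\abs{R_{\overline{t}}(U_{\overline{t}})-C_{\overline{t}}(U_{\overline{t}})}$ and contradicting the required strict decrease; as $I$ was arbitrary, no such interval exists. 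The instability itself is immediate from the comparison $\abs{E_{\overline{t}}(p)}=2$ versus $\mathrm{E}\bracks{\abs{M^{+}_{\overline{t}}(p)}}\leq\alpha_{\overline{t}}<1$; I expect the genuine difficulty to be step three --- guaranteeing a \emph{reachable} round with a \emph{prescribed} idle ratio and a cleanly identified $U_{\overline{t}}$ despite randomized stealing --- for which the positive-probability ``quiet round'' device is the crux.
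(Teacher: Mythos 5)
Your proposal is correct, and its analytical core is exactly the paper's: for a non-self-stable processor $p$ with $\abs{E_{\overline{t}}\parens{p}} = 2$, the paper combines $\abs{C_{\overline{t}}\parens{p}} = \abs{C_{\overline{t+1}}\parens{p}} = 1$ and $M^{-}_{\overline{t}}\parens{p} = \emptyset$ (Lemma~\ref{lemma:ws-base-results}) with $\mathrm{E}\bracks{\abs{M^{+}_{\overline{t}}\parens{p}}} \leq \alpha_{\overline{t}} < 1$ (Lemma~\ref{lemma:ws-lower-upper-bounds-migrations}) and the Round Progression Lemma to conclude $\mathrm{E}\bracks{\abs{R_{\overline{t+1}}\parens{p} - C_{\overline{t+1}}\parens{p}}} \geq \abs{R_{\overline{t}}\parens{p} - C_{\overline{t}}\parens{p}} + 1 - \alpha_{\overline{t}}$, which is your per-processor estimate. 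Where you genuinely diverge is in everything surrounding that computation. The paper simply \emph{posits} a round $\overline{t}$ with $\alpha_{\overline{t}} \in I$ in which $p$ enables two nodes and $U_{\overline{t}} = \braces{p}$, and stops there: it never constructs a computation or an execution prefix that reaches such a round, and never argues that the idle ratio can be made to fall inside an arbitrary prescribed interval $I$. Your spine family, the density argument for the ratios $(P-k)/P$, and the positive-probability quiet-round device supply precisely those missing pieces, so your proof is longer but is a complete refutation in the sense the definition actually demands (non-stability requires an \emph{existing} round of an \emph{existing} execution violating the implication); you also handle $\abs{U_{\overline{t}}} = k$ by the same disjointness-plus-linearity argument as Lemma~\ref{lemma:stability-generalize}, whereas the paper restricts to a singleton $U_{\overline{t}}$ to avoid the summation --- both choices are fine. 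One further point in your favor: your aside on degenerate intervals flags a real edge case the paper silently ignores. If $I$ were a single irrational point, the antecedent $\alpha_{\overline{t}} \in I$ is unsatisfiable and the definition holds vacuously, so the theorem as literally stated needs ``interval'' to mean a set of positive length (or at least one containing a realizable ratio); your explicit exclusion is the correct reading, not a gap in your argument.
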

\begin{proof}
Due to our conventions related with the computations' structure, it follows that during a round a processor can enable two nodes.
For some round $\overline{t}$, let $p$ be a non-self-stable processor (\textit{i.e.}~$p \in U_{\overline{t}}$) such that $\left|E_{\overline{t}}\left(p\right)\right| = 2$.
Lemmas~\ref{lemma:r-next} and \ref{lemma:ws-base-results} imply $R_{\overline{t+1}}\left(p\right) = \left(E_{\overline{t}}\left(p\right) \cup R_{\overline{t}}\left(p\right)\right) - (C_{\overline{t}}\left(p\right) \cup M^{+}_{\overline{t}}\left(p\right))$.
As already noted in the proof of Lemma~\ref{lemma:ws-base-results}, for the \ws\ algorithm, since $p \in U_{\overline{t}}$, it follows that $\left|C_{\overline{t}}\left(p\right)\right| = 1$.
Since we have
\begin{enumerate*}
\item $E_{\overline{t}}\left(p\right) \cup R_{\overline{t}}\left(p\right) \supseteq C_{\overline{t}}\left(p\right) \cup M^{+}_{\overline{t}}\left(p\right)$; \label{enum:claim-ws-part-1}
\item $E_{\overline{t}}\left(p\right) \cap R_{\overline{t}}\left(p\right) = \emptyset$; and  \label{enum:claim-ws-part-2}
\item \label{enum:claim-ws-part-3} $C_{\overline{t}}\left(p\right) \cap M^{+}_{\overline{t}}\left(p\right) = \emptyset$ \footnote{For a formal proof of these claims see Claims~\ref{claim:progression-connecting-lemma-4}, \ref{claim:progression-connecting-lemma-1} and \ref{claim:progression-connecting-lemma-3} for parts \ref{enum:claim-ws-part-1}, \ref{enum:claim-ws-part-2} and \ref{enum:claim-ws-part-3}, respectively, with $M^{-}_{\overline{t}}\left(p\right) = \emptyset$.},
\end{enumerate*}
then $\left|R_{\overline{t+1}}\left(p\right)\right| = \left|E_{\overline{t}}\left(p\right)\right| + \left|R_{\overline{t}}\left(p\right)\right| - \left|C_{\overline{t}}\left(p\right) \right| - |M^{+}_{\overline{t}}\left(p\right)|$.
By Lemma~\ref{lemma:ws-lower-upper-bounds-migrations} and since $p$ enabled two nodes, it follows $\mathrm{E}\bracks{\left|R_{\overline{t+1}}\left(p\right)\right|} \geq 2 + \left|R_{\overline{t}}\left(p\right)\right| - 1 - \alpha_{\overline{t}}= \left|R_{\overline{t}}\left(p\right)\right| + 1 - \alpha_{\overline{t}}$.
Since $p$ enabled two nodes, it executes a node during the next round, implying $\left|C_{\overline{t+1}}\left(p\right)\right| = 1$.
It then follows, $\mathrm{E}\bracks{\left|R_{\overline{t+1}}\left(p\right) - C_{\overline{t+1}}\left(p\right)\right|} \geq \left|R_{\overline{t}}\left(p\right)\right| + 1 - \alpha_{\overline{t}} - 1 =\left|R_{\overline{t}}\left(p\right)\right| - \alpha_{\overline{t}}$.
Even though the definition of \textit{algorithm short-term stability} only considers ratios of idle processors in $]0;1[$, note that for \ws, if all processors are idle, then the computation's execution must have already finished, and so it only makes sense to analyze rounds during which the execution is still ongoing.
It then follows that $\alpha_{\overline{t}} < 1$, implying $\mathrm{E}\bracks{\abs{R_{\overline{t+1}}\parens{p} - C_{\overline{t+1}}\parens{p}}} \geq \left|R_{\overline{t}}\left(p\right)\right| - \alpha_{\overline{t}} > \abs{R_{\overline{t}}\parens{p} - C_{\overline{t}}\parens{p}}$.
Thus, if during round $\overline{t}$, $p$ were the only non-self-stable processor (\textit{i.e.}~$U_{\overline{t}} = \{p\}$), then $\mathrm{E}\bracks{\abs{R_{\overline{t+1}}\parens{U_{\overline{t}}} - C_{\overline{t+1}}\parens{U_{\overline{t}}}}} > \abs{R_{\overline{t}}\parens{U_{\overline{t}}} - C_{\overline{t}}\parens{U_{\overline{t}}}}$.
\end{proof}

As one might note, this implies that \ws\ cannot even guarantee \textit{short-term stability} for the set $\{p\}$ (recall Definition~\ref{def:short-term-stability}), regardless of the ratio of idle processors.

%\begin{proof}[Proof Sketch]
%  To prove this result, we simply describe a possible round $\overline{t}$ for which it is not possible to guarantee the short-term stability of $U_{\overline{t}}$.
%  Consider that $U_{\overline{t}}$ is composed by a single processor $p$, and that $p$ enables two nodes during round $\overline{t}$.
%  Then, taking into account Lemma~\ref{lemma:ws-lower-upper-bounds-migrations}, we conclude that \ws\ is not stable wrt any non-empty interval $I \subseteq \left]0;1\right[$.
%\end{proof}

% TODO: When do we note the limitation of ws?
% If most processors are busy, then work stealing is an effective strategy --- the probability that a steal attempt targets a processor with a non-empty deque is large.
% However, if most processors are idle and the ones who are not generate work, then work stealing is a very limited strategy, as work generation surpasses the busy processors' execution capabilities together with the amount of work that is stolen from them.
% As we show next, for situations with unbalanced parallelism, such as the one we just described, allowing processors that are generating work to spread it to idle processors is a way more effective strategy.

\section{A greedy \wsslong\ algorithm}
\label{sec:analyzing-wss}

In this section we present and analyze the performance of a (purely theoretical) greedy \wsslong\ scheduler --- or simply \wss.
This algorithm  (depicted in Algorithm~\ref{algo:wss}) is a variant of \ws\ where processors load balance not only by stealing work, but also by spreading it.
As in \ws, each processor owns a lock-free deque (obeying the semantics defined in~\cite{DBLP:journals/mst/AroraBP01}) and a variable \algid{assigned} that stores the node that it will execute next, if any.
To implement the spreading mechanism each processor additionally owns a \algid{state} flag and a \algid{donation} cell.
Processors use the \algid{state} flag to inform other processors on their current state --- \algid{working}, \algid{idle} or marked as target of a donation (more on this ahead) --- and use the \algid{donation} cell to store nodes that they want to spread.
In \wss\ processors are uniquely identified by an \algid{id}, with which they can be accessed in constant time.
The scheduler also makes use of the \algid{CAS} instruction (Compare-And-Swap), with its usual semantics. % (see Section~\ref{sec:cas-semantics}).
Thus, at most one \algid{CAS} instruction targeting the same memory location can successfully execute at each step.
We assume that the processor that succeeds executing the \algid{CAS} instruction over a memory address \algid{m} at some step \algid{i} is chosen uniformly at random from the set of processors that are eligible to successfully execute the instruction at step \algid{i} over memory address \algid{m}.

Contrarily to \ws, we partition each scheduling iteration of \wss\ into three phases.
Phase I of \wss\ is very similar to the \ws's counterpart, only differing because in \wss\ processors keep updating their \algid{state} flags to reflect their current state.
Phases II and III of \wss\ are as follows:
\begin{description}[leftmargin=0cm,noitemsep,topsep=5pt,parsep=5pt,partopsep=0pt]
\item[Phase II]
  If, in phase I, a processor $p$ made a steal attempt or executed a node that did not enable any node, then $p$ does not take any action during this phase.
  Otherwise, if at least one node was enabled, one of the enabled nodes becomes $p$'s new assigned node.
  If two nodes were enabled, then, after having a new node assigned, $p$ attempts to spread the node it did not assign.
\item[Phase III]
  If a processor $p$ executed a node in phase I but no node was enabled, $p$ invokes \algid{popBottom} to fetch the bottommost node from its deque, if there is any.
  On the other hand, if a single node was enabled, $p$ does not take any action during this phase.
  If two nodes were enabled, $p$ only takes action if the donation attempt it made during phase II failed.
  In such scenario, $p$ pushes the node it failed to donate into the bottom of its own deque, via the \algid{pushBottom} method.
  Finally, if the processor made an unsuccessful steal attempt during the first phase, it polls its \algid{state} flag to check for incoming donations.
  If there is a donation, $p$ transfers the node from the donor's \algid{donation} cell and updates its \algid{state} flag accordingly.  
\end{description}

% WSS
% New variables: state and donation
% Each processor is uniquely identified by an id
% Define the CAS semantics
% Scheduling iteration of wss - partitioned into three phases
% Describe the algorithm by phases
% Describe the stages
% Describe the rounds
% Justify why rounds are constant time
% Donation attempt == Spread attempt

  \begin{algorithm}[t]
    \caption{The \wss\ algorithm}
    \label{algo:wss}
    \begin{small}
      \begin{algorithmic}[1]
        \Procedure{Scheduler}{ }
        \While{\textbf{not} finished ($computation$)}
        \If{ValidNode ($self.assigned$)}
        \State\ $enabled \gets$ execute ($self.assigned$)\;
        \State\ $assigned \gets $ \textsc{none}\;
        %end of stage 1
        \State\ synch($max\_phase_{I}\_length$, $\iexecd()$)\;
        \If{length ($enabled$) $> 0$}
        \State\ $self.assigned \gets enabled\left[0\right]$\;
        \If{length ($enabled$) = $2$}
        \State\ $self$.handleExtraNode ($enabled\left[1\right]$)\;
        \Else
        \State\ synch($max\_phase_{II}\_length$, $\iexecd()$)\;
        \EndIf\
        
        \Else\
        \State\ synch($max\_phase_{II}\_length$, $\iexecd()$)\;
        \State\ $self.assigned \gets self.deque$.popBottom ()\;
        \If{\textbf{not} ValidNode ($self.assigned$)}
        \State\ $self.state \gets $ \textsc{idle}\;
        \EndIf\
        \EndIf\
        \Else
        \State\ $self$.loadBalance ()\;
        \EndIf\
        \State\ synch($max\_phase_{III}\_length$, $\iexecd()$)\;
        \EndWhile\
        \EndProcedure\
         \newline
        \Procedure{handleExtraNode}{$\mu$}
        \State\ $self.donation \gets \mu$\;
        \State\ $donee \gets$ UniformlyRandomProcessor()\;
        \State\ $result \gets$ CAS ($donee.state$, \textsc{idle}, $self.id$)\;
        % end of stage 2
        \State\ synch($max\_phase_{II}\_length$, $\iexecd()$)\;
        
        \If{$result \neq $ \textsc{success}}
        \State\ $self.donation \gets$ \textsc{none}\;
        \State\ $self.deque$.pushBottom ($\mu$)\;
        \EndIf\
        \EndProcedure\
        \newline
        \Procedure{loadBalance}{ }
        \State\ $victim \gets$ UniformlyRandomProcessor()\;
        \State\ $self.assigned \gets victim.deque$.popTop ()\;
        \If{ValidNode ($self.assigned$)}
        \State\ $self.state \gets $ \textsc{working}\;
        \EndIf\
        
        \State\ synch($max\_phase_{I}\_length$, $\iexecd()$)\;
        \State\ synch($max\_phase_{II}\_length$, $\iexecd()$)\;
        
        \If{$self.state \neq $ \textsc{idle} and $self.state \neq$ \textsc{working}}
        \State\ $donor \gets processor\left[self.state\right]$\;
        \State\ $self.assigned \gets donor.donation$\;
        \State\ $self.state \gets $ \textsc{working}\;
        \EndIf\
        \EndProcedure\
      \end{algorithmic}
    \end{small}
  \end{algorithm}

\begin{definition}
\label{def:nodes-spread}
Refer to the set of nodes spread at step $i$ by a processor $p$ as $Spread^{+}_{i}\left(p\right)$, and to the set of nodes spread to $p$ as $Spread^{-}_{i}\left(p\right)$.
Moreover, for some round $\overline{t}$, define the set of nodes spread during $\overline{t}$ by $p$ as \[Spread^{+}_{\overline{t}}\left(p\right) = \bigcup_{i \in \left\{\overline{t}\left[0\right],\ldots,\overline{t}\left[L-1\right]\right\}}Spread^{+}_{i}\left(p\right),\] and the set of nodes spread to $p$ as \[Spread^{-}_{\overline{t}}\left(p\right) = \bigcup_{i \in \left\{\overline{t}\left[0\right],\ldots,\overline{t}\left[L-1\right]\right\}}Spread^{-}_{i}\left(p\right).\]
\end{definition}

The next claim implies that we can use the methodology described in Section~\ref{sec:strategy-randomized-analysis} to analyze the performance of \wss.

\begin{claim}
  \label{claim:wss-meets-requirements}
  The \wss\ algorithm can be defined using a scheduling loop and meets Requirement~\ref{requirement:node-execution-constraints}.
\end{claim}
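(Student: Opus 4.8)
The plan is to follow the structure of the corresponding argument already given for \ws\ in Section~\ref{subsec:methodology}, treating the two parts of the claim in turn and reserving the extra care for the new spreading machinery. For the first part, I would take the \textbf{while} loop of Algorithm~\ref{algo:wss} (lines~2--25) as the scheduling loop and check the four defining conditions of Section~\ref{subsec:methodology} directly against the code. Condition~(3) holds because every phase ends with a call to \algid{synch} against a constant bound ($max\_phase_{I}\_length$, $max\_phase_{II}\_length$, $max\_phase_{III}\_length$) and no iteration contains an unbounded inner loop, so each scheduling iteration has constant length; condition~(4) is immediate from the loop structure; and condition~(1) holds because the only node-executing instruction is line~4, which lies inside a single conditional and is thus reached at most once per iteration. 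The one condition that genuinely uses the new code is condition~(2) --- that no node migrated to a processor $p$ is migrated again before $p$ finishes its current iteration --- which I address by tracing the two ways a node can arrive at $p$.

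For condition~(2), a node reaches $p$ either through a successful \algid{popTop} in \algid{loadBalance} (a steal, line~39) or through a successful donation hand-off in phase~III (line~47). In both cases it immediately becomes $p$'s \algid{assigned} node while $p$'s \algid{state} is set to \algid{working} (lines~41 and~48). Once \algid{state}~$=$~\algid{working}, the donation-reception branch (lines~45--49) is disabled for the remainder of the iteration, and an \algid{assigned} node is never exposed to \algid{popTop} (which removes only deque tops) nor passed to \algid{handleExtraNode} (which only spreads a freshly enabled second node). Hence $p$ neither loses nor re-migrates the node before the iteration ends. I expect this \algid{state}/\algid{donation} bookkeeping to be the main obstacle, as it is the only behaviour with no counterpart in \ws.

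For the second part I would first fix the notion of attachment for \wss: a node $\mu$ is attached to $p$ if it is $p$'s \algid{assigned} node, or lies in $p$'s deque, or in $p$'s $enabled[0]$ or $enabled[1]$, or in $p$'s \algid{donation} cell, the last two being transient, mid-iteration locations. The crux is the claim that \emph{at the beginning of any round} every attached node sits either in $p$'s deque or in $p$'s \algid{assigned} variable. To establish it I would argue that the \algid{donation} cell carries no logically-attached node across a round boundary: if the phase-II \algid{CAS} failed, the donor clears the cell and pushes $\mu$ back into its own deque (lines~33--34), so $\mu$ is attached to the donor via its deque; and if the \algid{CAS} succeeded, $\mu$ becomes the donee's \algid{assigned} node in phase~III (lines~47--48), so any copy of $\mu$ left in the donor's cell is stale and, because \algid{donation} is read only by a donee whose matching \algid{CAS} succeeded in the \emph{current} iteration, is never read again. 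An analogous but simpler observation disposes of the $enabled$ array.

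With this round-boundary characterisation, Requirement~\ref{requirement:node-execution-constraints} follows exactly as for \ws: the sole node-executing instruction is line~4, which executes the contents of \algid{assigned}, and \algid{assigned} is not overwritten between the start of the round and that execution (line~5 only clears it afterwards). Therefore every node in $C_{\overline{t}}\parens{p}$ was already $p$'s \algid{assigned} node at the start of round $\overline{t}$, whence $C_{\overline{t}}\parens{p} \subseteq R_{\overline{t}}\parens{p}$, which is precisely Requirement~\ref{requirement:node-execution-constraints}.
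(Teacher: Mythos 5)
Your proposal is correct and follows essentially the same route as the paper: the paper's own proof is only a sketch that asserts the four scheduling-loop conditions by inspection of Algorithm~\ref{algo:wss} and defers Requirement~\ref{requirement:node-execution-constraints} to the same attachment argument it gave earlier for \ws. Your write-up is in fact strictly more complete than the paper's, since you supply the \algid{state}/\algid{donation} bookkeeping needed for condition (2) and the round-boundary attachment characterisation (including the stale-donation-cell argument) that the paper dismisses as ``easy to deduce.''
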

\begin{proof}[Proof Sketch]
As one can observe from Algorithm~\ref{algo:wss}, like \ws, \wss\ can also be naturally defined using scheduling loops (lines 2 to 24):
\begin{enumerate*}
\item at most one node is executed per scheduling iteration;
\item if a node is migrated to a processor during an iteration, then it is not migrated again;
\item the length of every iteration is bounded by a constant; and
\item the full sequence of instructions executed by any processor can be partitioned into a sequence of scheduling iterations.
\end{enumerate*}
As for \ws, and for the same reasons, we do not formally show that the loop starting at line 2 and ending at line 25 of Algorithm~\ref{algo:wss} satisfies the requirements of a scheduling loop.
Nevertheless, it is easy to deduce, by observing the definition of the scheduler (given in Algorithm~\ref{algo:wss}), that this claim holds.
\end{proof}

Taking into account Claim~\ref{claim:wss-meets-requirements} and Remark~\ref{remark:round-requirements-met-by-definition-of-scheduling-loop}, we can begin \wss's analysis.
First, we show that for \wss, a node is migrated iff it is stolen or spread.

\begin{lemma}
  \label{lemma:wss-nodes-stolen-spread-round}
  For \wss, at any round $\overline{t}$ and for any processor $p$, $M^{+}_{\overline{t}}\left(p\right) = Stolen^{+}_{\overline{t}}\left(p\right) \cup Spread^{+}_{\overline{t}}\left(p\right)$ and $M^{-}_{\overline{t}}\left(p\right) = Stolen^{-}_{\overline{t}}\left(p\right) \cup Spread^{-}_{\overline{t}}\left(p\right)$.
\end{lemma}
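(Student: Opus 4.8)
The plan is to establish both set equalities by double inclusion, tracing through Algorithm~\ref{algo:wss} exactly as was done for \ws\ in Lemma~\ref{lemma:ws-nodes-stolen-round}, the only difference being that \wss\ admits \emph{two} mechanisms by which a node can change its owning processor across a step rather than one. I would first fix the attachment convention for \wss: a node is attached to $p$ if it is $p$'s currently assigned node, lies in $p$'s deque, is held in $p$'s local $enabled$ array, or sits in $p$'s \algid{donation} cell awaiting transfer. This mirrors the convention already used for \ws\ in Section~\ref{subsec:methodology}, augmented by the \algid{donation} cell, and is the convention implicitly underlying Claim~\ref{claim:wss-meets-requirements}.

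For the inclusions $Stolen^{+}_{\overline{t}}\left(p\right) \cup Spread^{+}_{\overline{t}}\left(p\right) \subseteq M^{+}_{\overline{t}}\left(p\right)$ and $Stolen^{-}_{\overline{t}}\left(p\right) \cup Spread^{-}_{\overline{t}}\left(p\right) \subseteq M^{-}_{\overline{t}}\left(p\right)$, I would argue that every stolen or spread node witnesses a migration: a node stolen from $p$ is removed from $p$'s deque by a \algid{popTop} and becomes the thief's assigned node, so it is attached to $p$ at the start of that step and to a distinct processor at the next; a node spread by $p$ likewise leaves $p$'s \algid{donation} cell to become the donee's assigned node. Taking the union over the steps of $\overline{t}$ (Definitions~\ref{def:nodes-stolen} and~\ref{def:nodes-spread}) yields the two forward inclusions, and symmetrically for the nodes stolen by, and spread to, $p$.

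For the reverse inclusions, I would inspect every instruction of Algorithm~\ref{algo:wss} that can transfer a node between the local states of two distinct processors. The only cross-processor accesses are $victim.deque$.\algid{popTop}() in \textsc{loadBalance}, which is precisely a steal, and the donation handshake --- the donor's \algid{CAS} on $donee.state$ together with the donee's read of $donor.donation$ in its third phase --- which is precisely a spread; every other instruction touches only the processor's own assigned variable, its own deque (via \algid{pushBottom}/\algid{popBottom}), its own \algid{state} flag, or its own $enabled$ array. Hence any node counted in $M^{+}_{\overline{t}}\left(p\right)$ must have left $p$ by a steal or a spread, giving $M^{+}_{\overline{t}}\left(p\right) \subseteq Stolen^{+}_{\overline{t}}\left(p\right) \cup Spread^{+}_{\overline{t}}\left(p\right)$, and dually for $M^{-}_{\overline{t}}\left(p\right)$.

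The main obstacle is pinning down the timing of a spread. Because \textsc{handleExtraNode} clears $self.donation$ only when the \algid{CAS} fails, after a successful donation the donor's \algid{donation} cell still holds the donated node, so attachment cannot count the \algid{donation} cell unconditionally. I would therefore define the transfer to occur at the precise step in the donee's third phase where $self.assigned \gets donor.donation$ executes, and check that this makes attachment a genuine partition at every step (the node being attached to the donor up to that step and to the donee from the next step on). Finally I would invoke the definition of round (Definition~\ref{def:round}), which permits at most one migration per node per round, to rule out the stale \algid{donation}-cell contents inducing a second, spurious migration of the same node within $\overline{t}$.
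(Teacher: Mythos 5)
Your proposal is correct and follows essentially the same route as the paper, whose entire proof is the one-line observation that both equalities follow from Definition~\ref{def:nodes-stolen} (and implicitly Definition~\ref{def:nodes-spread}) together with inspection of Algorithm~\ref{algo:wss}; your double-inclusion argument and instruction-by-instruction inspection simply spell out explicitly what that inspection means. The care you take with the stale \algid{donation} cell after a successful \algid{CAS}, and with pinning the migration to the step where the donee reads the donor's cell, addresses a genuine subtlety that the paper's proof leaves entirely implicit, but it does not constitute a different approach.
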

\begin{proof}
  Both results follow from Definition~\ref{def:nodes-stolen} and Algorithm~\ref{algo:wss}.
\end{proof}

The next lemma is analogous to Lemma~\ref{lemma:ws-base-results}, but concerning the \wss\ algorithm.

\begin{lemma}
  \label{lemma:wss-base-results}
  Consider some $p \in Procs$ and some round $\overline{t}$ during the execution of a computation by \wss.
  Then:
  \begin{enumerate*}
  \item if $p \in U_{\overline{t}}$ then $p$'s deque is non-empty and $M^{-}_{\overline{t}}\left(p\right) = \emptyset$; and
  \item if $p \in S_{\overline{t}}$ then $\left|R_{\overline{t+1}}\left(p\right)\right| \leq 2$.
  \end{enumerate*}
\end{lemma}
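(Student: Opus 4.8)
The plan is to mirror the proof of Lemma~\ref{lemma:ws-base-results}, adapting each step to account for \wss's spreading mechanism. I would first re-establish the underlying invariant, by induction on the progression of a computation's execution, that if a processor $p$ has at least one attached node at the beginning of round $\overline{t}$, then $p$ executes a node during $\overline{t}$. The argument is essentially identical to the \ws\ case, since the only way for $p$ to begin an iteration with attached work is to have refilled its \algid{assigned} variable (from the enabled set, via \algid{popBottom}, or from a steal/donation), so \algid{assigned} holds a valid node whenever $p$ owns attached work. From this it follows that $\abs{C_{\overline{t}}\parens{p}} = 1$ whenever $R_{\overline{t}}\parens{p} \neq \emptyset$.

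For part 1, suppose $p \in U_{\overline{t}}$. Since $R_{\overline{t}}\parens{p} = C_{\overline{t}}\parens{p}$ would place $p$ in $S_{\overline{t}}$, and $C_{\overline{t}}\parens{p} \subseteq R_{\overline{t}}\parens{p}$ with $\abs{C_{\overline{t}}\parens{p}} \leq 1$, the invariant forces $\abs{R_{\overline{t}}\parens{p}} \geq 2$; hence, since at the start of a round each attached node sits in the deque or the \algid{assigned} cell, $p$'s deque is non-empty. Because $p$ owns attached work it executes a node in Phase~I rather than calling \algid{loadBalance}, so it performs no steal and $Stolen^{-}_{\overline{t}}\parens{p} = \emptyset$. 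The genuinely new ingredient is ruling out incoming donations: I would argue that the donation \algid{CAS} in \algid{handleExtraNode} succeeds only against a processor whose \algid{state} is \textsc{idle}, whereas $p$, holding work throughout the round (its deque is non-empty so the \algid{popBottom}-failure branch that would set \algid{state} to \textsc{idle} is never taken), carries \algid{state} = \textsc{working} throughout Phase~II, the phase during which all donation \algid{CAS}es fire. Hence every donation \algid{CAS} targeting $p$ fails, giving $Spread^{-}_{\overline{t}}\parens{p} = \emptyset$, and Lemma~\ref{lemma:wss-nodes-stolen-spread-round} then yields $M^{-}_{\overline{t}}\parens{p} = \emptyset$.

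For part 2, suppose $p \in S_{\overline{t}}$, so $R_{\overline{t}}\parens{p} = C_{\overline{t}}\parens{p}$ and $\abs{R_{\overline{t}}\parens{p}} \leq 1$. If $\abs{R_{\overline{t}}\parens{p}} = 1$, then $p$ executes its single node and, since out-degree is at most two, enables at most two nodes; whether the extra node is successfully donated (leaving one attached) or pushed back onto the deque after a failed donation (leaving two attached), we get $\abs{R_{\overline{t+1}}\parens{p}} \leq 2$. If $R_{\overline{t}}\parens{p} = \emptyset$, then $p$ is idle and calls \algid{loadBalance}: it may steal at most one node via \algid{popTop} and may additionally pick up at most one donated node in Phase~III. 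Here I expect the crux of the argument to lie, and I would resolve it with the same synchronization reasoning as in part~1. If the steal succeeds, $p$ sets \algid{state} = \textsc{working} in Phase~I, before any donation \algid{CAS} fires, so no donation can target it and it ends the round with exactly one node; if the steal fails, \algid{state} remains \textsc{idle}, so at most one donation may flip it to a donor's \algid{id} (the \textsc{idle}-guarded \algid{CAS} can succeed at most once, after which the flag is no longer \textsc{idle}), and $p$ collects that single donated node in Phase~III. Either way $\abs{R_{\overline{t+1}}\parens{p}} \leq 1 \leq 2$.

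The step I expect to be the main obstacle is the donation-timing argument common to both parts: carefully justifying, from the placement of the \algid{synch} calls together with the \textsc{idle}-guarded \algid{CAS}, that a processor's \algid{state} flag has stabilized --- to \textsc{working} whenever it holds or has just acquired work --- before any donation \algid{CAS} is attempted against it, so that a busy processor never receives a donation and an idle processor never simultaneously steals and is donated to. Everything else reduces to the out-degree bound and the bookkeeping already developed for \ws.
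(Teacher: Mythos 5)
Your proof is correct and follows essentially the same route as the paper's: the same induction invariant (a processor with attached work at the start of a round executes a node during it), the same \algid{state}-flag/\algid{CAS}-timing argument showing busy processors receive no donations so that Lemma~\ref{lemma:wss-nodes-stolen-spread-round} yields $M^{-}_{\overline{t}}\left(p\right) = \emptyset$, and the same case analysis for self-stable processors (your sharper bound of at most one attached node in the idle case is exactly the paper's footnoted refinement). One small caveat: your parenthetical claim that a non-self-stable processor's \algid{popBottom}-failure branch ``is never taken'' is not quite right --- thieves may empty its deque during Phase~I --- but this is harmless, since that branch (and hence any write of \textsc{idle} to \algid{state}) executes only in Phase~III, after the donation window closes, which is precisely the phase-timing fact your argument already relies on.
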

\begin{proof}
  This proof follows the same general arguments as the proof of Lemma~\ref{lemma:ws-base-results}.
  
  As for \ws, taking into account the definition of the \wss\ scheduler (depicted in Algorithm~\ref{algo:wss}) it can be proved by induction on the progression of the computation's execution that if a processor has at least one attached node at the beginning of round $\overline{t}$, then the processor executes a node during $\overline{t}$.
  From that, and by observing the algorithm, it follows that if $p$ has at least one node attached, then:
  \begin{enumerate*}
  \item $p$ does not make any steal attempt during $\overline{t}$, implying $Stolen^{-}_{\overline{t}}\left(p\right) = \emptyset$; and
  \item $p$'s \algid{state} flag is set to \textsc{working} at least until the beginning of the third stage of round $\overline{t}$, implying that no processor donates work to $p$, and so $Spread^{-}_{\overline{t}}\left(p\right) = \emptyset$.
  \end{enumerate*}
  Thus, taking into account Lemma~\ref{lemma:wss-nodes-stolen-spread-round}, if $p$ has at least one attached node then $M^{-}_{\overline{t}}\left(p\right) = \emptyset$.
  
  To conclude the proof of the first statement of this lemma, note that because $p$ always executes one of its attached nodes as long as there is any, it follows that if $p \in U_{\overline{t}}$ then $p$ has at least two nodes attached and so $p$'s deque can not be empty.
  
  Again, since $p$ executes one of its attached nodes as long as there is any, if $p$ only has a single attached node then $p \in S_{\overline{t}}$ and $M^{-}_{\overline{t}}\left(p\right) = \emptyset$.
  By the nodes' out-degree assumption, it then follows that at the end of round $\overline{t}$, $p$ can have at most two attached nodes.
  
  Finally we show that if $R_{\overline{t}}\left(p\right) = \emptyset$, then at the end of the round $p$ has at most two attached nodes.
  If $p$ has no attached node, then its \algid{assigned} variable does not contain a valid node, implying that $p$ executes a call to the \algid{LoadBalance} procedure (line 22).
  Since each call only entails one invocation to the \algid{popTop} method, then, taking into account the method's semantics (see Section~\ref{sec:deque-semantics}) it follows that $p$ may only get at most one node from its steal attempt.
  On the other hand, as one can deduce from the definition of the \algid{LoadBalance} procedure, $p$ can only accept at most one node donation during the procedure's invocation\footnote{In fact, $p$ only accepts a donation if its steal attempt failed, and so $p$ can only have at most one attached node.}.
  Thus, at the end of the call $p$ can only have at most two attached nodes.
  To conclude the proof of the second part of the lemma, note that, after the call to the \algid{LoadBalance} procedure returns, $p$ takes no further action during the iteration.
\end{proof}

With this, we start obtaining bounds on the expected number of nodes that are migrated during a round for \wss.
To begin, we obtain both lower and upper bounds on the expected number of nodes stolen for \wss.

\begin{lemma}
  \label{lemma:lower-and-upper-bounds-steals-wss}
  For any round $\overline{t}$ and $p \in U_{\overline{t}}$ during a computation's execution using \wss, $1 - \euler^{- \alpha_{\overline{t}}} \leq \mathrm{E}\bracks{\abs{Stolen_{\overline{t}}^{+}\parens{p}}} \leq \alpha_{\overline{t}}$.
\end{lemma}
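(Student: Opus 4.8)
The plan is to mirror the proof of Lemma~\ref{lemma:lower-and-upper-bounds-steals}, the \ws\ counterpart, since $Stolen^{+}_{\overline{t}}(p)$ concerns only stealing and the spreading mechanism of \wss\ generates no additional steal attempts. First I would read off from Algorithm~\ref{algo:wss} that a processor invokes \algid{popTop} (inside \algid{loadBalance}) during a round if and only if it has no valid assigned node at the start of the iteration, that is, if and only if it is idle during the round. Hence exactly $P\alpha_{\overline{t}}$ steal attempts are made during $\overline{t}$, each targeting a uniformly random processor independently of the others. The spreading actions --- the \algid{CAS} on a donee's \algid{state} flag and the possible \algid{pushBottom} of a node whose donation failed in \algid{handleExtraNode} --- are performed only by \emph{busy} processors and never issue a \algid{popTop}, so this count is unaffected by spreading.

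For the upper bound I would observe that each invocation of \algid{popTop} returns at most one node, and that the expected number of invocations targeting $p$'s deque is $\parens{P\alpha_{\overline{t}}}/P = \alpha_{\overline{t}}$, whence $\mathrm{E}\bracks{\abs{Stolen_{\overline{t}}^{+}\parens{p}}} \leq \alpha_{\overline{t}}$.

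For the lower bound I would model each steal attempt as a ball and each processor's deque as a bin, and apply Lemma~\ref{lemma:bab-lower bounds on the probability that i is non-empty} with $B = P$ bins and $P\alpha_{\overline{t}}$ balls to obtain that $p$'s deque is targeted with probability at least $1 - \euler^{-\alpha_{\overline{t}}}$. To convert a successful targeting into an actual steal I would invoke Lemma~\ref{lemma:wss-base-results}: since $p \in U_{\overline{t}}$, $p$'s deque is non-empty at the beginning of the round. It then remains to check the timing of the deque operations, which is the one genuine point of difference from \ws\ and hence the main obstacle: all \algid{popTop} calls take place during phase I, whereas the owner's own \algid{popBottom} occurs only in phase III (line 16) and any \algid{pushBottom} merely adds to the bottom of the deque. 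Consequently the topmost node present at the start of the round is still present when the phase-I steals arrive, so by the deque semantics (Section~\ref{sec:deque-semantics}) at least one node is removed from $p$'s deque whenever it is targeted. This gives $\mathrm{E}\bracks{\abs{Stolen_{\overline{t}}^{+}\parens{p}}} \geq 1 - \euler^{-\alpha_{\overline{t}}}$ and completes the bounds.

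In short, the only work beyond transcribing the \ws\ argument is to confirm from the phase structure of Algorithm~\ref{algo:wss} that (i) no busy processor ever performs a \algid{popTop}, so the spreading machinery leaves the steal count at $P\alpha_{\overline{t}}$, and (ii) the owner does not remove its top node before the phase-I steals can reach it. Once these two facts are in hand, the balls-in-bins estimate and the per-invocation bound transfer verbatim from the proof of Lemma~\ref{lemma:lower-and-upper-bounds-steals}.
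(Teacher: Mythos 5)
Your proposal is correct and matches the paper's approach exactly: the paper's own proof simply states that the argument is identical to that of Lemma~\ref{lemma:lower-and-upper-bounds-steals}, relying on Lemma~\ref{lemma:wss-base-results} and the definition of Algorithm~\ref{algo:wss}, which is precisely what you carry out. Your two explicit checks --- that spreading never issues a \algid{popTop} so the steal count remains $P\alpha_{\overline{t}}$, and that the owner's \algid{popBottom} in phase III cannot remove the top node before the phase-I steals arrive --- are exactly the details the paper leaves implicit in its appeal to the \ws\ proof.
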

\begin{proof}
  The proof of this result is identical to the proof of Lemma~\ref{lemma:lower-and-upper-bounds-steals}, following from Lemma~\ref{lemma:wss-base-results} and the definition of \wss\ (see Algorithm~\ref{algo:wss}).
\end{proof}

Next, we obtain lower bounds on the expected number of nodes spread by any processor that enables two nodes.
A full proof of this result can be found in the Appendix (Section~\ref{sub:5-proof-lemma-1}).

\begin{lemma}
  \label{lemma:lower-bounds-expected-spreads-p}
  $\forall p \in U_{\overline{t}}$, if $\left|E_{\overline{t}}\left(p\right)\right| = 2$ then $\mathrm{E}\bracks{|Spread^{+}_{\overline{t}}\left(p\right)|} \geq \frac{\alpha_{\overline{t}}^{2}}{1 - \alpha_{\overline{t}}} \left(1 - \euler^{- \left(1 - \alpha_{\overline{t}}\right)}\right)$.
\end{lemma}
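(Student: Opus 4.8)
The plan is to reduce the statement to a single probability and then bound that probability with a balls-and-bins argument that respects the three-phase structure of \wss. First I would observe that a processor with $|E_{\overline{t}}(p)| = 2$ keeps one enabled node and attempts to donate only the remaining one through \algid{handleExtraNode}, so $|Spread^{+}_{\overline{t}}(p)| \in \braces{0,1}$ and hence $\mathrm{E}[|Spread^{+}_{\overline{t}}(p)|] = \Pr\{p \text{ donates successfully}\}$. By inspection of Algorithm~\ref{algo:wss}, $p$ donates successfully iff the \algid{donee} it picks uniformly at random still has its \algid{state} flag equal to \textsc{idle} when $p$ executes its \algid{CAS} in Phase~II, and $p$ wins that \algid{CAS}.

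The key structural observation — and the reason the bound carries a factor $\alpha_{\overline{t}}^{2}$ rather than $\alpha_{\overline{t}}$ — is that the valid donation targets in Phase~II are not all $P\alpha_{\overline{t}}$ idle processors, but only those whose Phase-I steal attempt failed: a thief that steals a node sets its flag to \textsc{working} at the end of Phase~I (see \algid{loadBalance}), so a \algid{CAS} against it necessarily fails. I would therefore lower bound the probability that a fixed idle processor $d$ is still \textsc{idle} in Phase~II. Since by Lemma~\ref{lemma:wss-base-results} an idle processor has an empty deque, $d$'s steal fails whenever $d$ targets an idle processor; as $P\alpha_{\overline{t}}$ of the $P$ possible targets are idle, this happens with probability at least $\alpha_{\overline{t}}$, giving $\Pr\{d \text{ still idle}\} \geq \alpha_{\overline{t}}$.

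Next I would handle the contention among donors. By the \algid{CAS} semantics, each still-idle processor targeted by at least one donor accepts exactly one node, so the total number $X$ of successful donations in the round equals the number of still-idle processors hit by at least one donor. Treating the $D$ donors as balls and the $P$ processors as bins (ratio $D/P$), Lemma~\ref{lemma:bab-lower bounds on the probability that i is non-empty} gives that a fixed $d$ is hit by some donor with probability $1 - (1-1/P)^{D} \geq 1 - \euler^{-D/P}$. Because "$d$ still idle'' depends only on thieves' victim choices while "$d$ hit by a donor'' depends only on donors' donee choices, these events are independent, and summing over the $P\alpha_{\overline{t}}$ idle processors yields $\mathrm{E}[X] \geq P\alpha_{\overline{t}}\cdot\alpha_{\overline{t}}\cdot(1-\euler^{-D/P})$. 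Since all donors run identical code and the \algid{CAS} tie-break is uniform, the donors are exchangeable, so $\mathrm{E}[|Spread^{+}_{\overline{t}}(p)|] = \mathrm{E}[X]/D \geq \alpha_{\overline{t}}^{2}(1-\euler^{-D/P})/(D/P)$. Finally, bounding $D$ by the number of busy processors $P(1-\alpha_{\overline{t}})$ and using that $(1-\euler^{-\beta})/\beta$ decreases in $\beta$ delivers the claimed $\frac{\alpha_{\overline{t}}^{2}}{1-\alpha_{\overline{t}}}\parens{1-\euler^{-(1-\alpha_{\overline{t}})}}$.

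The main obstacle I anticipate is getting the Phase-I/Phase-II interaction exactly right: one must justify that "$d$ still idle'' and "$d$ hit / $p$ wins the \algid{CAS}'' rest on disjoint, independent sources of randomness (thief victim selection versus donor donee selection and tie-breaking), and that the per-donor reduction $\mathrm{E}[X]/D$ is legitimate by exchangeability of the donors. A secondary technical point is the worst-case choice $D = P(1-\alpha_{\overline{t}})$: since adding donors only lowers each donor's success probability, the bound must be evaluated at the largest admissible $D$, which is where the monotonicity of $(1-\euler^{-\beta})/\beta$ and the standard estimate $(1-1/P)^{P(1-\alpha_{\overline{t}})} \leq \euler^{-(1-\alpha_{\overline{t}})}$ enter.
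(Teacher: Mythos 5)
Your proposal is correct and follows essentially the same route as the paper's proof: a balls-and-bins argument in which the $\alpha_{\overline{t}}^{2}$ factor arises because a donation can only succeed on an idle processor whose Phase-I steal failed, followed by dividing the expected total number of successful donations by the number of donors via exchangeability, with the worst case being that all $P\left(1-\alpha_{\overline{t}}\right)$ busy processors make donation attempts. The only cosmetic differences are that you argue per idle processor (independence of ``still idle'' and ``hit by a donor'') where the paper conditions on the number of idle-turned-working processors and applies the law of total expectation, and you minimize an explicit $D$-dependent bound via monotonicity of $\left(1-\euler^{-\beta}\right)/\beta$ where the paper instead proves up front that the donor's success probability is smallest when every busy processor donates.
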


The following Lemma, together with Lemmas~\ref{lemma:lower-and-upper-bounds-steals-wss} and \ref{lemma:lower-bounds-expected-spreads-p}, allow us to obtain bounds on the expected number of nodes that are migrated from a processor.
A full proof can be found in the Appendix (Section~\ref{sub:5-proof-lemma-2}).
\begin{lemma}
  \label{lemma:wss-nodes-stolen-spread-round-intersection}
  For \wss, at any round $\overline{t}$, $Stolen^{+}_{\overline{t}}\left(p\right) \cap Spread^{+}_{\overline{t}}\left(p\right) = \emptyset$.
\end{lemma}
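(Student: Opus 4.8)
The plan is to show the two sets are disjoint by tracking, for each node counted in $Spread^{+}_{\overline{t}}(p)$ or in $Stolen^{+}_{\overline{t}}(p)$, the physical location from which it is removed, and arguing that these locations are mutually exclusive within a single round. By Definition~\ref{def:nodes-stolen}, a node belongs to $Stolen^{+}_{\overline{t}}(p)$ only if it is removed from $p$'s deque by an invocation of \algid{popTop} issued by some other processor; by Definition~\ref{def:nodes-spread}, a node belongs to $Spread^{+}_{\overline{t}}(p)$ only if $p$ successfully donates it, i.e.\ the donee transfers it out of $p$'s \algid{donation} cell. Thus it suffices to show that no node is simultaneously removed from $p$'s deque (by a steal) and handed off from $p$'s \algid{donation} cell (by a spread) during the same round.

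First I would isolate the only code path through which $p$ spreads a node. Inspecting \algid{handleExtraNode} in Algorithm~\ref{algo:wss}, the sole node $p$ ever places in its \algid{donation} cell during a round is the extra enabled node $enabled[1]$, and this happens only when $p$ enables two nodes during the first phase. The \algid{CAS} on the donee's \algid{state} flag determines the fate of this node: on success it remains in the \algid{donation} cell to be collected by the donee (and so contributes to $Spread^{+}_{\overline{t}}(p)$), whereas on failure $p$ resets \algid{donation} to \textsc{none} and pushes the node into the bottom of its own deque via \algid{pushBottom}. Hence a node is counted in $Spread^{+}_{\overline{t}}(p)$ precisely when its donation \algid{CAS} succeeds, and in that case the node is never inserted into $p$'s deque during $\overline{t}$.

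The key step is then immediate: a node $\mu \in Spread^{+}_{\overline{t}}(p)$ never resides in $p$'s deque during round $\overline{t}$ (it is a freshly enabled node that passes straight from the \algid{donation} cell to the donee, bypassing the deque), so no \algid{popTop} can return $\mu$ from $p$'s deque, and therefore $\mu \notin Stolen^{+}_{\overline{t}}(p)$. This already yields $Stolen^{+}_{\overline{t}}(p) \cap Spread^{+}_{\overline{t}}(p) = \emptyset$.

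As a consistency check, and to rule out the degenerate possibility that the extra enabled node is both pushed and stolen within the same round, I would invoke the phase structure established in Section~\ref{sec:strategy-randomized-analysis}: every \algid{popTop} on $p$'s deque is issued during Phase I (inside \algid{loadBalance}), whereas the failed-donation \algid{pushBottom} occurs only in Phase III; consequently, even a node whose donation fails cannot be stolen in the same round, although such a node is in any case excluded from $Spread^{+}_{\overline{t}}(p)$ by definition. The main obstacle is purely bookkeeping: one must verify that $enabled[1]$ is the unique node that can ever enter $p$'s \algid{donation} cell within a round and that the successful-\algid{CAS} branch never routes it through \algid{pushBottom}. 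Once the success/failure dichotomy of the donation \algid{CAS} is pinned down against the two disjoint exit routes (the \algid{donation} cell versus the deque), disjointness follows with no quantitative estimate required.
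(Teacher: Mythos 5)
Your proof is correct, but it takes a genuinely more operational route than the paper's. The paper lifts both sets into its round framework: from Definitions~\ref{def:nodes-stolen} and~\ref{def:nodes-spread} and Algorithm~\ref{algo:wss} it observes $Stolen^{+}_{\overline{t}}\left(p\right) \subseteq R_{\overline{t}}\left(p\right)$ (a stolen node must have been attached to $p$ when the round began) and $Spread^{+}_{\overline{t}}\left(p\right) \subseteq E_{\overline{t}}\left(p\right)$ (a donated node is freshly enabled), and then concludes in one line by invoking the already-proven framework fact $R_{\overline{t}}\left(p\right) \cap E_{\overline{t}}\left(p\right) = \emptyset$ (Claim~\ref{claim:progression-connecting-lemma-1}). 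You instead justify disjointness by tracking physical locations inside the implementation: a node in $Spread^{+}_{\overline{t}}\left(p\right)$ travels from the \algid{donation} cell straight to the donee and never enters $p$'s deque during round $\overline{t}$, whereas every node in $Stolen^{+}_{\overline{t}}\left(p\right)$ is returned by a \algid{popTop} on that deque; your phase-timing remark (steals occur in Phase I, the failed-donation \algid{pushBottom} only in Phase III) correctly closes the remaining corner case of a node whose donation fails. Both arguments rest on the same two characterizations of the sets, but the final disjointness step is grounded differently: state-based in the paper (a node attached at the start of a round is \textbf{ready} then, so it cannot also be enabled during that round), location-based in yours (deque versus \algid{donation} cell). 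What your version buys is self-containedness and an explicit picture of the data flow, with no appeal to the appendix claim; what the paper's buys is brevity and reuse of framework machinery proven once, by induction over steps, which remains valid independently of implementation details such as where the extra enabled node is buffered before being handed off.
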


The next result states an inequality that will be used to prove that \wss\ is \emph{algorithm short-term stable} wrt interval $\left[0,7375; 1\right[$, and its full proof can be found in the Appendix (Section~\ref{sub:5-proof-lemma-3}).

\begin{lemma}
\label{lemma:wss-inequation-proof}
$\forall \alpha \in \left[0,7375; 1\right[, \quad 1 < 1 - \euler^{- \alpha} + \frac{\alpha^{2}}{1 - \alpha} \left(1 - \euler^{- (1 - \alpha)}\right)$.
\end{lemma}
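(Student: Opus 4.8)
The plan is to reduce the stated inequality to the positivity of a single auxiliary function and then to split the argument into a clean monotonicity part and a delicate numerical part. Writing $\beta = 1-\alpha$, define
\[
  g(\alpha) = \frac{\alpha^{2}}{1-\alpha}\parens{1 - \euler^{-(1-\alpha)}} - \euler^{-\alpha},
\]
so that the claim $1 < 1 - \euler^{-\alpha} + \frac{\alpha^{2}}{1-\alpha}\parens{1 - \euler^{-(1-\alpha)}}$ is exactly $g(\alpha) > 0$. I would first show that $g$ is strictly increasing on $\left]0;1\right[$, and then verify $g(0,7375) > 0$; since $g$ is increasing, this immediately gives $g(\alpha) \geq g(0,7375) > 0$ for every $\alpha \in \left[0,7375; 1\right[$, which is what we want.

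For the monotonicity, set $h(\beta) = \frac{1-\euler^{-\beta}}{\beta}$, so that $g(\alpha) = \alpha^{2}\, h(1-\alpha) - \euler^{-\alpha}$. The key observation is that $h$ is decreasing on $\left]0;\infty\right[$, which I would establish cleanly through the integral representation $h(\beta) = \int_{0}^{1} \euler^{-\beta u}\, du$ (obtained from $1-\euler^{-\beta} = \int_{0}^{\beta}\euler^{-s}\,ds$ and the substitution $s = \beta u$): for each fixed $u \in \left]0;1\right]$ the integrand $\euler^{-\beta u}$ is decreasing in $\beta$, hence so is the integral. Consequently $\alpha \mapsto h(1-\alpha)$, being the composition of the decreasing map $h$ with the decreasing map $\alpha \mapsto 1-\alpha$, is increasing and positive, while $\alpha \mapsto \alpha^{2}$ is positive and increasing on $\left]0;1\right[$. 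A product of two positive increasing functions is increasing, and $\alpha \mapsto -\euler^{-\alpha}$ is increasing as well, so $g$ is a sum of increasing functions and therefore strictly increasing. This disposes of the entire interval except its left endpoint, and avoids ever computing $g'$ explicitly.

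The remaining step, and the real obstacle, is certifying $g(0,7375) > 0$: numerically $g(0,7375) \approx 6{,}7 \times 10^{-5}$, so the inequality is razor-thin and $0,7375$ is essentially the smallest convenient decimal above the true root of $g$. To make the check rigorous I would work with the exact rationals $\alpha = \tfrac{59}{80}$ and $1-\alpha = \tfrac{21}{80}$, note that $\frac{\alpha^{2}}{1-\alpha} = \frac{3481}{1680}$ exactly, and bound the two exponentials by truncated alternating Taylor series. Since the terms $x^{n}/n!$ decrease for $0 < x < 1$, a partial sum ending on a positive term overestimates $\euler^{-x}$; I would use such an upper bound for $\euler^{-0,2625}$ (which yields a lower bound on $1 - \euler^{-(1-\alpha)}$) and for $\euler^{-0,7375}$ (to upper bound the subtracted term). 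Concretely, bounding $\euler^{-0,2625} \leq \sum_{n=0}^{4}(-1)^{n}(0,2625)^{n}/n!$ and $\euler^{-0,7375} \leq \sum_{n=0}^{6}(-1)^{n}(0,7375)^{n}/n!$ already leaves a strictly positive rational lower bound for $g(0,7375)$ (a certified gap of roughly $2{,}5\times 10^{-5}$), completing the proof. The difficulty here is purely one of precision bookkeeping: because the genuine gap is of order $10^{-5}$, a coarse truncation such as $1-\euler^{-\beta} \geq \beta - \beta^{2}/2$ fails at this endpoint, so one must retain enough Taylor terms (quartic for $\euler^{-0,2625}$, sextic for $\euler^{-0,7375}$) and track the rational arithmetic exactly to keep the certified bound positive.
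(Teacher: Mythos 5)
Your proof is correct. It shares the paper's overall skeleton --- show that $s(\alpha) = -\euler^{-\alpha} + \frac{\alpha^{2}}{1-\alpha}\left(1 - \euler^{-(1-\alpha)}\right)$ is strictly increasing on the relevant interval, then certify positivity at the left endpoint --- but both halves are executed genuinely differently. For the monotonicity, the paper differentiates $s$ explicitly, isolates the auxiliary fraction $t(\alpha) = \frac{1+\euler^{-1+\alpha}\alpha\left(-2+\alpha^{2}\right)}{\left(-1+\alpha\right)^{2}}$ appearing in $s'$, differentiates a second time, and needs a separate sign claim for $t'$ on $\left[0.7375;1\right[$; your integral representation $\frac{1-\euler^{-\beta}}{\beta} = \int_{0}^{1}\euler^{-\beta u}\,du$ replaces that entire two-level computation with a derivative-free structural argument (products and sums of positive increasing functions), which is shorter, harder to get wrong, and yields monotonicity on all of $\left]0;1\right[$ rather than only on $\left[0.7375;1\right[$ (strictness coming from the $-\euler^{-\alpha}$ term). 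For the endpoint, the paper merely asserts $s(0.7375) > 0.00006$ by unverified numerical evaluation; since the true margin is only about $6.7\times 10^{-5}$, your exact-rational computation with $\alpha = 59/80$, $\frac{\alpha^{2}}{1-\alpha} = \frac{3481}{1680}$, combined with the alternating-series bounds (partial sums ending on a positive term overestimate $\euler^{-x}$ for $0<x<1$, applied at orders four and six), upgrades this to a certified positive gap of roughly $2.5\times 10^{-5}$ --- and your remark that the cruder bound $1-\euler^{-\beta} \geq \beta - \beta^{2}/2$ is too lossy at this endpoint is accurate. What the paper's route buys is only that it stays within routine single-variable differentiation; on rigor (the endpoint check is where the lemma could actually fail, given how thin the margin is) and on transparency, your argument improves on it.
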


Finally, we can prove that \wss\ overcomes the limitations of \ws.
% The following result implies that \wss\ overcomes the limitations of \ws, and its full proof can be found in the Appendix (Section~\ref{sub:5-proof-theorem-2}).

\begin{theorem}
  \label{thr:wss-algorithm-short-term-stability}
  \wss\ (as defined in Algorithm~\ref{algo:wss}) is algorithm short-term stable wrt $[0,7375;1[$.
\end{theorem}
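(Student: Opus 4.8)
The plan is to verify, for every round $\overline{t}$ with $\alpha_{\overline{t}} \in \left[0,7375;1\right[$, the two conjuncts required by Definition~\ref{def:short-term-stable-algorithm}. The second conjunct is immediate: by Lemma~\ref{lemma:wss-base-results} every $p \in S_{\overline{t}}$ satisfies $\abs{R_{\overline{t+1}}\parens{p}} \leq 2$, and since rounds have length $L \geq 1$ we have $2 \leq L+1$, so $\abs{R_{\overline{t+1}}\parens{p}} \leq L+1$ for all self-stable processors. The substance of the proof is the first conjunct, namely the expected strict decrease of the unexecuted work attached to $U_{\overline{t}}$.

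For the first conjunct I would reduce the set-level claim to a per-processor claim via Lemma~\ref{lemma:stability-generalize}: it suffices to show that for each $p \in U_{\overline{t}}$ we have $\mathrm{E}\bracks{\abs{R_{\overline{t+1}}\parens{p} - C_{\overline{t+1}}\parens{p}}} < \abs{R_{\overline{t}}\parens{p} - C_{\overline{t}}\parens{p}}$. Fix such a $p$. By Lemma~\ref{lemma:wss-base-results}, $M^{-}_{\overline{t}}\parens{p} = \emptyset$ and $p$ executes exactly one node, so $\abs{C_{\overline{t}}\parens{p}} = 1$. Feeding $M^{-}_{\overline{t}}\parens{p} = \emptyset$ into the Round Progression Lemma (Lemma~\ref{lemma:r-next}) together with the same disjointness and containment relations used in the proof of Theorem~\ref{thr:ws-not-stable} yields the per-realization identity $\abs{R_{\overline{t+1}}\parens{p} - C_{\overline{t+1}}\parens{p}} = \abs{R_{\overline{t}}\parens{p} - C_{\overline{t}}\parens{p}} + \abs{E_{\overline{t}}\parens{p}} - \abs{C_{\overline{t+1}}\parens{p}} - \abs{M^{+}_{\overline{t}}\parens{p}}$. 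Since $\abs{E_{\overline{t}}\parens{p}}$ is fixed by the node $p$ executes, taking expectations and using linearity reduces the target inequality to $\abs{E_{\overline{t}}\parens{p}} < \mathrm{E}\bracks{\abs{C_{\overline{t+1}}\parens{p}}} + \mathrm{E}\bracks{\abs{M^{+}_{\overline{t}}\parens{p}}}$, i.e.\ the expectation form of the left condition of Corollary~\ref{corollary:connecting-corollary}. Finally, by Lemmas~\ref{lemma:wss-nodes-stolen-spread-round} and \ref{lemma:wss-nodes-stolen-spread-round-intersection} the migrated set splits disjointly, so $\mathrm{E}\bracks{\abs{M^{+}_{\overline{t}}\parens{p}}} = \mathrm{E}\bracks{\abs{Stolen^{+}_{\overline{t}}\parens{p}}} + \mathrm{E}\bracks{\abs{Spread^{+}_{\overline{t}}\parens{p}}}$.

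I would then close by a case split on $\abs{E_{\overline{t}}\parens{p}} \in \braces{0,1,2}$. When $\abs{E_{\overline{t}}\parens{p}} \leq 1$ the inequality collapses to $\mathrm{E}\bracks{\abs{M^{+}_{\overline{t}}\parens{p}}} > 0$, which holds because $p \in U_{\overline{t}}$ keeps a non-empty deque throughout the steal phase, so by Lemma~\ref{lemma:lower-and-upper-bounds-steals-wss} $\mathrm{E}\bracks{\abs{Stolen^{+}_{\overline{t}}\parens{p}}} \geq 1 - \euler^{-\alpha_{\overline{t}}} > 0$. The decisive case is $\abs{E_{\overline{t}}\parens{p}} = 2$: here $p$ assigns one enabled node and therefore executes in the next round, so $\abs{C_{\overline{t+1}}\parens{p}} = 1$, and the target becomes $1 < \mathrm{E}\bracks{\abs{Stolen^{+}_{\overline{t}}\parens{p}}} + \mathrm{E}\bracks{\abs{Spread^{+}_{\overline{t}}\parens{p}}}$. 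Substituting the lower bounds of Lemmas~\ref{lemma:lower-and-upper-bounds-steals-wss} and \ref{lemma:lower-bounds-expected-spreads-p} gives the sufficient condition $1 < \parens[\big]{1 - \euler^{-\alpha_{\overline{t}}}} + \frac{\alpha_{\overline{t}}^{2}}{1 - \alpha_{\overline{t}}}\parens[\big]{1 - \euler^{-(1-\alpha_{\overline{t}})}}$, which is exactly Lemma~\ref{lemma:wss-inequation-proof} and holds for all $\alpha_{\overline{t}} \in \left[0,7375;1\right[$. With the per-processor strict decrease established in every case, Lemma~\ref{lemma:stability-generalize} delivers the first conjunct and the theorem follows.

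I expect the main obstacle to be the case $\abs{E_{\overline{t}}\parens{p}} = 2$, precisely the scenario in which plain \ws\ fails (Theorem~\ref{thr:ws-not-stable}); making it succeed requires that the spreading mechanism supply enough additional expected migrations. The two delicate points there are that the stolen and spread lower bounds may be \emph{added}, which relies on the disjointness of $Stolen^{+}_{\overline{t}}\parens{p}$ and $Spread^{+}_{\overline{t}}\parens{p}$ from Lemma~\ref{lemma:wss-nodes-stolen-spread-round-intersection}, and that the threshold $0,7375$ is forced entirely by the numerical inequality of Lemma~\ref{lemma:wss-inequation-proof}, below which the combined lower bound no longer exceeds $1$.
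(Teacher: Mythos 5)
Your proposal is correct and follows essentially the same route as the paper's own proof: both reduce the set-level claim to a per-processor one via Lemma~\ref{lemma:stability-generalize}, use Lemma~\ref{lemma:wss-base-results} both for the self-stable bound $\abs{R_{\overline{t+1}}\parens{p}} \leq 2 \leq L+1$ and for $M^{-}_{\overline{t}}\parens{p} = \emptyset$, and then close the case split on $\abs{E_{\overline{t}}\parens{p}} \in \braces{0,1,2}$ with the stolen/spread decomposition (Lemmas~\ref{lemma:wss-nodes-stolen-spread-round} and~\ref{lemma:wss-nodes-stolen-spread-round-intersection}), the lower bounds of Lemmas~\ref{lemma:lower-and-upper-bounds-steals-wss} and~\ref{lemma:lower-bounds-expected-spreads-p}, and the numerical inequality of Lemma~\ref{lemma:wss-inequation-proof}. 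The only differences are cosmetic: you re-derive the reduction to $\abs{E_{\overline{t}}\parens{p}} < \mathrm{E}\bracks{\abs{C_{\overline{t+1}}\parens{p}}} + \mathrm{E}\bracks{\abs{M^{+}_{\overline{t}}\parens{p}}}$ from the Round Progression Lemma instead of citing Corollary~\ref{corollary:connecting-corollary} directly (arguably making the expectation step more explicit than the paper does), and you merge the cases $\abs{E_{\overline{t}}\parens{p}} = 0$ and $\abs{E_{\overline{t}}\parens{p}} = 1$, which the paper treats separately.
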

\begin{proof}
  Recall that $L$ denotes the length of each round, and that by definition $L \geq 1$.
  Then, from Lemma~\ref{lemma:wss-base-results}, it follows that $\forall p \in S_{\overline{t}}, \left|R_{\overline{t+1}}\left(p\right)\right| \leq 2 \leq L + 1$.
  Furthermore, taking into account Lemma~\ref{lemma:stability-generalize} and Corollary~\ref{corollary:connecting-corollary}, it follows that to prove this theorem it suffices to show that for any round $\overline{t}$ such that $\alpha_{\overline{t}} \in [0,7375;1[$, we have $\forall p \in U_{\overline{t}}, \quad \left|E_{\overline{t}}\left(p\right)\right| < \mathrm{E}\bracks{\left|C_{\overline{t+1}}\left(p\right)\right| + \abs[\big]{M^{+}_{\overline{t}}\left(p\right)}}$.

  For an arbitrary round $\overline{t}$, consider any processor $p \in U_{\overline{t}}$ (\textit{i.e.} $p$ is a processor that is \emph{non-self-stable} at round $\overline{t}$).
  Due to our conventions related with computations' structure, $\left|E_{\overline{t}}\left(p\right)\right|$ is either $0$, $1$, or $2$:
  \begin{itemize}[leftmargin=1cm,noitemsep,topsep=5pt,parsep=5pt,partopsep=0pt]
  \item If $\left|E_{\overline{t}}\left(p\right)\right| = 0$ then, by Lemma~\ref{lemma:lower-and-upper-bounds-steals-wss} it follows $\left|E_{\overline{t}}\left(p\right)\right| < \mathrm{E}\bracks{\left|C_{\overline{t+1}}\left(p\right)\right| + \abs[\big]{M^{+}_{\overline{t}}\left(p\right)}}$.
  \item If $\left|E_{\overline{t}}\left(p\right)\right| = 1$ then, by the specification of the \wss\ algorithm, it follows $\left|C_{\overline{t+1}}\left(p\right)\right| = 1$.
    Taking into account Lemma~\ref{lemma:lower-and-upper-bounds-steals-wss}, we deduce $\left|E_{\overline{t}}\left(p\right)\right| < \mathrm{E}\bracks{\left|C_{\overline{t+1}}\left(p\right)\right| + \abs[\big]{M^{+}_{\overline{t}}\left(p\right)}}$.
  \item By the specification of the \wss\ algorithm, it follows that if $\left|E_{\overline{t}}\left(p\right)\right| = 2$ then $\left|C_{\overline{t+1}}\left(p\right)\right| = 1$.
    Thus, to prove this case it suffices to show $1 < \mathrm{E}\bracks{\abs[\big]{M^{+}_{\overline{t}}\left(p\right)}}$.
    We now prove just that.
    By Lemmas~\ref{lemma:wss-nodes-stolen-spread-round} and \ref{lemma:wss-nodes-stolen-spread-round-intersection}, it follows $\abs[\big]{M^{+}_{\overline{t}}\left(p\right)} = \abs[\big]{Stolen^{+}_{\overline{t}}\left(p\right)} + \abs[\big]{Spread^{+}_{\overline{t}}\left(p\right)}$,
and by Lemmas~\ref{lemma:lower-and-upper-bounds-steals-wss} and \ref{lemma:lower-bounds-expected-spreads-p}, it follows
$\mathrm{E}\bracks{\abs[\big]{M^{+}_{\overline{t}}\left(p\right)}} \geq 1 - \euler^{- \alpha_{\overline{t}}} + \frac{\alpha_{\overline{t}}^{2}}{1 - \alpha_{\overline{t}}} \left(1 - \euler^{- \left(1 - \alpha_{\overline{t}}\right)}\right)$.
Thus, taking into account Lemma~\ref{lemma:wss-inequation-proof}, having $\alpha = \alpha_{\overline{t}}$, we conclude the proof of Theorem~\ref{thr:wss-algorithm-short-term-stability}.
    % Throughout the rest of the proof, assume that $p$ is a processor that enables two nodes. % and that we are referring to the \wss\ scheduler, depicted in Algorithm~\ref{algo:wss}.
  \end{itemize}
  
  %By the specification of the scheduler, since $p$ enables $2$ nodes, the processor attempts to spread the node it did not assign during the second phase.
  %As one might note, by the semantics of the spreading mechanism, the only processors that are willing to accept a node are the ones who are idle and whose steal attempt (that took place during the first phase) failed.
  %Furthermore, each such processor only accepts at most a single node.

\end{proof}

% With this, we proved that \wss\ overcomes the limitations of \ws.
% For instance, and according to Theorem~\ref{thr:ws-not-stable}, there are computations that, when scheduled by the \ws\ algorithm, lead to an unbounded accumulation of work in certain processor's deques, regardless of the ratio of idle processors.
% These are essentially computations where work generation is irregular, corresponding, 

%  , for \wss\ there are no such computations.

%\begin{proof}[Proof Sketch]
%  Part of the proof of this theorem follows from a result that is equivalent to Lemma~\ref{lemma:ws-base-results}, but concerning \wss.
%  The rest of the proof uses Corollary~\ref{corollary:connecting-corollary}, and is by cases on how many nodes a processor $p \in U_{\overline{t}}$ might enable during any round $\overline{t}$, and the most interesting case is wh%en $p$ enables two nodes.
%  Noting that $p$'s deque is not empty, similarly to \ws\ we obtain lower bounds on $\mathrm{E}\bracks{\abs{Stolen_{\overline{t}}^{+}\parens{p}}}$.
%  After that, we obtain lower bounds on $\mathrm{E}\bracks{|Spread^{+}_{\overline{t}}\left(p\right)|}$.
%  To conclude the proof we show that $\forall \alpha \in \left[0,7375; 1\right[$, it follows
%  $|E_{\overline{t}} \left(p\right)| < |C_{\overline{t+1}} \left(p\right)| + \mathrm{E}\bracks{|Stolen^{+}_{\overline{t}}\left(p\right)|} + \mathrm{E}\bracks{|Spread^{+}_{\overline{t}}\left(p\right)|}$.
%\end{proof}

\section{Related work}
\label{sec:related-work}
To the best of our knowledge, there is no work that analyzes the performance of online structured computation schedulers, on a round basis, depending solely on the ratio of idle processors.

Most theoretical work dealing with the study of online structured computation schedulers, has focused on proving properties related with the (complete) execution of computations by \ws\ and variants.
Blumofe~\textit{et al.} proved that \ws\ is optimal up to a constant factor in terms of space requirements, expected execution time, and expected communication costs~\cite{DBLP:journals/jacm/BlumofeL99}.
Arora~\textit{et al.} showed that \ws\ is optimal even for multiprogrammed environments~\cite{DBLP:conf/spaa/AroraBP98,DBLP:journals/mst/AroraBP01}. % --- viewing the kernel scheduler as an adversary of \ws.
Agrawal~\textit{et al.} introduced a variant of \ws\ that avoids unnecessary load balancing cycles in order to achieve higher efficiency~\cite{DBLP:conf/ipps/AgrawalHHL07,DBLP:journals/tocs/AgrawalLHH08}.
The authors proved that \ws\ is capable of maintaining nearly optimal bounds, while reducing the number of cycles during which processors are not making progress on a computation's execution (corresponding to load balancing cycles), down to a constant factor away from the computation's total amount of work.
Regarding data locality, Acar~\textit{et al.} obtained both lower and upper bounds on the number of cache misses using \ws~\cite{DBLP:journals/mst/AcarBB02}.
More recent research has been focusing on reducing the synchronization overheads of \ws~\cite{DBLP:conf/ppopp/AcarCR13}, mainly by eliminating synchronization for local deque operations (\textit{i.e.}~eliminating the need for synchronization when processors work locally on their own deque).
Even more recently, Muller~\textit{et al.} studied the performance of \ws\ for computations that include latency operations (such as receiving input from a user), obtaining promising results~\cite{DBLP:conf/spaa/MullerA16}.
On the other hand, most practical work that deals with the scheduling of structured computations has focused either on the improvement of current \ws\ implementations --- increasing data locality~\cite{DBLP:journals/mst/AcarBB02,DBLP:conf/ipps/GuoZCS10,DBLP:conf/europar/QuintinW10,DBLP:journals/ipl/SuksompongLS16}, reducing synchronization overheads~\cite{DBLP:conf/ppopp/AcarCR13,DBLP:conf/ppopp/HiraishiYUY09,DBLP:conf/ppopp/MichaelVS09,DBLP:conf/asplos/MorrisonA14,DBLP:conf/europar/DijkP14}, etc ---, or on the development of libraries and languages implementing \ws\ on both shared memory environments~\cite{pasl:parallel-algorithm-scheduling-library,DBLP:journals/jpdc/BlumofeJKLRZ96,DBLP:journals/sigarch/Faxen08,DBLP:conf/dac/Leiserson09,DBLP:conf/spaa/MullerA16} and distributed settings~\cite{DBLP:conf/icpp/CongKKLSW08,DBLP:conf/sc/DinanLSKN09,DBLP:conf/hpdc/LifflanderKK12,DBLP:conf/europar/QuintinW10}.

While, for the execution of structured computations, work generation depends on what has already been executed, for independent task scheduling, work generation (or, more correctly, task arrival) is assumed to be independent from what tasks processors already executed~\cite{DBLP:journals/rsa/AdlerCMR98,DBLP:journals/siamcomp/AzarBKU99,DBLP:journals/siamcomp/BerenbrinkFG03,DBLP:conf/spaa/LulingM93,DBLP:conf/spaa/Mitzenmacher98,DBLP:conf/focs/MitzenmacherPS02}. %cite all papers i know on balls and bins, and berenbrink's and mitzenmacher's differential equations.
In fact, much of the work in this area consists on studying the effectiveness of different strategies (that rely on randomness) for placing $n$ balls (each representing a task) into $n$ bins (each representing a processor)~\cite{DBLP:journals/rsa/AdlerCMR98,DBLP:journals/siamcomp/AzarBKU99,DBLP:conf/spaa/LulingM93,DBLP:conf/focs/MitzenmacherPS02}, being that a strategy's effectiveness is measured according to the number of balls that the fullest bin is expected to have: the lower this number is, the more effective the strategy is.
Of course, this type of models, despite being suitable for modelling independent task schedulers, are far from being apt to model the performance of structured computation schedulers (for example, note that in the execution of a structured computation, work is generated per processor).
Within the area of independent task scheduling, perhaps the work most closely related to ours is on the performance analysis of online independent task schedulers~\cite{DBLP:journals/siamcomp/BerenbrinkFG03,DBLP:conf/spaa/Mitzenmacher98}.
Yet, to the best of our knowledge, all the analyzes made to these schedulers rely upon the assumption that tasks arrive to the system according to some random distribution (typically Poisson's distribution).
For instance, Mitzenmacher proposed a simple but powerful scheme to analyze \textbf{independent task} work stealing schedulers, that uses differential equations~\cite{DBLP:conf/spaa/Mitzenmacher98}.
This scheme allows to study not only the most basic work stealing schedulers (of independent tasks), but also more complex variants (\textit{e.g.}~allowing processors to repeat a steal whenever its steal attempt aborted).
Nevertheless, the proposed scheme relies on the assumption that work is generated according to some random distribution, and so it is not suitable for modelling the behavior of structured computation schedulers.
Berenbrink~\textit{et al.} study the performance of independent task work stealing schedulers, modelling the system as a Markov chain, whose states denote the number of tasks attached to each processor of the system~\cite{DBLP:journals/siamcomp/BerenbrinkFG03}.
The authors proved that the work stealing scheduler for independent tasks, where each steal is allowed to take up to half of a processor's work, is stable for a long term execution.
Unfortunately, their analysis also relies on the assumption that tasks arrive at the system according to a random distribution, and so it is not apt to model the performance of structured computation schedulers.
In addition, the authors assume that the number of tasks generated at each round is at most the number of processors, which, taking into account the standard conventions regarding the structure of computations~\cite{DBLP:conf/spaa/AroraBP98,DBLP:journals/mst/AroraBP01,DBLP:journals/jacm/BlumofeL99,DBLP:conf/ppopp/AcarCR13}, is not realistic for modelling schedulers of structured computations.
% Nonetheless, their result was the 

Although it may not be entirely straightforward, it is possible to use our methodology to model the steal-half work stealing algorithm.
To do so, each steal would have to be divided into a sequence of scheduling iterations, such that during each iteration the thief transferred a node from its victim.
However, transferring half of a processor's work may take some time, which not only implies that the thief will have to wait until it can begin executing what it stole, but it also means that either concurrent steal attempts to the same deque are delayed (to avoid duplicate steals), or thieves have to first transfer the work they intend to steal from their victims and only then attempt to commit the steal.
Regarding the later option, note that if a thief is transferring work from one of the only processors that is generating work, then the steal attempt is likely to fail.
Moreover, since during each round a processor can enable two nodes, then, it would still be possible that the processor whose deque was being stolen generated a large amount of work.

\section{Conclusion}
\label{sec:conclusion}
We introduced a formal framework for the performance analysis of structured computation schedulers, and defined an appropriate criterion for measuring the performance of online scheduling algorithms: \textit{algorithm short-term stability}.
Moreover, we introduced a simple and powerful method that allows to analyze the performance of these schedulers, and have demonstrated its convenience by using it with two different ends:
\begin{enumerate*}
\item proving that the performance of \ws\ is indeed limited; and
\item analyzing the performance of \wss.
\end{enumerate*}
Although \wss\ is a purely theoretical algorithm, its analysis gave us insight on how to possibly overcome the limitation of \ws.
Nevertheless, the greedy spreading strategy of the algorithm has a severe limitation that makes us question its practical value:
even if every processor is busy, whenever a processor generates work it makes a spread attempt.
This not only makes processors incur in unnecessary overheads (that, for modern computer architectures, are unduly large) but even more importantly, it entails a serious drawback concerning the communication costs of the algorithm.
Consequently, it is still an open problem to come up with a practical algorithm that overcomes \ws's limitation while maintaining its asymptotically optimal expected execution time and communication costs, and its low space requirements.

\bibliographystyle{alpha}
\bibliography{bibliography}

\clearpage

\appendix

\section{Full proofs for the results obtained in Section~1}
\label{sec:proofs-section-preliminaries}

\subsection{Full proof for Lemma~\ref{lemma:migrated-to-p-subset-migrated-from-all-but-p}}
\label{sub:1-proof-lemma-1}
\begin{claim}
  \label{claim:migrated-to-p-subset-migrated-from-all-but-p-step}
  For any step $i$ and processor $p$, $M^{-}_{i}\parens{p} \subseteq M^{+}_{i}\parens{Procs - \braces{p}}$.
\end{claim}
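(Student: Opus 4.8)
The plan is to prove this by a direct element-chasing argument that exploits the fact that the ready set at each step is \emph{partitioned} among the processors (recall that $R_i$ is split into the disjoint pieces $R_i\parens{p}$, one per processor). This partition property means that every ready node has a unique ``owner'' processor at each step, which is exactly what lets us pin down the source of a migrated node.

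First I would unfold the two relevant definitions. Fix a step $i$ and a processor $p$, and take an arbitrary node $\mu \in M^{-}_{i}\parens{p}$. By Definition~\ref{def:node-definitions}, $M^{-}_{i}\parens{p} = R_{i+1}\parens{p} \cap \parens{R_{i} - R_{i}\parens{p}}$, so $\mu$ satisfies $\mu \in R_{i+1}\parens{p}$ and $\mu \in R_{i}$ but $\mu \notin R_{i}\parens{p}$. The goal is to locate a processor $q \neq p$ for which $\mu \in M^{+}_{i}\parens{q}$, since then $\mu \in M^{+}_{i}\parens{Procs - \braces{p}} = \bigcup_{q \neq p} M^{+}_{i}\parens{q}$ by definition of the union over a set of processors.

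The key step is to use the partition of $R_i$: since $\mu \in R_{i}$, there is a \emph{unique} processor $q$ with $\mu \in R_{i}\parens{q}$, and because $\mu \notin R_{i}\parens{p}$ we must have $q \neq p$. It then remains to verify the three membership conditions defining $M^{+}_{i}\parens{q} = R_{i}\parens{q} \cap \parens{R_{i+1} - R_{i+1}\parens{q}}$. Membership $\mu \in R_{i}\parens{q}$ holds by the choice of $q$; $\mu \in R_{i+1}$ follows from $\mu \in R_{i+1}\parens{p} \subseteq R_{i+1}$; and $\mu \notin R_{i+1}\parens{q}$ follows again from the partition property, this time of $R_{i+1}$: since $\mu \in R_{i+1}\parens{p}$ and the pieces $R_{i+1}\parens{\cdot}$ are disjoint, $\mu$ cannot also lie in $R_{i+1}\parens{q}$ for $q \neq p$. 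Hence $\mu \in M^{+}_{i}\parens{q}$, which gives the desired containment.

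I do not expect a genuine obstacle here, as the argument is purely set-theoretic; the only point requiring care is to invoke the partition of the ready set at \emph{both} steps $i$ and $i+1$ (once to select the unique owner $q$ at step $i$, and once to rule out $\mu \in R_{i+1}\parens{q}$ at step $i+1$). Making the disjointness of the partition explicit is what keeps the element chase rigorous, and this claim at the step level then immediately feeds into the round-level statement of Lemma~\ref{lemma:migrated-to-p-subset-migrated-from-all-but-p} by taking unions over the steps of a round.
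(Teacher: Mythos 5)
Your proof is correct and is essentially the paper's argument: the paper establishes the same containment as a chain of set inclusions obtained by unfolding $M^{-}_{i}\parens{p}$ and $M^{+}_{i}\parens{Procs - \braces{p}}$ via Definition~\ref{def:node-definitions} and the partition of the ready sets, which is exactly the element chase you perform pointwise (unique owner $q \neq p$ at step $i$, disjointness of $R_{i+1}\parens{p}$ and $R_{i+1}\parens{q}$ at step $i+1$). If anything, your element-wise phrasing makes the use of the partition at both steps more transparent than the paper's set-algebraic chain.
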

\begin{proof}
  By Definition~\ref{def:node-definitions}
  \begin{align*}
    M^{-}_{i}\parens{p} &= R_{i+1}\parens{p} \cap \parens{R_{i} - R_{i}\parens{p}} \\
                        &= \left(R_{i+1} - \left[\bigcup_{q \in Procs - \left\{p\right\}}R_{i}\left(q\right)\right]\right) \cap \left[\bigcup_{q \in Procs - \left\{p\right\}}R_{i}\left(q\right)\right] \\
                        &\subseteq R_{i+1} \cap \left[\bigcup_{q \in Procs - \left\{p\right\}}R_{i}\left(q\right)\right] \\
                        &\subseteq \left[\bigcup_{q \in Procs - \left\{p\right\}}R_{i}\left(q\right) \cap R_{i+1} \cap \overline{R_{i+1}\left(q\right)}\right] \\
                        &= M^{+}_{i}\left(Procs - \left\{p\right\}\right)
  \end{align*}
\end{proof}

\begin{proof}[Proof of Lemma~\ref{lemma:migrated-to-p-subset-migrated-from-all-but-p}]
  Claim~\ref{claim:migrated-to-p-subset-migrated-from-all-but-p-step} implies that for any step $i \in \braces{\overline{t}\bracks{0},\cdots,\overline{t}\bracks{L - 1}}$, we have $M^{-}_{i}\parens{p} \subseteq M^{+}_{i}\parens{Procs - \braces{p}}$.
  Thus, by Definition~\ref{def:node-definitions-round} we conclude this lemma holds.
\end{proof}

\subsection{Full proof for Lemma~\ref{lemma:r-next} (Round Progression Lemma)}
\label{sub:1-proof-lemma-2}
\begin{claim}
  \label{claim:r-next-requirement-claim-1}
  For any round $\overline{t}$ and processor $p \in Procs$, $R_{\overline{t+1}}\left(p\right) \cap M^{+}_{\overline{t}}\left(p\right) = \emptyset$.
\end{claim}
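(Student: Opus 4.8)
The plan is to argue by contradiction, exploiting the defining property of a round that no node is migrated more than once (Definition~\ref{def:round}). Suppose some node $\mu$ satisfies $\mu \in R_{\overline{t+1}}(p) \cap M^{+}_{\overline{t}}(p)$. Since $\mu \in M^{+}_{\overline{t}}(p) = \bigcup_{i} M^{+}_{i}(p)$ (Definition~\ref{def:node-definitions-round}), by Definition~\ref{def:node-definitions} there is a step $i \in \{\overline{t}[0], \ldots, \overline{t}[L-1]\}$ with $\mu \in M^{+}_{i}(p) = R_{i}(p) \cap (R_{i+1} - R_{i+1}(p))$. In particular $\mu \in R_{i}(p)$ while $\mu \in R_{i+1} - R_{i+1}(p)$, so $\mu$ is attached to $p$ at step $i$ and attached to some processor $q \neq p$ at step $i+1$ (recall that each $R_j$ is partitioned among the processors, so at every step a ready node is attached to exactly one processor).

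First I would dispose of the degenerate case $i = \overline{t}[L-1]$: then $i+1 = \overline{t+1}[0]$, whence $\mu \in R_{\overline{t+1}}(q)$ with $q \neq p$, contradicting $\mu \in R_{\overline{t+1}}(p) = R_{\overline{t+1}[0]}(p)$ directly, since the partition assigns $\mu$ to only one processor. For the remaining case $i < \overline{t}[L-1]$, I would track the processor to which $\mu$ is attached across the steps $i, i+1, \ldots, \overline{t+1}[0]$. This trajectory begins at $p$ (step $i$), is already at $q \neq p$ at step $i+1$, and, by the hypothesis $\mu \in R_{\overline{t+1}}(p)$, is back at $p$ at step $\overline{t+1}[0] = \overline{t}[L-1]+1$. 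Because a node changes its attached processor only by being migrated, returning to $p$ after having left it forces at least one further migration of $\mu$ at some step $j$ with $i < j \leq \overline{t}[L-1]$.

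Counting the migration at step $i$ together with this later one, $\mu$ is migrated at least twice during round $\overline{t}$, contradicting the round property of Definition~\ref{def:round}. Hence no such $\mu$ exists, and $R_{\overline{t+1}}(p) \cap M^{+}_{\overline{t}}(p) = \emptyset$, as claimed. I expect the only delicate point to be the bookkeeping of steps versus rounds --- specifically that $\overline{t+1}[0]$ is the step immediately following $\overline{t}[L-1]$, so that both the departure at step $i$ and the return at step $j$ are genuinely charged to round $\overline{t}$ rather than spilling into the next round; once this indexing is pinned down the contradiction with the single-migration property is immediate.
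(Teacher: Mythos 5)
Your proposal is correct and takes essentially the same approach as the paper's proof: a contradiction argument that first disposes of the degenerate case $i = \overline{t}\left[L-1\right]$, then tracks the node across the remaining steps of the round to exhibit a second migration (the return to $p$), contradicting the single-migration property of rounds (Definition~\ref{def:round}). The only differences are cosmetic: the paper makes explicit that $\mu$ stays ready at every intermediate step (by monotonicity of node states), and formalizes the second migration as $\mu \in M^{-}_{\overline{t}}\left(p\right) \subseteq M^{+}_{\overline{t}}\left(Procs - \{p\}\right)$ via Lemma~\ref{lemma:migrated-to-p-subset-migrated-from-all-but-p}, whereas you leave both points implicit in your trajectory argument.
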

\begin{proof}
  For the purpose of contradiction, assume $R_{\overline{t+1}}\left(p\right) \cap M^{+}_{\overline{t}}\left(p\right) \neq \emptyset$.
  Thus, there is a step $j \in \left\{\overline{t}\left[0\right],\ldots,\overline{t}\left[L-1\right]\right\}$ such that $R_{\overline{t+1}}\left(p\right) \cap M^{+}_{j}\left(p\right) \neq \emptyset$.
  For such step $j$, let $S = R_{\overline{t+1}}\left(p\right) \cap M^{+}_{j}\left(p\right)$.
  Then,
  \begin{align*}
    S &= R_{\overline{t+1}\left[0\right]}\left(p\right) \cap M^{+}_{j}\left(p\right) \\
      &= R_{\overline{t+1}\left[0\right]}\left(p\right) \cap \left(R_{j}\left(p\right) \cap \left(R_{j+1} - R_{j+1}\left(p\right)\right)\right) \\
      &= R_{\overline{t+1}\left[0\right]}\left(p\right) \cap R_{j}\left(p\right) \cap R_{j+1} \cap \overline{R_{j+1}\left(p\right)}
  \end{align*}
  If $j$ were $\overline{t}\left[L-1\right]$, then $S = \emptyset$, and so, as one can deduce, $j < \overline{t}\left[L-1\right]$.
  Now, consider a node $\mu \in S$.
  It follows, $\mu \in R_{\overline{t+1}\left[0\right]}\left(p\right) \cap R_{j}\left(p\right) \cap R_{j+1} \cap \overline{R_{j+1}\left(p\right)}$.
  Since a node that is \textbf{ready} can only become \textbf{executed}, and a node in state \textbf{executed} does not change its state, it follows $\forall i \in \left\{j,\ldots,\overline{t+1}\left[0\right]\right\}, \mu \in R_{i}$.
  Moreover, as $\mu \in \overline{R_{j+1}\left(p\right)} \cap R_{s_{l}}\left(p\right)$ and $j + 1 < \overline{t}\left[L-1\right]$, it follows that there is a step $k \in \left\{j+1,\ldots,\overline{t}\left[L-1\right]\right\}$ such that $\mu \in R_{k+1}\left(p\right) \cap \overline{R_{k}\left(p\right)} \cap R_{k}$.
  By Definition~\ref{def:node-definitions}, it follows $\mu \in M^{-}_{k}\left(p\right)$, implying $\mu \in M^{-}_{\overline{t}}\left(p\right)$.
  However, since $\mu \in M^{-}_{\overline{t}}\left(p\right)$ and $\mu \in M^{+}_{\overline{t}}\left(p\right)$, it follows $M^{-}_{\overline{t}}\left(p\right) \cap M^{+}_{\overline{t}}\left(p\right) \neq \emptyset$, which, together with Lemma~\ref{lemma:migrated-to-p-subset-migrated-from-all-but-p}, contradicts Definition~\ref{def:round} --- the definition of rounds.
\end{proof}

\begin{lemma}
\label{lemma:step-progression-lemma}
For any step $i$ and $p \in Procs$, $R_{i+1}\left(p\right) \subseteq R_{i}\left(p\right) \cup E_{i}\left(p\right) \cup M^{-}_{i}\left(p\right)$.
\end{lemma}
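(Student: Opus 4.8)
The plan is to prove the inclusion pointwise: fix a step $i$ and a processor $p$, take an arbitrary node $\mu \in R_{i+1}\left(p\right)$, and show that $\mu$ must lie in one of the three sets $R_i\left(p\right)$, $E_i\left(p\right)$, or $M^{-}_i\left(p\right)$. The whole argument is a clean case split driven by the defining equations in Definition~\ref{def:node-definitions}, together with the basic fact that at each step $R_i$ is partitioned among the processors.

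First I would dispose of the case where $\mu \notin R_i$. By Definition~\ref{def:node-definitions} we have $E_i\left(p\right) = R_{i+1}\left(p\right) - R_i$, so if $\mu \in R_{i+1}\left(p\right)$ and $\mu \notin R_i$, then immediately $\mu \in E_i\left(p\right)$, and this node is accounted for.

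For the remaining case $\mu \in R_i$, I would invoke the partition property. Since $R_i = \bigcup_{q \in Procs} R_i\left(q\right)$ with the pieces $R_i\left(q\right)$ pairwise disjoint, exactly one of $\mu \in R_i\left(p\right)$ or $\mu \in R_i - R_i\left(p\right)$ holds. In the first subcase $\mu$ already lies in $R_i\left(p\right)$, the first set of the union. In the second subcase, combining $\mu \in R_{i+1}\left(p\right)$ with $\mu \in R_i - R_i\left(p\right)$ gives $\mu \in R_{i+1}\left(p\right) \cap \left(R_i - R_i\left(p\right)\right) = M^{-}_i\left(p\right)$, again by Definition~\ref{def:node-definitions}. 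Since these subcases are exhaustive, every $\mu \in R_{i+1}\left(p\right)$ is covered, which establishes the claimed inclusion.

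I do not expect any genuine obstacle here; the statement is essentially unpacking definitions. The one point to state explicitly is the use of the partition of $R_i$ among processors, which is precisely what lets me equate ``$\mu$ is ready but not attached to $p$'' with membership in $R_i - R_i\left(p\right)$, and hence with the defining condition for $M^{-}_i\left(p\right)$. Once that observation is made, the result follows directly from the definitions of $E_i\left(p\right)$ and $M^{-}_i\left(p\right)$.
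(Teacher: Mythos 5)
Your proof is correct and is essentially the paper's own argument: both amount to unpacking the definitions $E_{i}\left(p\right) = R_{i+1}\left(p\right) - R_{i}$ and $M^{-}_{i}\left(p\right) = R_{i+1}\left(p\right) \cap \left(R_{i} - R_{i}\left(p\right)\right)$ and distinguishing, for a node of $R_{i+1}\left(p\right)$, whether it lies in $R_{i}$ and, if so, whether it lies in $R_{i}\left(p\right)$. The only difference is presentational: the paper carries this out as a chain of set-algebra identities culminating in the equality $R_{i}\left(p\right) \cup E_{i}\left(p\right) \cup M^{-}_{i}\left(p\right) = R_{i}\left(p\right) \cup R_{i+1}\left(p\right)$, whereas you argue pointwise with an exhaustive case split (for which, incidentally, only the trivial dichotomy $\mu \in R_{i}\left(p\right)$ or $\mu \notin R_{i}\left(p\right)$ is needed, not disjointness of the partition).
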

\begin{proof}[Proof of Lemma~\ref{lemma:step-progression-lemma}]
  By Definition~\ref{def:node-definitions}, it follows
  \begin{align*}
    &R_{i}\left(p\right) \cup E_{i}\left(p\right) \cup M^{-}_{i}\left(p\right) \\
    &= R_{i}\left(p\right) \cup \left(R_{i+1}\left(p\right) - R_{i}\right) \cup \left(R_{i+1}\left(p\right) \cap \left(R_{i} - R_{i}\left(p\right)\right)\right) \\
    &= R_{i}\left(p\right) \cup \left(R_{i+1}\left(p\right) \cap \overline{R_{i}}\right) \cup \left(R_{i+1}\left(p\right) \cap R_{i} \cap \overline{R_{i}\left(p\right)}\right) \\
    &= R_{i}\left(p\right) \cup \left[R_{i+1}\left(p\right) \cap \left(\overline{R_{i}} \cup \left(R_{i} \cap \overline{R_{i}\left(p\right)}\right)\right)\right] \\
    &= \left[R_{i}\left(p\right) \cup R_{i+1}\left(p\right)\right] \cap \left[R_{i}\left(p\right) \cup \overline{R_{i}} \cup \left(R_{i} \cap \overline{R_{i}\left(p\right)}\right)\right] \\
    &= \left[R_{i}\left(p\right) \cup R_{i+1}\left(p\right)\right] \cap \left[\left(R_{i}\left(p\right) \cup \overline{R_{i}} \right) \cup \left(R_{i} \cap \overline{R_{i}\left(p\right)}\right)\right] \\
    &= \left[R_{i}\left(p\right) \cup R_{i+1}\left(p\right)\right] \cap \left[\left(R_{i} \cup \left(R_{i}\left(p\right) \cup \overline{R_{i}} \right) \right) \cap \left(\left(R_{i}\left(p\right) \cup \overline{R_{i}} \right) \cup \overline{R_{i}\left(p\right)}\right)\right] \\
    &= R_{i}\left(p\right) \cup R_{i+1}\left(p\right)
  \end{align*}
\end{proof}

\begin{lemma}
\label{lemma:multiple-step-progression-lemma}
For any steps $s_{0},s_{1}$, with $s_{1} > s_{0}$, and processor $p \in Procs$, \[\left[\bigcup_{i \in \left\{s_{0},\ldots,s_{1}\right\}} R_{i}\left(p\right)\right] \subseteq R_{s_{0}}\left(p\right) \cup \left[\bigcup_{i \in \left\{s_{0},\ldots,s_{1}-1\right\}} E_{i}\left(p\right) \cup M^{-}_{i}\left(p\right)\right].\]
\end{lemma}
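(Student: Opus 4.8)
The plan is to prove this by induction on $s_{1}$ (with $s_{0}$ held fixed), using the single-step progression result of Lemma~\ref{lemma:step-progression-lemma} as the engine that drives each inductive step.

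For the base case $s_{1} = s_{0} + 1$, the left-hand side is $R_{s_{0}}(p) \cup R_{s_{0}+1}(p)$. Applying Lemma~\ref{lemma:step-progression-lemma} at step $s_{0}$ gives $R_{s_{0}+1}(p) \subseteq R_{s_{0}}(p) \cup E_{s_{0}}(p) \cup M^{-}_{s_{0}}(p)$, which is precisely the right-hand side when the index set $\left\{s_{0},\ldots,s_{1}-1\right\}$ reduces to the single element $s_{0}$. So the base case follows immediately.

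For the inductive step, I assume the claim holds for some $s_{1} > s_{0}$ and establish it for $s_{1}+1$. First I split the left-hand union as
\[
\bigcup_{i \in \left\{s_{0},\ldots,s_{1}+1\right\}} R_{i}(p) = \left[\bigcup_{i \in \left\{s_{0},\ldots,s_{1}\right\}} R_{i}(p)\right] \cup R_{s_{1}+1}(p).
\]
The induction hypothesis bounds the bracketed union, and Lemma~\ref{lemma:step-progression-lemma} applied at step $s_{1}$ gives $R_{s_{1}+1}(p) \subseteq R_{s_{1}}(p) \cup E_{s_{1}}(p) \cup M^{-}_{s_{1}}(p)$. The one point requiring care is the term $R_{s_{1}}(p)$: since $s_{1} \in \left\{s_{0},\ldots,s_{1}\right\}$, this set is already contained in the bracketed union, hence already bounded by the induction hypothesis, so it contributes nothing new. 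Absorbing the remaining fresh terms $E_{s_{1}}(p) \cup M^{-}_{s_{1}}(p)$ into the union merely extends the index set from $\left\{s_{0},\ldots,s_{1}-1\right\}$ to $\left\{s_{0},\ldots,s_{1}\right\}$, yielding exactly the claimed inclusion for $s_{1}+1$ and completing the induction.

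The argument presents no genuine obstacle; the only thing to watch is the index-set bookkeeping---specifically, noticing that the $R_{s_{1}}(p)$ term produced by the single-step lemma is already covered by the hypothesis, so that the \emph{enabled} and \emph{migrated-in} sets at step $s_{1}$ are the only genuinely new contributions. Everything else is routine union manipulation.
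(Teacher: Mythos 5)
Your proof is correct and follows essentially the same route as the paper's: induction on $s_{1}$ with $s_{0}$ fixed, using Lemma~\ref{lemma:step-progression-lemma} for both the base case and the inductive step, and absorbing the $R_{s_{1}}\left(p\right)$ term produced by the single-step lemma into the union already covered by the induction hypothesis. Your write-up actually makes this last absorption point more explicit than the paper does, but the argument is the same.
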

\begin{proof}[Proof of Lemma~\ref{lemma:multiple-step-progression-lemma}]
  Prove this lemma by induction.
  \begin{description}
  \item[Base case] For the base case, let $s_{1} = s_{0} + 1$.
    Then, \[\left[\bigcup_{i \in \left\{s_{0},\ldots,s_{1}\right\}} R_{i}\left(p\right)\right] \subseteq R_{s_{0}}\left(p\right) \cup \left[\bigcup_{i \in \left\{s_{0},\ldots,s_{1}-1\right\}} E_{i}\left(p\right) \cup M^{-}_{i}\left(p\right)\right]\] iff $R_{s_{0}} \left(p\right) \cup R_{s_{1}} \left(p\right) \subseteq R_{s_{0}}\left(p\right) \cup E_{s_{0}}\left(p\right) \cup M^{-}_{s_{0}}\left(p\right)$.
    Taking into account Lemma~\ref{lemma:step-progression-lemma}, we conclude the base case holds.
  \item[Induction step] Assume that the result holds for some $s_{l} > s_{0}$, and show that it also holds for $s_{l}+1$.
    The induction hypothesis is
    \[\left[\bigcup_{i \in \left\{s_{0},\ldots,s_{l}\right\}} R_{i}\left(p\right)\right] \subseteq R_{s_{0}}\left(p\right) \cup \left[\bigcup_{i \in \left\{s_{0},\ldots,s_{l}-1\right\}} E_{i}\left(p\right) \cup M^{-}_{i}\left(p\right)\right]\]
    and prove
    \begin{align*}
      \left[\bigcup_{i \in \left\{s_{0},\ldots,s_{l}+1\right\}} R_{i}\left(p\right)\right] &\subseteq R_{s_{0}}\left(p\right) \cup \left[\bigcup_{i \in \left\{s_{0},\ldots,\left(s_{l}+1\right)-1\right\}} E_{i}\left(p\right) \cup M^{-}_{i}\left(p\right)\right] \\
      &= R_{s_{0}}\left(p\right) \cup \left[\bigcup_{i \in \left\{s_{0},\ldots,s_{l}-1\right\}} E_{i}\left(p\right) \cup M^{-}_{i}\left(p\right)\right] \\
      &\quad\cup E_{s_{l}}\left(p\right) \cup M^{-}_{s_{l}}\left(p\right)\\
      &\supseteq \left[\bigcup_{i \in \left\{s_{0},\ldots,s_{l}\right\}} R_{i}\left(p\right)\right] \cup E_{s_{l}}\left(p\right) \cup M^{-}_{s_{l}}\left(p\right)
    \end{align*}
    Again, taking into account Lemma~\ref{lemma:step-progression-lemma}, it is easy to deduce that the induction hypothesis holds, implying the lemma holds.
  \end{description}
\end{proof}

\begin{lemma}
\label{lemma:r-next-subseteq}
For any round $\overline{t}$ and processor $p \in Procs$
\begin{align*}
  R_{\overline{t+1}}\left(p\right) &\subseteq \left(E_{\overline{t}}\left(p\right) \cup R_{\overline{t}}\left(p\right) \cup M^{-}_{\overline{t}}\left(p\right)\right) - \left(C_{\overline{t}}\left(p\right) \cup M^{+}_{\overline{t}}\left(p\right)\right)
\end{align*}
\end{lemma}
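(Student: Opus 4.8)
The plan is to establish the claimed containment by splitting it into two independent pieces and then combining them with a single set-algebra fact: if $A \subseteq B$ and $A \cap D = \emptyset$, then $A \subseteq B - D$. Here I take $A = R_{\overline{t+1}}(p)$, $B = E_{\overline{t}}(p) \cup R_{\overline{t}}(p) \cup M^{-}_{\overline{t}}(p)$, and $D = C_{\overline{t}}(p) \cup M^{+}_{\overline{t}}(p)$. So it suffices to prove (i) the containment $R_{\overline{t+1}}(p) \subseteq B$ and (ii) the disjointness $R_{\overline{t+1}}(p) \cap D = \emptyset$; the final subset relation then follows mechanically.

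For the containment (i), I would pass to the step level. By Definition~\ref{def:node-definitions-round} we have $R_{\overline{t+1}}(p) = R_{\overline{t+1}[0]}(p)$, and since the round $\overline{t}$ occupies the $L$ consecutive steps $\overline{t}[0],\ldots,\overline{t}[L-1]$, its successor starts at $\overline{t+1}[0] = \overline{t}[L-1]+1$. I would then apply Lemma~\ref{lemma:multiple-step-progression-lemma} with $s_0 = \overline{t}[0]$ and $s_1 = \overline{t}[L-1]+1$ (valid since $L \geq 1$ forces $s_1 > s_0$). Because $R_{\overline{t+1}}(p) = R_{s_1}(p)$ is one of the sets in the left-hand union $\bigcup_{i \in \{s_0,\ldots,s_1\}} R_i(p)$, the lemma gives $R_{\overline{t+1}}(p) \subseteq R_{s_0}(p) \cup \bigcup_{i \in \{s_0,\ldots,s_1-1\}}\bigl(E_i(p) \cup M^{-}_i(p)\bigr)$, and by Definition~\ref{def:node-definitions-round} the right-hand side is exactly $R_{\overline{t}}(p) \cup E_{\overline{t}}(p) \cup M^{-}_{\overline{t}}(p) = B$.

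For the disjointness (ii), I would treat the two parts of $D$ separately. The relation $R_{\overline{t+1}}(p) \cap M^{+}_{\overline{t}}(p) = \emptyset$ is precisely Claim~\ref{claim:r-next-requirement-claim-1}, so that part is free. For $R_{\overline{t+1}}(p) \cap C_{\overline{t}}(p) = \emptyset$, I would invoke the monotonicity of node states (Definition~\ref{def:node-states}): any $\mu \in C_i(p) = R_i(p) - R_{i+1}$ with $i \in \{\overline{t}[0],\ldots,\overline{t}[L-1]\}$ is \textbf{ready} at step $i$ and not \textbf{ready} at step $i+1$, so it must have become \textbf{executed} at step $i+1$ and stays \textbf{executed} at every later step; since $\overline{t+1}[0] \geq i+1$, this gives $\mu \notin R_{\overline{t+1}}$ and hence $\mu \notin R_{\overline{t+1}}(p)$.

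I expect the disjointness from $C_{\overline{t}}(p)$ to be the only step that genuinely uses the computation semantics rather than pure set algebra, and even there the argument is a one-line appeal to the irreversibility of the \textbf{executed} state; the containment half reduces to a direct application of Lemma~\ref{lemma:multiple-step-progression-lemma}. The main care required is therefore purely bookkeeping: correctly translating the round index $\overline{t+1}$ into the step index $\overline{t}[L-1]+1$ so that the step-level lemma can be invoked with the right range.
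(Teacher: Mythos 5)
Your proposal is correct and follows essentially the same route as the paper's own proof: the same split into the containment $R_{\overline{t+1}}\left(p\right) \subseteq E_{\overline{t}}\left(p\right) \cup R_{\overline{t}}\left(p\right) \cup M^{-}_{\overline{t}}\left(p\right)$ (via Lemma~\ref{lemma:multiple-step-progression-lemma} with $s_{0} = \overline{t}\left[0\right]$, $s_{1} = \overline{t+1}\left[0\right]$) and the disjointness from $C_{\overline{t}}\left(p\right) \cup M^{+}_{\overline{t}}\left(p\right)$ (via Claim~\ref{claim:r-next-requirement-claim-1} together with the irreversibility of the \textbf{executed} state). The only difference is cosmetic: you spell out the node-state monotonicity argument that the paper compresses into a citation of Definitions~\ref{def:node-states}, \ref{def:node-definitions} and \ref{def:node-definitions-round}.
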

\begin{proof}[Proof of Lemma~\ref{lemma:r-next-subseteq}]
  By Definitions~\ref{def:node-states}, \ref{def:node-definitions} and \ref{def:node-definitions-round}, it follows $R_{\overline{t+1}}\left(p\right) \cap C_{\overline{t}}\left(p\right) = \emptyset$.
  Since by Claim~\ref{claim:r-next-requirement-claim-1}, $R_{\overline{t+1}}\left(p\right) \cap M^{+}_{\overline{t}}\left(p\right) = \emptyset$, it follows $R_{\overline{t+1}}\left(p\right) \cap \left(C_{\overline{t}}\left(p\right) \cup M^{+}_{\overline{t}}\left(p\right)\right) = \emptyset$.
  Thus, it suffices to show that $R_{\overline{t+1}}\left(p\right) \subseteq E_{\overline{t}}\left(p\right) \cup R_{\overline{t}}\left(p\right) \cup M^{-}_{\overline{t}}\left(p\right)$.
  To conclude this proof, note that Lemma~\ref{lemma:multiple-step-progression-lemma}, with $s_{0} = \overline{t}\left[0\right]$ and $s_{1} = \overline{t+1}\left[0\right]$, implies just that.
\end{proof}

\begin{claim}
  \label{claim:claim:r-next-part-1-supseteq}
  For any round $\overline{t}$ and processor $p \in Procs$, $R_{\overline{t+1}}\left(p\right) \supseteq M^{-}_{\overline{t}}\left(p\right)- \left(C_{\overline{t}}\left(p\right) \cup M^{+}_{\overline{t}}\left(p\right)\right)$.
\end{claim}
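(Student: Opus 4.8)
The plan is to prove the inclusion elementwise. Fix an arbitrary node $\mu \in M^{-}_{\overline{t}}\left(p\right) - \left(C_{\overline{t}}\left(p\right) \cup M^{+}_{\overline{t}}\left(p\right)\right)$ and show that $\mu \in R_{\overline{t+1}}\left(p\right) = R_{\overline{t+1}\left[0\right]}\left(p\right)$. Since $\mu \in M^{-}_{\overline{t}}\left(p\right)$, by Definition~\ref{def:node-definitions-round} there is a step $i \in \left\{\overline{t}\left[0\right],\ldots,\overline{t}\left[L-1\right]\right\}$ with $\mu \in M^{-}_{i}\left(p\right)$, and by Definition~\ref{def:node-definitions} this already gives $\mu \in R_{i+1}\left(p\right)$. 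The idea is then to track $\mu$ forward from step $i+1$ to step $\overline{t+1}\left[0\right]$, arguing that it remains both attached to $p$ and ready throughout.

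First I would translate the two hypotheses to the step level. Because $C_{\overline{t}}\left(p\right)$ and $M^{+}_{\overline{t}}\left(p\right)$ are the unions over the steps of $\overline{t}$ of the sets $C_{j}\left(p\right)$ and $M^{+}_{j}\left(p\right)$ respectively (Definition~\ref{def:node-definitions-round}), the assumption $\mu \notin C_{\overline{t}}\left(p\right)$ gives $\mu \notin C_{j}\left(p\right)$ for every step $j$ of the round, and $\mu \notin M^{+}_{\overline{t}}\left(p\right)$ gives $\mu \notin M^{+}_{j}\left(p\right)$ for every such $j$; that is, $p$ never executes $\mu$ and never migrates $\mu$ away during $\overline{t}$. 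Next I would perform a short forward induction on $j$ showing $\mu \in R_{j}\left(p\right)$ for all $j \in \left\{i+1,\ldots,\overline{t+1}\left[0\right]\right\}$. The base case $j = i+1$ is exactly $\mu \in R_{i+1}\left(p\right)$. For the induction step, assume $\mu \in R_{j}\left(p\right)$ with $j < \overline{t+1}\left[0\right]$; since $R_{j+1}$ is partitioned across $Procs$ and is disjoint from the executed nodes, the node $\mu$ must fall into exactly one of three cases at step $j$: it is executed (forcing $\mu \in C_{j}\left(p\right)$), it is migrated away from $p$ (forcing $\mu \in M^{+}_{j}\left(p\right)$), or it remains in $R_{j+1}\left(p\right)$. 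The first two cases are excluded by the step-level facts above, so $\mu \in R_{j+1}\left(p\right)$. Taking $j = \overline{t+1}\left[0\right]$ then yields $\mu \in R_{\overline{t+1}\left[0\right]}\left(p\right) = R_{\overline{t+1}}\left(p\right)$, which is the desired inclusion. Combined with Lemma~\ref{lemma:r-next-subseteq} and the analogous containments for the $E_{\overline{t}}\left(p\right)$ and $R_{\overline{t}}\left(p\right)$ summands, this supplies the reverse inclusion that completes the Round Progression Lemma (Lemma~\ref{lemma:r-next}).

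I expect the main obstacle to be purely one of bookkeeping rather than of ideas: making the three-way case split rigorous relies on the fact that at each step $R_{j+1}$ is partitioned disjointly among the processors and separate from the executed set, so that a ready node attached to $p$ that is neither executed nor migrated away must stay attached to $p$. Care is also needed at the boundary, to verify that the induction reaches precisely $\overline{t+1}\left[0\right]$ (equivalently $\overline{t}\left[L-1\right]+1$) and not one step short, and to handle the degenerate case $i = \overline{t}\left[L-1\right]$ where $\mu \in R_{i+1}\left(p\right)$ is already the conclusion. No probabilistic reasoning enters here; the real content is the clean translation between the step-level definitions of Definition~\ref{def:node-definitions} and the round-level definitions of Definition~\ref{def:node-definitions-round}.
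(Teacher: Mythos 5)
Your proof is correct, and it reaches the claim by a genuinely more elementary route than the paper's. The paper proves, by induction on a step $s_{1}$ for an arbitrary starting step $s_{0}$, the cumulative inclusion
\[
R_{s_{1}}\left(p\right) \supseteq \left[\bigcup_{i \in \left\{s_{0},\ldots,s_{1}-1\right\}} M^{-}_{i}\left(p\right)\right] - \left[\bigcup_{i \in \left\{s_{0},\ldots,s_{1}-1\right\}} C_{i}\left(p\right) \cup M^{+}_{i}\left(p\right)\right],
\]
and then instantiates $s_{0} = \overline{t}\left[0\right]$ and $s_{1} = \overline{t+1}\left[0\right]$; its induction step is a long chain of set-algebraic identities that never mentions individual nodes. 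You instead fix one node $\mu$ of the round-level difference set, locate the step $i$ at which it is migrated to $p$, and push it forward step by step using the three-way case split, which is exhaustive purely by Definition~\ref{def:node-definitions}: for $\mu \in R_{j}\left(p\right)$, either $\mu \notin R_{j+1}$ (so $\mu \in C_{j}\left(p\right) = R_{j}\left(p\right) - R_{j+1}$), or $\mu \in R_{j+1} - R_{j+1}\left(p\right)$ (so $\mu \in M^{+}_{j}\left(p\right)$), or $\mu \in R_{j+1}\left(p\right)$; the first two branches are excluded by the step-level translation of $\mu \notin C_{\overline{t}}\left(p\right) \cup M^{+}_{\overline{t}}\left(p\right)$. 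Both arguments encode the same fact --- a node migrated to $p$ during the round that $p$ never executes and never loses is still attached to $p$ when the round ends --- and, correctly, neither needs the no-repeated-migration property of rounds. Your elementwise version buys brevity and transparency, since the case split is immediate from the definitions; the paper's version buys a reusable statement for arbitrary step intervals and a uniform set-algebraic style that it deploys again in the neighbouring Claims~\ref{claim:claim:r-next-part-2-supseteq} and~\ref{claim:claim:r-next-part-3-supseteq}. Your boundary handling (the induction stops exactly at $\overline{t+1}\left[0\right]$, and the degenerate case $i = \overline{t}\left[L-1\right]$ is already the conclusion) is also right.
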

\begin{proof}
  First, for an arbitrary step $s_{0}$ prove by induction that for any step $s_{1}$ such that $s_{1} > s_{0}$,
  \[R_{s_{1}}\parens{p} \supseteq \left[\bigcup_{i \in \left\{s_{0},\ldots,s_{1}-1\right\}} M^{-}_{i}\left(p\right)\right] - \left[\bigcup_{i \in \left\{s_{0},\ldots,s_{1}-1\right\}} C_{i}\left(p\right) \cup M^{+}_{i}\left(p\right)\right]\]
  \begin{description}
  \item[Base case] For the base case, let $s_{1} = s_{0} + 1$.
    Then, \[R_{s_{1}}\parens{p} \supseteq \left[\bigcup_{i \in \left\{s_{0},\ldots,s_{1}-1\right\}} M^{-}_{i}\left(p\right)\right] - \left[\bigcup_{i \in \left\{s_{0},\ldots,s_{1}-1\right\}} C_{i}\left(p\right) \cup M^{+}_{i}\left(p\right)\right]\] iff $R_{s_{1}}\parens{p} \supseteq M^{-}_{s_{0}}\left(p\right) - \left(C_{s_{0}}\left(p\right) \cup M^{+}_{s_{0}}\left(p\right)\right)$.
    To conclude the proof of the base case, note that
    \begin{align*}
      &M^{-}_{s_{0}}\left(p\right) - \left(C_{s_{0}}\left(p\right) \cup M^{+}_{s_{0}}\left(p\right)\right) \\
      &\subseteq M^{-}_{s_{0}}\left(p\right) \\
      &= R_{s_{1}}\left(p\right) \cap \left(R_{s_{0}} - R_{s_{0}}\left(p\right)\right)\\
      &\subseteq R_{s_{1}}\left(p\right)
    \end{align*}

  \item[Induction step] Assume that the result holds for some $s_{l} > s_{0}$, and show that it also holds for $s_{l}+1$.
    The induction hypothesis is
    \[R_{s_{l}}\parens{p} \supseteq \left[\bigcup_{i \in \left\{s_{0},\ldots,s_{l}-1\right\}} M^{-}_{i}\left(p\right)\right] - \left[\bigcup_{i \in \left\{s_{0},\ldots,s_{l}-1\right\}} C_{i}\left(p\right) \cup M^{+}_{i}\left(p\right)\right]\]
    and we prove
    \begin{align*}
      R_{s_{l+1}}\parens{p} &\supseteq \left[\bigcup_{i \in \left\{s_{0},\ldots,\left(s_{l}+1\right)-1\right\}} M^{-}_{i}\left(p\right)\right] - \left[\bigcup_{i \in \left\{s_{0},\ldots,\left(s_{l}+1\right)-1\right\}} C_{i}\left(p\right) \cup M^{+}_{i}\left(p\right)\right] \\
                            &= \left(M^{-}_{s_{l}} \left(p\right) \cup\left[\bigcup_{i \in \left\{s_{0},\ldots,s_{l}-1\right\}} M^{-}_{i}\left(p\right)\right]\right) \\
                            &\quad- \left[\bigcup_{i \in \left\{s_{0},\ldots,\left(s_{l}+1\right)-1\right\}} C_{i}\left(p\right) \cup M^{+}_{i}\left(p\right)\right] \\
                            &= \left(M^{-}_{s_{l}} \left(p\right) - \left[\bigcup_{i \in \left\{s_{0},\ldots,\left(s_{l}+1\right)-1\right\}} C_{i}\left(p\right) \cup M^{+}_{i}\left(p\right)\right]\right) \\
                            &\qquad \cup \left(\left[\bigcup_{i \in \left\{s_{0},\ldots,s_{l}-1\right\}} M^{-}_{i}\left(p\right)\right] - \left[\bigcup_{i \in \left\{s_{0},\ldots,\left(s_{l}+1\right)-1\right\}} C_{i}\left(p\right) \cup M^{+}_{i}\left(p\right)\right]\right) \\
                            &\subseteq M^{-}_{s_{l}} \left(p\right) \cup \left(\left[\bigcup_{i \in \left\{s_{0},\ldots,s_{l}-1\right\}} M^{-}_{i}\left(p\right)\right] - \left[\bigcup_{i \in \left\{s_{0},\ldots,\left(s_{l}+1\right)-1\right\}} C_{i}\left(p\right) \cup M^{+}_{i}\left(p\right)\right]\right)\\
      &= M^{-}_{s_{l}} \left(p\right) \cup \left(\left[\bigcup_{i \in \left\{s_{0},\ldots,s_{l}-1\right\}} M^{-}_{i}\left(p\right)\right]\right. \\
      &\qquad - \left.\left(\left[\bigcup_{i \in \left\{s_{0},\ldots,s_{l}-1\right\}} C_{i}\left(p\right) \cup M^{+}_{i}\left(p\right)\right] \cup C_{s_{l}}\left(p\right) \cup M^{+}_{s_{l}}\left(p\right)\right)\right)
      \end{align*}\begin{align*}
      &= M^{-}_{s_{l}} \left(p\right) \cup \left(\left(\left[\bigcup_{i \in \left\{s_{0},\ldots,s_{l}-1\right\}} M^{-}_{i}\left(p\right)\right]\right.\right. \\
      &\qquad - \left.\left.\left[\bigcup_{i \in \left\{s_{0},\ldots,s_{l}-1\right\}} C_{i}\left(p\right) \cup M^{+}_{i}\left(p\right)\right]\right) - \left(C_{s_{l}}\left(p\right) \cup M^{+}_{s_{l}}\left(p\right)\right)\right) \\
      &\subseteq M^{-}_{s_{l}} \left(p\right) \cup \left(R_{s_{l}}\left(p\right) - \left(C_{s_{l}}\left(p\right) \cup M^{+}_{s_{l}}\left(p\right)\right)\right) \\
      &= \left(R_{s_{l}+1}\left(p\right) \cap \left(R_{s_{l}} - R_{s_{l}}\left(p\right)\right)\right) \cup \left(R_{s_{l}}\left(p\right) - \left(C_{s_{l}}\left(p\right) \cup M^{+}_{s_{l}}\left(p\right)\right)\right) \\
      &\subseteq R_{s_{l}+1}\left(p\right) \cup \left(R_{s_{l}}\left(p\right) - \left(C_{s_{l}}\left(p\right) \cup M^{+}_{s_{l}}\left(p\right)\right)\right) \\
      &= R_{s_{l}+1}\left(p\right) \cup \left(R_{s_{l}}\left(p\right) \cap \overline{C_{s_{l}}\left(p\right)} \cap \overline{M^{+}_{s_{l}}\left(p\right)}\right) \\
      &= R_{s_{l}+1}\left(p\right) \cup \left(R_{s_{l}}\left(p\right) \cap \left(\overline{R_{s_{l}}\left(p\right) - R_{s_{l}+1}}\right) \cap \left(\overline{R_{s_{l}}\left(p\right) \cap \left(R_{s_{l}+1} - R_{s_{l}+1}\left(p\right)\right)}\right)\right) \\
      &= R_{s_{l}+1}\left(p\right) \cup \left(R_{s_{l}}\left(p\right) \cap \left(\overline{R_{s_{l}}\left(p\right)} \cup R_{s_{l}+1}\right) \cap \left(\overline{R_{s_{l}}\left(p\right)} \cup \overline{R_{s_{l}+1}} \cup R_{s_{l}+1}\left(p\right)\right)\right) \\
      &= R_{s_{l}+1}\left(p\right) \cup \left[\left(\left(R_{s_{l}}\left(p\right) \cap \overline{R_{s_{l}}\left(p\right)}\right)\cup\left(R_{s_{l}}\left(p\right) \cap R_{s_{l}+1}\right)\right) \cap \left(\overline{R_{s_{l}}\left(p\right)} \cup \overline{R_{s_{l}+1}} \cup R_{s_{l}+1}\left(p\right)\right)\right] \\
      &= R_{s_{l}+1}\left(p\right) \cup \left[R_{s_{l}}\left(p\right) \cap R_{s_{l}+1} \cap \left(\overline{R_{s_{l}}\left(p\right)} \cup \overline{R_{s_{l}+1}} \cup R_{s_{l}+1}\left(p\right)\right)\right] \\
      &= R_{s_{l}+1}\left(p\right) \cup \left[R_{s_{l}+1} \cap \left(\left(R_{s_{l}}\left(p\right) \cap \overline{R_{s_{l}}\left(p\right)}\right) \cup \left(R_{s_{l}}\left(p\right) \cap \overline{R_{s_{l}+1}}\right) \cup \left(R_{s_{l}}\left(p\right) \cap R_{s_{l}+1}\left(p\right)\right)\right)\right] \\
      &\subseteq R_{s_{l}+1}\left(p\right) \cup \left[R_{s_{l}+1} \cap \left(\left(R_{s_{l}}\left(p\right) \cap \overline{R_{s_{l}+1}}\right) \cup R_{s_{l}+1}\left(p\right)\right)\right] \\
      &\subseteq R_{s_{l}+1}\left(p\right) \cup \left[\left(R_{s_{l}+1} \cap R_{s_{l}}\left(p\right) \cap \overline{R_{s_{l}+1}}\right) \cup \left(R_{s_{l}+1} \cap R_{s_{l}+1}\left(p\right)\right)\right] \\
      &\subseteq R_{s_{l}+1}\left(p\right) \cup \left(R_{s_{l}+1} \cap R_{s_{l}+1}\left(p\right)\right) \\
      &= R_{s_{l}+1}\left(p\right)
    \end{align*}
  \end{description}
  To conclude this proof, let $s_{0} = \overline{t}\left[0\right]$ and $s_{1} = \overline{t+1}\left[0\right]$.
\end{proof}

\begin{claim}
  \label{claim:claim:r-next-part-2-supseteq}
  For any steps $s_{0},s_{1}$ such that $s_{1} > s_{0}$:
  \[R_{s_{1}}\left(p\right) \cup \left[\bigcup_{i \in \left\{s_{0},\ldots,s_{1}-1\right\}} C_{i}\left(p\right)\right] \supseteq \left[\bigcap_{i \in \left\{s_{0},\ldots,s_{1}-1\right\}} R_{s_{0}}\left(p\right) \cap \left(\overline{R_{i}\left(p\right)} \cup \overline{R_{i+1}} \cup R_{i+1}\left(p\right)\right)\right]\]
\end{claim}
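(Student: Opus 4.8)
The plan is to first recognize that the complicated set on the right-hand side is nothing but the nodes attached to $p$ at step $s_0$ that are never migrated away from $p$ before step $s_1$. Concretely, by De Morgan's laws the complement of $\overline{R_i(p)} \cup \overline{R_{i+1}} \cup R_{i+1}(p)$ is $R_i(p) \cap R_{i+1} \cap \overline{R_{i+1}(p)}$, which is exactly $M^{+}_i(p)$ by Definition~\ref{def:node-definitions}. Hence each factor of the intersection equals $\overline{M^{+}_i(p)}$, and since $R_{s_0}(p)$ appears in every factor, the right-hand side collapses to $R_{s_0}(p) \cap \bigcap_{i \in \{s_0,\ldots,s_1-1\}} \overline{M^{+}_i(p)} = R_{s_0}(p) - \bigcup_{i \in \{s_0,\ldots,s_1-1\}} M^{+}_i(p)$. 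Rewriting the goal in this form makes the statement intuitively transparent and guides the rest of the argument.

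With this simplification in hand, the second ingredient I would isolate is a single-step containment: for any step $j$, $R_j(p) - M^{+}_j(p) \subseteq C_j(p) \cup R_{j+1}(p)$. This follows by expanding $M^{+}_j(p) = R_j(p) \cap R_{j+1} \cap \overline{R_{j+1}(p)}$, so that a node in $R_j(p) - M^{+}_j(p)$ either leaves the ready set entirely (landing in $R_j(p) - R_{j+1} = C_j(p)$) or stays attached to $p$ (landing in $R_{j+1}(p)$). Semantically: a node held by $p$ that is not migrated away during one step is either executed by $p$ or still held by $p$ afterwards.

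Then I would prove the claim by induction on $s_1$ for fixed $s_0$, mirroring the structure used in Claim~\ref{claim:claim:r-next-part-1-supseteq}. The base case $s_1 = s_0+1$ is precisely the single-step containment above. For the induction step I would use that the simplified right-hand side for $s_l+1$ equals the one for $s_l$ intersected with $\overline{M^{+}_{s_l}(p)}$; applying the induction hypothesis and distributing that intersection over the union yields a subset of $\left(R_{s_l}(p) \cap \overline{M^{+}_{s_l}(p)}\right) \cup \bigcup_{i=s_0}^{s_l-1} C_i(p)$, after discarding the term $\bigcup_{i} C_i(p) \cap \overline{M^{+}_{s_l}(p)}$ by monotonicity of $\subseteq$. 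The single-step containment then bounds $R_{s_l}(p) \cap \overline{M^{+}_{s_l}(p)} = R_{s_l}(p) - M^{+}_{s_l}(p)$ by $C_{s_l}(p) \cup R_{s_l+1}(p)$, and the pieces reassemble into $R_{s_l+1}(p) \cup \bigcup_{i=s_0}^{s_l} C_i(p)$, as required.

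I expect the main obstacle to be purely the bookkeeping in the induction step: tracking which $C_i(p)$ terms accumulate and justifying that intersecting the inductive union with $\overline{M^{+}_{s_l}(p)}$ only shrinks the $C_i(p)$ terms already present (so they may simply be kept). Once the De Morgan rewriting and the single-step containment are in place, no further insight is needed — everything reduces to routine set algebra. Finally, specializing to $s_0 = \overline{t}[0]$ and $s_1 = \overline{t+1}[0]$ feeds this claim into the $\supseteq$ direction of Lemma~\ref{lemma:r-next} (the Round Progression Lemma), yielding $R_{\overline{t+1}}(p) \supseteq R_{\overline{t}}(p) - (C_{\overline{t}}(p) \cup M^{+}_{\overline{t}}(p))$.
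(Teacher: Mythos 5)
Your proof is correct and takes essentially the same route as the paper: the same induction on $s_{1}$ for fixed $s_{0}$, with the same single-step containment $R_{j}\left(p\right) - M^{+}_{j}\left(p\right) \subseteq C_{j}\left(p\right) \cup R_{j+1}\left(p\right)$ serving as both the base case and the engine of the induction step. The only difference is presentational: you apply De Morgan up front to rewrite the right-hand side as $R_{s_{0}}\left(p\right) - \bigcup_{i} M^{+}_{i}\left(p\right)$, whereas the paper keeps the raw intersection form throughout the induction and performs that same rewriting (in reverse) only when it applies the claim inside the proof of Lemma~\ref{lemma:r-next-supseteq}.
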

\begin{proof}
  Prove this claim for an arbitrary $s_{0}$ by induction on $s_{1}$.
  \begin{description}
  \item[Base case] For the base case, consider $s_{1} = s_{0} + 1$. Then
    \begin{align*}
      R_{s_{1}}\left(p\right) \cup \left[\bigcup_{i \in \left\{s_{0},\ldots,s_{1}-1\right\}} C_{i}\left(p\right)\right] &\supseteq \left[\bigcap_{i \in \left\{s_{0},\ldots,s_{1}-1\right\}} R_{s_{0}}\left(p\right) \cap \left(\overline{R_{i}\left(p\right)} \cup \overline{R_{i+1}} \cup R_{i+1}\left(p\right)\right)\right] \\
                                                                                                                        &\text{iff} \\
      R_{s_{1}}\left(p\right) \cup C_{s_{0}}\left(p\right) &\supseteq R_{s_{0}}\left(p\right) \cap \left(\overline{R_{s_{0}}\left(p\right)} \cup \overline{R_{s_{0}+1}} \cup R_{s_{0}+1}\left(p\right)\right)
    \end{align*}
    To conclude the proof of the base case, note that
    \begin{align*}
      & R_{s_{0}}\left(p\right) \cap \left(\overline{R_{s_{0}}\left(p\right)} \cup \overline{R_{s_{0}+1}} \cup R_{s_{0}+1}\left(p\right)\right) \\
      &= \left(R_{s_{0}}\left(p\right) \cap \overline{R_{s_{0}}\left(p\right)} \right) \cup \left(R_{s_{0}}\left(p\right) \cap \overline{R_{s_{1}}}\right) \cup \left(R_{s_{0}}\left(p\right) \cap R_{s_{1}}\left(p\right)\right) \\
      &\subseteq C_{s_{0}}\left(p\right) \cup R_{s_{1}}\left(p\right) \\
    \end{align*}

  \item[Induction step] Assuming the claim holds for $s_{l} \geq s_{0} + 1$, prove the claim also holds for $s_{l} + 1$.
    Since by the induction hypothesis
    \[R_{s_{l}}\left(p\right) \cup \left[\bigcup_{i \in \left\{s_{0},\ldots,s_{l}-1\right\}} C_{i}\left(p\right)\right] \supseteq \left[\bigcap_{i \in \left\{s_{0},\ldots,s_{l}-1\right\}} R_{s_{0}}\left(p\right) \cap \left(\overline{R_{i}\left(p\right)} \cup \overline{R_{i+1}} \cup R_{i+1}\left(p\right)\right)\right]\]
    it suffices to show that
    \begin{align*}
      R_{s_{l}+1}\left(p\right) \cup \left[\bigcup_{i \in \left\{s_{0},\ldots,\left(s_{l}+1\right)-1\right\}} C_{i}\left(p\right)\right] &\supseteq \left(R_{s_{l}}\left(p\right) \cup \left[\bigcup_{i \in \left\{s_{0},\ldots,s_{l}-1\right\}} C_{i}\left(p\right)\right]\right) \\
      &\cap \left[R_{s_{0}}\left(p\right) \cap \left(\overline{R_{s_{l}}\left(p\right)} \cup \overline{R_{s_{l}+1}} \cup R_{s_{l}+1}\left(p\right)\right)\right]
    \end{align*}
    To conclude,
    \begin{align*}
      &\parens[\Bigg]{R_{s_{l}}\left(p\right) \cup \bracks[\Bigg]{\bigcup_{i \in \left\{s_{0},\ldots,s_{l}-1\right\}} C_{i}\left(p\right)}} \cap \left[R_{s_{0}}\left(p\right) \cap \left(\overline{R_{s_{l}}\left(p\right)} \cup \overline{R_{s_{l}+1}} \cup R_{s_{l}+1}\left(p\right)\right)\right] \\
      &\subseteq\parens[\Bigg]{R_{s_{l}}\left(p\right) \cup \bracks[\Bigg]{\bigcup_{i \in \left\{s_{0},\ldots,s_{l}-1\right\}} C_{i}\left(p\right)}} \cap \left(\overline{R_{s_{l}}\left(p\right)} \cup \overline{R_{s_{l}+1}} \cup R_{s_{l}+1}\left(p\right)\right) \\
      &= \left(  \bracks[\Bigg]{\bigcup_{i \in \left\{s_{0},\ldots,s_{l}-1\right\}} C_{i}\left(p\right)} \cap \left(\overline{R_{s_{l}}\left(p\right)} \cup \overline{R_{s_{l}+1}} \cup R_{s_{l}+1}\left(p\right)\right)\right) \\
      &\qquad \cup \left(\left(R_{s_{l}}\left(p\right) \cap \overline{R_{s_{l}}\left(p\right)} \right) \cup \left(R_{s_{l}}\left(p\right) \cap \overline{R_{s_{l}+1}} \right) \cup \left(R_{s_{l}}\left(p\right) \cap R_{s_{l}+1}\left(p\right)\right)\right) \\
      &\subseteq \bracks[\Bigg]{\bigcup_{i \in \left\{s_{0},\ldots,s_{l}-1\right\}} C_{i}\left(p\right)} \cup C_{s_{l}}\left(p\right) \cup R_{s_{l}+1}\left(p\right) \\
      &\subseteq R_{s_{l}+1}\left(p\right) \cup \bracks[\Bigg]{\bigcup_{i \in \left\{s_{0},\ldots,s_{l}-1\right\}} C_{i}\left(p\right)}
    \end{align*}
    
  \end{description}
\end{proof}

\begin{claim}
  \label{claim:claim:r-next-part-3-supseteq}
  For any steps $s_{0},s_{1}$ such that $s_{1} > s_{0}$,
  \[R_{s_{1}}\left(p\right) \cup \left[\bigcup_{i \in \left\{s_{0},\ldots,s_{1}-1\right\}} C_{i}\left(p\right)\right] \supseteq \left[\bigcup_{i \in \left\{s_{0},\ldots,s_{1}-1\right\}} E_{i}\left(p\right) \right] - \left[\bigcup_{i \in \left\{s_{0},\ldots,s_{1}-1\right\}} M^{+}_{i}\left(p\right) \right]\]
\end{claim}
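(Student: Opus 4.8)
The plan is to prove the claim by induction on $s_1$ for an arbitrary fixed $s_0$, exactly mirroring the structure of Claims~\ref{claim:claim:r-next-part-1-supseteq} and \ref{claim:claim:r-next-part-2-supseteq}. The guiding intuition is that every node $p$ enables at some step in $\{s_0,\ldots,s_1-1\}$ must, by step $s_1$, have reached one of three fates: it is still attached to $p$ (so it lies in $R_{s_1}(p)$), it has been executed by $p$ (so it lies in some $C_i(p)$), or it has been migrated away from $p$ (so it lies in some $M^{+}_i(p)$); removing the third possibility leaves precisely the first two, which is exactly what the containment asserts.

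For the base case $s_1 = s_0+1$, the statement reduces to $R_{s_0+1}(p) \cup C_{s_0}(p) \supseteq E_{s_0}(p) - M^{+}_{s_0}(p)$, and this is immediate: by Definition~\ref{def:node-definitions}, $E_{s_0}(p) = R_{s_0+1}(p) - R_{s_0} \subseteq R_{s_0+1}(p)$, so $E_{s_0}(p) - M^{+}_{s_0}(p) \subseteq R_{s_0+1}(p)$.

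For the induction step, writing $\mathcal{E}$, $\mathcal{C}$, $\mathcal{M}$ for the unions of $E_i(p)$, $C_i(p)$, $M^{+}_i(p)$ over $i \in \{s_0,\ldots,s_l-1\}$, I would split the right-hand side for $s_l+1$ as $\bracks{(\mathcal{E}-\mathcal{M}) - M^{+}_{s_l}(p)} \cup \bracks{E_{s_l}(p) - (\mathcal{M} \cup M^{+}_{s_l}(p))}$. The second term is dispatched as in the base case, since $E_{s_l}(p) \subseteq R_{s_l+1}(p)$. For the first term, the induction hypothesis gives $\mathcal{E} - \mathcal{M} \subseteq R_{s_l}(p) \cup \mathcal{C}$, so it remains to bound $(R_{s_l}(p) \cup \mathcal{C}) - M^{+}_{s_l}(p)$; distributing the difference keeps the $\mathcal{C}$-portion inside $\mathcal{C}$ and leaves the single nontrivial quantity $R_{s_l}(p) - M^{+}_{s_l}(p)$.

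The main obstacle --- indeed the only step demanding real care --- is showing $R_{s_l}(p) - M^{+}_{s_l}(p) \subseteq C_{s_l}(p) \cup R_{s_l+1}(p)$. Here I would unfold $M^{+}_{s_l}(p) = R_{s_l}(p) \cap (R_{s_l+1} - R_{s_l+1}(p))$ and apply De Morgan to obtain $R_{s_l}(p) - M^{+}_{s_l}(p) = (R_{s_l}(p) \cap \overline{R_{s_l+1}}) \cup (R_{s_l}(p) \cap R_{s_l+1}(p))$; recognizing the first set as $C_{s_l}(p) = R_{s_l}(p) - R_{s_l+1}$ and the second as a subset of $R_{s_l+1}(p)$ then closes the bound. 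Combining the two terms yields containment in $R_{s_l+1}(p) \cup \mathcal{C} \cup C_{s_l}(p)$, which is precisely the right-hand side for $s_l+1$, completing the induction.
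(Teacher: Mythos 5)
Your proof is correct and takes essentially the same route as the paper's: induction on $s_{1}$ with the identical base case ($E_{s_{0}}\left(p\right) \subseteq R_{s_{0}+1}\left(p\right)$), the induction hypothesis applied to the previously enabled nodes, and the same key De~Morgan step showing $R_{s_{l}}\left(p\right) - M^{+}_{s_{l}}\left(p\right) = C_{s_{l}}\left(p\right) \cup \left(R_{s_{l}}\left(p\right) \cap R_{s_{l}+1}\left(p\right)\right) \subseteq C_{s_{l}}\left(p\right) \cup R_{s_{l}+1}\left(p\right)$. The only difference is organizational --- you split the set difference into two pieces up front, while the paper factors out $\overline{M^{+}_{s_{l}}\left(p\right)}$ and carries it through one long chain of set-algebraic manipulations --- so your arrangement is tidier but mathematically the same argument.
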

\begin{proof}
  Prove this claim for an arbitrary $s_{0}$ by induction on $s_{1}$.
  \begin{description}
  \item[Base case] For the base case, consider $s_{1} = s_{0} + 1$. Then
    \begin{align*}
      R_{s_{1}}\left(p\right) \cup \left[\bigcup_{i \in \left\{s_{0},\ldots,s_{1}-1\right\}} C_{i}\left(p\right)\right] &\supseteq \left[\bigcup_{i \in \left\{s_{0},\ldots,s_{1}-1\right\}} E_{i}\left(p\right) \right] - \left[\bigcup_{i \in \left\{s_{0},\ldots,s_{1}-1\right\}} M^{+}_{i}\left(p\right) \right] \\
                                                                                                                        &\text{iff} \\
      R_{s_{1}}\left(p\right) \cup C_{s_{0}}\left(p\right) &\supseteq E_{s_{0}}\left(p\right) - M^{+}_{s_{0}}\left(p\right)
    \end{align*}
    To conclude the proof of the base case, note that by Definition~\ref{def:node-definitions}
    \begin{align*}
      &E_{s_{0}}\left(p\right) - M^{+}_{s_{0}}\left(p\right)\\
      &= \left(R_{s_{0}+1}\left(p\right) - R_{s_{0}}\right) - \left(R_{s_{0}}\left(p\right) \cap \left(R_{s_{0}+1} - R_{s_{0}+1}\left(p\right)\right)\right)\\
      &\subseteq R_{s_{1}}\left(p\right)
    \end{align*}

  \item[Induction step] Assuming the claim is true for $s_{l} \geq s_{0} + 1$ show that it holds for $s_{l} + 1$.
    Thus, using the induction hypothesis
    \[R_{s_{l}}\left(p\right) \cup \left[\bigcup_{i \in \left\{s_{0},\ldots,s_{l}-1\right\}} C_{i}\left(p\right)\right] \supseteq \left[\bigcup_{i \in \left\{s_{0},\ldots,s_{l}-1\right\}} E_{i}\left(p\right) \right] - \left[\bigcup_{i \in \left\{s_{0},\ldots,s_{l}-1\right\}} M^{+}_{i}\left(p\right) \right]\]
    show that
    \begin{align*}
      R_{s_{l}+1}\left(p\right) \cup \left[\bigcup_{i \in \left\{s_{0},\ldots,\left(s_{l}+1\right)-1\right\}} C_{i}\left(p\right)\right] \supseteq \left[\bigcup_{i \in \left\{s_{0},\ldots,\left(s_{l}+1\right)-1\right\}} E_{i}\left(p\right) \right] - \left[\bigcup_{i \in \left\{s_{0},\ldots,\left(s_{l}+1\right)-1\right\}} M^{+}_{i}\left(p\right) \right]
    \end{align*}
    It follows
    \begin{align*}
      &\left[\bigcup_{i \in \left\{s_{0},\ldots,\left(s_{l}+1\right)-1\right\}} E_{i}\left(p\right) \right] - \left[\bigcup_{i \in \left\{s_{0},\ldots,\left(s_{l}+1\right)-1\right\}} M^{+}_{i}\left(p\right) \right] \\
      &= \left[\bigcup_{i \in \left\{s_{0},\ldots,\left(s_{l}+1\right)-1\right\}} E_{i}\left(p\right) \right] \cap \overline{\left[\bigcup_{i \in \left\{s_{0},\ldots,\left(s_{l}+1\right)-1\right\}} M^{+}_{i}\left(p\right) \right]} \\
      &=\left[\bigcup_{i \in \left\{s_{0},\ldots,\left(s_{l}+1\right)-1\right\}} E_{i}\left(p\right) \right] \cap \left[\bigcap_{i \in \left\{s_{0},\ldots,\left(s_{l}+1\right)-1\right\}} \overline{M^{+}_{i}\left(p\right)} \right] \\
      &=\overline{M^{+}_{s_{l}}\left(p\right)} \cap \left(\left[\bigcup_{i \in \left\{s_{0},\ldots,\left(s_{l}+1\right)-1\right\}} E_{i}\left(p\right) \right] \cap \left[\bigcap_{i \in \left\{s_{0},\ldots,s_{l}-1\right\}} \overline{M^{+}_{i}\left(p\right)} \right]\right) \\
      &=\overline{M^{+}_{s_{l}}\left(p\right)} \cap \left(\left(E_{s_{l}}\left(p\right) \cup \left[\bigcup_{i \in \left\{s_{0},\ldots,s_{l}-1\right\}} E_{i}\left(p\right) \right] \right)\cap \left[\bigcap_{i \in \left\{s_{0},\ldots,s_{l}-1\right\}} \overline{M^{+}_{i}\left(p\right)} \right]\right) \\
      &=\overline{M^{+}_{s_{l}}\left(p\right)} \cap \left(\left(E_{s_{l}}\left(p\right) \cup \left[\bigcup_{i \in \left\{s_{0},\ldots,s_{l}-1\right\}} E_{i}\left(p\right) \right] \right)\cap \overline{\left[\bigcup_{i \in \left\{s_{0},\ldots,s_{l}-1\right\}} M^{+}_{i}\left(p\right) \right]}\right) \\
      &=\overline{M^{+}_{s_{l}}\left(p\right)} \cap \left[\left(E_{s_{l}}\left(p\right) \cap \overline{\left[\bigcup_{i \in \left\{s_{0},\ldots,s_{l}-1\right\}} M^{+}_{i}\left(p\right) \right]} \right) \right.\\
      &\left. \qquad \cup \left(\left[\bigcup_{i \in \left\{s_{0},\ldots,s_{l}-1\right\}} E_{i}\left(p\right) \right] \cap \overline{\left[\bigcup_{i \in \left\{s_{0},\ldots,s_{l}-1\right\}} M^{+}_{i}\left(p\right) \right]} \right)\right] \\
      &\subseteq \overline{M^{+}_{s_{l}}\left(p\right)} \cap \left[\left(E_{s_{l}}\left(p\right) \cap \overline{\left[\bigcup_{i \in \left\{s_{0},\ldots,s_{l}-1\right\}} M^{+}_{i}\left(p\right) \right]} \right) \cup \left(R_{s_{l}}\left(p\right) \cup \left[\bigcup_{i \in \left\{s_{0},\ldots,s_{l}-1\right\}} C_{i}\left(p\right)\right]\right)\right] \\
      &\subseteq \left(\overline{R_{s_{l}}\left(p\right)} \cup \overline{R_{s_{l}+1}} \cup R_{s_{l}+1}\left(p\right)\right) \\
      &\qquad \cap \left[\left(\left(R_{s_{l}+1}\left(p\right) - R_{s_{l}}\right) - \left[\bigcup_{i \in \left\{s_{0},\ldots,s_{l}-1\right\}} M^{+}_{i}\left(p\right) \right] \right) \cup \left(R_{s_{l}}\left(p\right) \cup \left[\bigcup_{i \in \left\{s_{0},\ldots,s_{l}-1\right\}} C_{i}\left(p\right)\right]\right)\right] \\
      \end{align*}
      \begin{align*}
      &\subseteq \left(\overline{R_{s_{l}}\left(p\right)} \cup \overline{R_{s_{l}+1}} \cup R_{s_{l}+1}\left(p\right)\right) \cap \left(R_{s_{l}+1}\left(p\right) \cup R_{s_{l}}\left(p\right) \cup \left[\bigcup_{i \in \left\{s_{0},\ldots,s_{l}-1\right\}} C_{i}\left(p\right)\right]\right) \\
      &= \left(R_{s_{l}}\left(p\right) \cup R_{s_{l}+1}\left(p\right) \cup \left[\bigcup_{i \in \left\{s_{0},\ldots,s_{l}-1\right\}} C_{i}\left(p\right)\right]\right) \cap \left(\overline{R_{s_{l}}\left(p\right)} \cup \overline{R_{s_{l}+1}} \cup R_{s_{l}+1}\left(p\right)\right) \\
      &= \left[\left(R_{s_{l}+1}\left(p\right) \cup \left[\bigcup_{i \in \left\{s_{0},\ldots,s_{l}-1\right\}} C_{i}\left(p\right)\right]\right) \cap \left(\overline{R_{s_{l}}\left(p\right)} \cup \overline{R_{s_{l}+1}} \cup R_{s_{l}+1}\left(p\right)\right)\right] \\
      &\qquad \cup \left[R_{s_{l}}\left(p\right) \cap \left(\overline{R_{s_{l}}\left(p\right)} \cup \overline{R_{s_{l}+1}} \cup R_{s_{l}+1}\left(p\right)\right)\right] \\
      &\subseteq R_{s_{l}+1}\left(p\right) \cup \left[\bigcup_{i \in \left\{s_{0},\ldots,s_{l}-1\right\}} C_{i}\left(p\right)\right] \\
      &\qquad \cup \left[\left(R_{s_{l}}\left(p\right) \cap \overline{R_{s_{l}}\left(p\right)}\right) \cup \left(R_{s_{l}}\left(p\right) \cap \overline{R_{s_{l}+1}}\right) \cup \left(R_{s_{l}}\left(p\right) \cap R_{s_{l}+1}\left(p\right)\right)\right] \\
      &\subseteq R_{s_{l}+1}\left(p\right) \cup \left[\bigcup_{i \in \left\{s_{0},\ldots,s_{l}-1\right\}} C_{i}\left(p\right)\right] \cup \left(R_{s_{l}}\left(p\right) - R_{s_{l}+1}\right) \cup R_{s_{l}+1}\left(p\right) \\
      &= R_{s_{l}+1}\left(p\right) \cup \left[\bigcup_{i \in \left\{s_{0},\ldots,s_{l}-1\right\}} C_{i}\left(p\right)\right] \cup C_{s_{l}}\left(p\right)\\
      &= R_{s_{l}+1}\left(p\right) \cup \left[\bigcup_{i \in \left\{s_{0},\ldots,\left(s_{l}+1\right)-1\right\}} C_{i}\left(p\right)\right]
    \end{align*}
  \end{description}
\end{proof}

\begin{lemma}
\label{lemma:r-next-supseteq}
For any round $\overline{t}$ and processor $p \in Procs$
\begin{align*}
  R_{\overline{t+1}}\left(p\right) &\supseteq \left(E_{\overline{t}}\left(p\right) \cup R_{\overline{t}}\left(p\right) \cup M^{-}_{\overline{t}}\left(p\right)\right) - \left(C_{\overline{t}}\left(p\right) \cup M^{+}_{\overline{t}}\left(p\right)\right)
\end{align*}
\end{lemma}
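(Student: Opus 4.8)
The plan is to prove the three containments obtained by distributing the set difference on the right-hand side over the union, and then combine them. Since set difference distributes over union, we have
\[
\left(E_{\overline{t}}\left(p\right) \cup R_{\overline{t}}\left(p\right) \cup M^{-}_{\overline{t}}\left(p\right)\right) - \left(C_{\overline{t}}\left(p\right) \cup M^{+}_{\overline{t}}\left(p\right)\right) = A \cup B \cup C,
\]
where $A = E_{\overline{t}}\left(p\right) - \left(C_{\overline{t}}\left(p\right) \cup M^{+}_{\overline{t}}\left(p\right)\right)$, $B = R_{\overline{t}}\left(p\right) - \left(C_{\overline{t}}\left(p\right) \cup M^{+}_{\overline{t}}\left(p\right)\right)$, and $C = M^{-}_{\overline{t}}\left(p\right) - \left(C_{\overline{t}}\left(p\right) \cup M^{+}_{\overline{t}}\left(p\right)\right)$. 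It then suffices to show $R_{\overline{t+1}}\left(p\right) \supseteq A$, $R_{\overline{t+1}}\left(p\right) \supseteq B$, and $R_{\overline{t+1}}\left(p\right) \supseteq C$, since each of the three Claims already established in this subsection is tailored to exactly one of these summands.

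The summand $C$ is immediate: it is literally the right-hand side of Claim~\ref{claim:claim:r-next-part-1-supseteq}, so $R_{\overline{t+1}}\left(p\right) \supseteq C$ with no further work. For $A$, I would instantiate Claim~\ref{claim:claim:r-next-part-3-supseteq} with $s_{0} = \overline{t}\left[0\right]$ and $s_{1} = \overline{t+1}\left[0\right]$; by the round-level definitions (Definition~\ref{def:node-definitions-round}) its conclusion reads $R_{\overline{t+1}}\left(p\right) \cup C_{\overline{t}}\left(p\right) \supseteq E_{\overline{t}}\left(p\right) - M^{+}_{\overline{t}}\left(p\right)$. I then apply the elementary set-algebra fact that $X \cup Y \supseteq Z$ implies $X \supseteq Z - Y$ (taking $Y = C_{\overline{t}}\left(p\right)$), which gives $R_{\overline{t+1}}\left(p\right) \supseteq \left(E_{\overline{t}}\left(p\right) - M^{+}_{\overline{t}}\left(p\right)\right) - C_{\overline{t}}\left(p\right) = A$.

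The summand $B$ is handled by Claim~\ref{claim:claim:r-next-part-2-supseteq}, instantiated with the same $s_{0},s_{1}$, but here an extra recognition step is needed because the claim's right-hand side is phrased as an intersection of terms $R_{s_{0}}\left(p\right) \cap \left(\overline{R_{i}\left(p\right)} \cup \overline{R_{i+1}} \cup R_{i+1}\left(p\right)\right)$. The key observation is the De Morgan identity $\overline{R_{i}\left(p\right)} \cup \overline{R_{i+1}} \cup R_{i+1}\left(p\right) = \overline{R_{i}\left(p\right) \cap R_{i+1} \cap \overline{R_{i+1}\left(p\right)}} = \overline{M^{+}_{i}\left(p\right)}$, where the last equality uses the definition of $M^{+}_{i}\left(p\right)$ from Definition~\ref{def:node-definitions}. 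With this, the intersection over $i \in \left\{\overline{t}\left[0\right],\ldots,\overline{t+1}\left[0\right]-1\right\}$ collapses to $R_{\overline{t}}\left(p\right) \cap \bigcap_{i} \overline{M^{+}_{i}\left(p\right)} = R_{\overline{t}}\left(p\right) - M^{+}_{\overline{t}}\left(p\right)$, so the claim becomes $R_{\overline{t+1}}\left(p\right) \cup C_{\overline{t}}\left(p\right) \supseteq R_{\overline{t}}\left(p\right) - M^{+}_{\overline{t}}\left(p\right)$; applying the same set-algebra fact yields $R_{\overline{t+1}}\left(p\right) \supseteq B$.

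Finally I combine: since $R_{\overline{t+1}}\left(p\right)$ contains each of $A$, $B$, and $C$, it contains their union $A \cup B \cup C$, which is exactly the right-hand side of the lemma. The substantive content is all carried by the three previously-proved Claims (each an induction over the steps of the round), so the lemma itself is essentially an assembly step. I expect the only non-mechanical point to be the De Morgan collapse needed to interpret Claim~\ref{claim:claim:r-next-part-2-supseteq} as the statement about $R_{\overline{t}}\left(p\right) - M^{+}_{\overline{t}}\left(p\right)$; everything else is routine set-algebra bookkeeping.
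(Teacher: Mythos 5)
Your proposal is correct and takes essentially the same route as the paper's own proof: the same decomposition of the right-hand side into three summands, each handled by Claims~\ref{claim:claim:r-next-part-1-supseteq}, \ref{claim:claim:r-next-part-2-supseteq} and \ref{claim:claim:r-next-part-3-supseteq} respectively (with $s_{0} = \overline{t}\left[0\right]$, $s_{1} = \overline{t+1}\left[0\right]$), the same De Morgan identification of the intersection in Claim~\ref{claim:claim:r-next-part-2-supseteq} with $R_{\overline{t}}\left(p\right) - M^{+}_{\overline{t}}\left(p\right)$, and the same final step of removing $C_{\overline{t}}\left(p\right)$ from the union. There are no gaps.
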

\begin{proof}[Proof of Lemma~\ref{lemma:r-next-supseteq}]
  To prove this direction of the lemma, it suffices to show:
  \begin{enumerate}
  \item $R_{\overline{t+1}}\left(p\right) \supseteq M^{-}_{\overline{t}}\left(p\right)- \left(C_{\overline{t}}\left(p\right) \cup M^{+}_{\overline{t}}\left(p\right)\right)$ \label{proof:r-next-supseteq-1}
  \item $R_{\overline{t+1}}\left(p\right) \supseteq R_{\overline{t}}\left(p\right) - \left(C_{\overline{t}}\left(p\right) \cup M^{+}_{\overline{t}}\left(p\right)\right)$ \label{proof:r-next-supseteq-2}
  \item $R_{\overline{t+1}}\left(p\right) \supseteq E_{\overline{t}}\left(p\right) - \left(C_{\overline{t}}\left(p\right) \cup M^{+}_{\overline{t}}\left(p\right)\right)$ \label{proof:r-next-supseteq-3}
  \end{enumerate}
  Prove each of these propositions:
  \begin{enumerate}
  \item Claim~\ref{claim:claim:r-next-part-1-supseteq} implies Proposition~\ref{proof:r-next-supseteq-1} holds.
  \item To prove Proposition~\ref{proof:r-next-supseteq-2}:
    \begin{align*}
      R_{\overline{t+1}}\left(p\right) &\supseteq R_{\overline{t}}\left(p\right) - \left(C_{\overline{t}}\left(p\right) \cup M^{+}_{\overline{t}}\left(p\right)\right) \\
                                       &= \left(R_{\overline{t}}\left(p\right) - M^{+}_{\overline{t}}\left(p\right)\right) - C_{\overline{t}}\left(p\right) \\
                                       &= \left(R_{\overline{t}\left[0\right]}\left(p\right) - \left[\bigcup_{i \in \left\{\overline{t}\left[0\right],\ldots,\overline{t}\left[L-1\right]\right\}}M^{+}_{i}\left(p\right)\right]\right) - C_{\overline{t}}\left(p\right) \\
                                       &= \left(R_{\overline{t}\left[0\right]}\left(p\right) - \left[\bigcup_{i \in \left\{\overline{t}\left[0\right],\ldots,\overline{t}\left[L-1\right]\right\}}R_{i}\left(p\right) \cap \left(R_{i+1} - R_{i+1}\left(p\right)\right)\right]\right) - C_{\overline{t}}\left(p\right) \\
                                       &= \left(R_{\overline{t}\left[0\right]}\left(p\right) \cap \left[\bigcap_{i \in \left\{\overline{t}\left[0\right],\ldots,\overline{t}\left[L-1\right]\right\}}\overline{R_{i}\left(p\right) \cap \left(R_{i+1} - R_{i+1}\left(p\right)\right)}\right]\right) - C_{\overline{t}}\left(p\right) \\
                                       &= \left(R_{\overline{t}\left[0\right]}\left(p\right) \cap \left[\bigcap_{i \in \left\{\overline{t}\left[0\right],\ldots,\overline{t}\left[L-1\right]\right\}}\overline{R_{i}\left(p\right)} \cup \overline{R_{i+1}} \cup R_{i+1}\left(p\right)\right]\right) - C_{\overline{t}}\left(p\right) \\
                                       &= \left[\bigcap_{i \in \left\{\overline{t}\left[0\right],\ldots,\overline{t}\left[L-1\right]\right\}} R_{\overline{t}\left[0\right]}\left(p\right) \cap \left(\overline{R_{i}\left(p\right)} \cup \overline{R_{i+1}} \cup R_{i+1}\left(p\right)\right)\right] - C_{\overline{t}}\left(p\right) \\
    \end{align*}
    By Claim~\ref{claim:claim:r-next-part-2-supseteq}, letting $s_{0} = \overline{t}\left[0\right]$ and $s_{1} = \overline{t+1}\left[0\right]$, it follows \[R_{\overline{t+1}}\left(p\right) \supseteq \left(R_{\overline{t+1}}\left(p\right) \cup C_{\overline{t}}\left(p\right)\right) - C_{\overline{t}}\left(p\right)\supseteq \left(R_{\overline{t}}\left(p\right) - M^{+}_{\overline{t}}\left(p\right)\right) - C_{\overline{t}}\left(p\right).\]

  \item By Claim~\ref{claim:claim:r-next-part-3-supseteq}, letting $s_{0} = \overline{t}\left[0\right]$ and $s_{1} = \overline{t+1}\left[0\right]$, it follows
    \begin{align*}
      R_{\overline{t+1}}\left(p\right) \cup C_{\overline{t}}\left(p\right) &\supseteq E_{\overline{t}}\left(p\right) - M^{+}_{\overline{t}}\left(p\right)
    \end{align*}
    To conclude this proof note that
    \begin{align*}
      R_{\overline{t+1}}\left(p\right) \supseteq \left(R_{\overline{t+1}}\left(p\right) \cup C_{\overline{t}}\left(p\right)\right) -  C_{\overline{t}}\left(p\right) \supseteq \left(E_{\overline{t}}\left(p\right) - M^{+}_{\overline{t}}\left(p\right)\right) -  C_{\overline{t}}\left(p\right)
    \end{align*}
  \end{enumerate}
\end{proof}

\begin{proof}[Proof of Lemma~\ref{lemma:r-next}]
  Lemmas~\ref{lemma:r-next-subseteq} and \ref{lemma:r-next-supseteq} imply this result.
\end{proof}

%\subsection{Full proof for Lemma~\ref{lemma:stability-generalize}}
%\label{sub:1-proof-lemma-3}
%\input{sections/2-preliminaries-stability-generalize-lemma-proof}

\subsection{Full proof for Lemma~\ref{lemma:connecting-lemma} (Connecting Lemma)}
\label{sub:1-proof-lemma-4}
\begin{claim}
\label{claim:progression-connecting-lemma-1}
For any round $\overline{t}$ and $p \in Procs$, $E_{\overline{t}}\left(p\right) \cap R_{\overline{t}}\left(p\right) = \emptyset$.
\end{claim}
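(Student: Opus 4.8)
The plan is to argue by contradiction, leaning on the monotone progression of node states guaranteed by Definition~\ref{def:node-states}. Suppose some node $\mu \in E_{\overline{t}}(p) \cap R_{\overline{t}}(p)$. By Definition~\ref{def:node-definitions-round} we have $R_{\overline{t}}(p) = R_{\overline{t}[0]}(p) \subseteq R_{\overline{t}[0]}$, so $\mu$ is \textbf{ready} at the first step of the round. Also by Definition~\ref{def:node-definitions-round}, $\mu \in E_{\overline{t}}(p)$ means there is a step $i \in \{\overline{t}[0], \ldots, \overline{t}[L-1]\}$ with $\mu \in E_i(p) = R_{i+1}(p) - R_i$; in particular $\mu \in R_{i+1}$ while $\mu \notin R_i$.

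Next I would invoke the key structural fact that a node's state can only progress from \textbf{not ready} to \textbf{ready} to \textbf{executed} (only not-ready nodes become ready, only ready nodes become executed, and the executed state is absorbing, exactly as used in the proof of Claim~\ref{claim:r-next-requirement-claim-1}). The consequence I need is that, for any fixed node, the set of steps at which it is \textbf{ready} forms a contiguous interval: once a node leaves the ready state it is executed and can never return to ready. Concretely, if $\mu \in R_j$ and $\mu \in R_k$ for $j \le k$, then $\mu \in R_m$ for every $m$ with $j \le m \le k$, since otherwise $\mu$ would be \textbf{executed} at some intermediate step and would remain executed at step $k$, contradicting $\mu \in R_k$.

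Applying this observation with $j = \overline{t}[0]$ and $k = i+1$ (note $\overline{t}[0] \le i < i+1$ because $i$ lies inside the round), the facts $\mu \in R_{\overline{t}[0]}$ and $\mu \in R_{i+1}$ force $\mu \in R_i$, which directly contradicts $\mu \notin R_i$ obtained above. Hence no such $\mu$ exists and $E_{\overline{t}}(p) \cap R_{\overline{t}}(p) = \emptyset$.

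The main obstacle is making the contiguity-of-readiness step fully rigorous rather than appealing to intuition; it must be derived carefully from the transition constraints of Definition~\ref{def:node-states}, ideally isolated as a small standalone observation (a node cannot be ready, then not ready, then ready again). Everything else is routine unfolding of the definitions of $E_{\overline{t}}(p)$ and $R_{\overline{t}}(p)$ together with the containment $R_{\overline{t}[0]}(p) \subseteq R_{\overline{t}[0]}$.
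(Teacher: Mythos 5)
Your proof is correct, and it rests on exactly the same structural fact as the paper's proof, but it is packaged differently. The paper proves the claim by induction over the steps of the round: it shows $\bigl[\bigcup_{i \in \{s_{0},\ldots,s_{1}-1\}}E_{i}(p)\bigr] \cap R_{s_{0}}(p) = \emptyset$, with a base case handled by pure set algebra ($E_{s_{0}}(p) \subseteq \overline{R_{s_{0}}}$ while $R_{s_{0}}(p) \subseteq R_{s_{0}}$) and an induction step that reduces to the assertion $E_{s_{l}} \cap R_{s_{0}} = \emptyset$ for $s_{l} > s_{0}$, which it justifies by citing Definition~\ref{def:node-states}. You instead pick a witness node $\mu \in E_{\overline{t}}(p) \cap R_{\overline{t}}(p)$, unfold the round-level definitions to find a step $i$ with $\mu \in R_{i+1}$, $\mu \notin R_{i}$, and derive a contradiction from the contiguity of the ready interval. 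The two arguments are mathematically the same: your contiguity observation is precisely what makes the paper's assertion $E_{s_{l}} \cap R_{s_{0}} = \emptyset$ true (a node ready at $s_{0}$ that is not ready at a later step must be executed, hence can never re-enter the ready state). What your version buys is economy: since a finite union intersects a set nonemptily iff some member does, choosing a witness step makes the induction unnecessary, and it also handles the case $i = \overline{t}[0]$ uniformly rather than as a separate base case. What the paper's version buys is that the bookkeeping over the union is completely explicit. Your concern about rigor is also calibrated correctly: the contiguity step sits at exactly the same level of justification as the paper's own appeal to Definition~\ref{def:node-states}, so isolating it as a small standalone observation, as you suggest, would make your proof fully rigorous.
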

\begin{proof}
  Given an arbitrary step $s_{0}$, we prove by induction on a step $s_{1}$ (where $s_{1} > s_{0}$) that $\left[\bigcup_{i \in \left\{s_{0},\ldots,s_{1}-1\right\}}E_{i}\left(p\right)\right] \cap R_{s_{0}}\left(p\right) = \emptyset$.
  \begin{description}
  \item[Base case] Let $s_{1} = s_{0} + 1$.
    Then $\left[\bigcup_{i \in \left\{s_{0},\ldots,s_{1}-1\right\}}E_{i}\left(p\right)\right] \cap R_{s_{0}}\left(p\right) = \emptyset$ iff $E_{s_{0}}\left(p\right) \cap R_{s_{0}}\left(p\right) = \emptyset$.
    By Definition~\ref{def:node-definitions}, it follows
    \begin{align*}
      &E_{s_{0}}\left(p\right) \cap R_{s_{0}}\left(p\right) \\
      &= \left(R_{s_{0}+1}\parens{p} - R_{s_{0}}\right) \cap R_{s_{0}}\left(p\right) \\
      &= R_{s_{1}}\parens{p} \cap \overline{R_{s_{0}}} \cap R_{s_{0}}\left(p\right) \\
      &\subseteq \overline{R_{s_{0}}} \cap R_{s_{0}} \\
      &= \emptyset.
    \end{align*}
  \item[Induction step] To prove the induction step, assume the lemma holds for a step $s_{l} > s_{0}$ and then prove that it also holds for $s_{l+1}$.
    \begin{align*}
      &\left[\bigcup_{i \in \left\{s_{0},\ldots,\left(s_{l}+1\right)-1\right\}}E_{i}\left(p\right)\right] \cap R_{s_{0}}\left(p\right) \\
      &=\left(\left[\bigcup_{i \in \left\{s_{0},\ldots,s_{l}-1\right\}}E_{i}\left(p\right)\right] \cap R_{s_{0}}\left(p\right)\right) \cup \left(E_{s_{l}}\left(p\right) \cap R_{s_{0}}\left(p\right)\right) \\
      &= E_{s_{l}}\left(p\right) \cap R_{s_{0}}\left(p\right)
    \end{align*}
    Since $s_{l} > s_{0}$, by Definition~\ref{def:node-states} it follows $E_{s_{l}} \cap R_{s_{0}} = \emptyset$, implying $E_{s_{l}}\left(p\right) \cap R_{s_{0}}\left(p\right) = \emptyset$.
  \end{description}
  To conclude the proof, let $s_{0} = \overline{t}\left[0\right]$ and $s_{1} = \overline{t+1}\left[0\right]$.
\end{proof}

\begin{claim}
\label{claim:progression-connecting-lemma-2}
For any round $\overline{t}$ and $p \in Procs$, $\left(R_{\overline{t}}\left(p\right) \cup E_{\overline{t}}\left(p\right)\right) \cap M^{-}_{\overline{t}}\left(p\right) = \emptyset$.
\end{claim}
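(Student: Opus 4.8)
The plan is to argue by contradiction, treating the two pieces $R_{\overline{t}}(p) \cap M^{-}_{\overline{t}}(p)$ and $E_{\overline{t}}(p) \cap M^{-}_{\overline{t}}(p)$ separately but reducing both to the same underlying fact: a node cannot be migrated \emph{to} $p$ during $\overline{t}$ if it was already ``owned'' by $p$ at an earlier step of the round, because that would force it to be migrated \emph{away from} $p$ in between, giving two migrations of the same node and contradicting Definition~\ref{def:round}. Concretely, I would first record the fact, already available from the discussion following Lemma~\ref{lemma:migrated-to-p-subset-migrated-from-all-but-p}, that $M^{+}_{\overline{t}}(p) \cap M^{-}_{\overline{t}}(p) = \emptyset$.

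The central auxiliary statement I would isolate is: if $\mu \in R_{a}(p)$ and $\mu \in R_{b}(q)$ for steps $a < b$ of $\overline{t}$ and some $q \neq p$, then $\mu \in M^{+}_{m}(p)$ for some $m$ with $a \le m < b$. To see this, note that being attached to a processor implies being \textbf{ready} (Definition~\ref{def:node-states}); since $\mu$ is ready at both $a$ and $b$ and a ready node can only become \textbf{executed} (never return to \textbf{ready}), $\mu$ is ready throughout $[a,b]$. Taking $m$ to be the last step in $[a,b)$ at which $\mu \in R_{m}(p)$, we get $\mu \in R_{m}(p)$, $\mu \in R_{m+1}$, and $\mu \notin R_{m+1}(p)$, i.e.\ $\mu \in M^{+}_{m}(p)$ by Definition~\ref{def:node-definitions}; as $[a,b) \subseteq \braces{\overline{t}[0],\ldots,\overline{t}[L-1]}$ this gives $\mu \in M^{+}_{\overline{t}}(p)$ by Definition~\ref{def:node-definitions-round}.

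With this in hand, suppose $\mu \in M^{-}_{\overline{t}}(p)$, so $\mu \in M^{-}_{k}(p)$ for some step $k$, meaning $\mu \in R_{k}$, $\mu \notin R_{k}(p)$, and $\mu \in R_{k+1}(p)$; in particular $\mu$ is attached to some $q \neq p$ at step $k$. For the $R_{\overline{t}}(p)$ part, assume also $\mu \in R_{\overline{t}[0]}(p)$: if $k = \overline{t}[0]$ this contradicts $\mu \notin R_{k}(p)$ directly, and if $k > \overline{t}[0]$ the auxiliary statement (with $a = \overline{t}[0]$, $b = k$) yields $\mu \in M^{+}_{\overline{t}}(p)$, contradicting $M^{+}_{\overline{t}}(p) \cap M^{-}_{\overline{t}}(p) = \emptyset$. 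For the $E_{\overline{t}}(p)$ part, assume instead $\mu \in E_{j}(p)$ for some step $j$, so $\mu \notin R_{j}$ and $\mu \in R_{j+1}(p)$; since $\mu \notin R_{j}$ forces $\mu$ to be \textbf{not ready} at $j$ (it cannot be executed, else it could not be ready at $j+1$), monotonicity of the node states gives $k > j$, hence $k \ge j+1$. If $k = j+1$ this contradicts $\mu \notin R_{k}(p)$ against $\mu \in R_{j+1}(p)$; otherwise $k > j+1$ and the auxiliary statement (with $a = j+1$, $b = k$) again produces $\mu \in M^{+}_{\overline{t}}(p)$, the same contradiction.

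The step I expect to require the most care is the auxiliary statement, specifically justifying that $\mu$ stays ready across the whole interval so that ``the last step at which $\mu$ is attached to $p$'' is well-defined, and that the migration it witnesses lands inside $\overline{t}$'s step range; everything else is bookkeeping over the monotone node-state transitions of Definition~\ref{def:node-states} together with the edge cases $k = \overline{t}[0]$ and $k = j+1$, which must be dispatched by direct contradiction rather than through the migration witness.
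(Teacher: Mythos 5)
Your proof is correct and follows essentially the same route as the paper's: both argue by contradiction on the two intersections $R_{\overline{t}}\left(p\right) \cap M^{-}_{\overline{t}}\left(p\right)$ and $E_{\overline{t}}\left(p\right) \cap M^{-}_{\overline{t}}\left(p\right)$ separately, use the monotonicity of node states to locate an intermediate step at which the node leaves $p$'s partition while still ready (hence lies in $M^{+}_{\overline{t}}\left(p\right)$), and then contradict the no-double-migration property of rounds via Lemma~\ref{lemma:migrated-to-p-subset-migrated-from-all-but-p}, handling the same boundary cases directly. The only difference is organizational: you factor the step-locating argument into a single auxiliary statement, which the paper instead inlines in each of the two cases.
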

\begin{proof}
  For the purpose of contradiction, let us assume $\left(R_{\overline{t}}\left(p\right) \cup E_{\overline{t}}\left(p\right)\right) \cap M^{-}_{\overline{t}}\left(p\right) \neq \emptyset$.
  Then, there must be a step $j \in \left\{\overline{t}\left[0\right],\ldots,\overline{t}\left[L-1\right]\right\}$ such that $\left(R_{\overline{t}}\left(p\right) \cup E_{\overline{t}}\left(p\right)\right) \cap M^{-}_{j}\left(p\right) \neq \emptyset$.
  Thus, at least one of the following propositions has to hold:
  \begin{enumerate}
  \item $R_{\overline{t}}\left(p\right) \cap M^{-}_{j}\left(p\right) \neq \emptyset$; \label{case:progression-connecting-lemma-2-1}
  \item $E_{\overline{t}}\left(p\right) \cap M^{-}_{j}\left(p\right) \neq \emptyset$. \label{case:progression-connecting-lemma-2-2}
  \end{enumerate}

  To conclude the proof of this claim, we prove that none of the propositions holds, contradicting our hypothesis that $\left(R_{\overline{t}}\left(p\right) \cup E_{\overline{t}}\left(p\right)\right) \cap M^{-}_{\overline{t}}\left(p\right) \neq \emptyset$.
  \begin{description}
  \item[Contradiction for Proposition~\ref{case:progression-connecting-lemma-2-1}] Let $S = R_{\overline{t}}\left(p\right) \cap M^{-}_{j}\left(p\right)$.
    Then,
    \begin{align*}
      S &= R_{\overline{t}}\left(p\right) \cap M^{-}_{j}\left(p\right) \\
        &= R_{\overline{t}\left[0\right]}\left(p\right) \cap \left(R_{j+1}\left(p\right) \cap \left(R_{j} - R_{j}\left(p\right)\right)\right) \\
        &= R_{\overline{t}\left[0\right]}\left(p\right) \cap R_{j+1}\left(p\right) \cap R_{j} \cap \overline{R_{j}\left(p\right)}
    \end{align*}
    If $j$ were $\overline{t}\left[0\right]$, then $S = \emptyset$, and so, $j > \overline{t}\left[0\right]$.
    Consider any node $\mu \in S$.
    Since a node that is \textbf{ready} can only become \textbf{executed}, and a node in state \textbf{executed} does not change its state, it follows $\forall i \in \left\{\overline{t}\left[0\right],\ldots,j+1\right\}, \mu \in R_{i}$.
    Noting that $\mu \in R_{\overline{t}\left[0\right]}\left(p\right) \cap \overline{R_{j}\left(p\right)} \cap R_{j}$, then there must be a step $k \in \left\{\overline{t}\left[0\right],\ldots,j-1\right\}$ such that $\mu \in R_{k}\left(p\right) \cap \overline{R_{k+1}\left(p\right)}$.
    Because $\forall i \in \left\{\overline{t}\left[0\right],\ldots,j+1\right\}, \mu \in R_{i}$, it follows $\mu \in R_{k}\left(p\right) \cap \overline{R_{k+1}\left(p\right)} \cap R_{k+1}$.
    By Definition~\ref{def:node-definitions}, it follows $\mu \in M^{+}_{k}\left(p\right)$, implying $\mu \in M^{+}_{\overline{t}}\left(p\right)$.
    However, since $\mu \in M^{-}_{\overline{t}}\left(p\right)$ and $\mu \in M^{+}_{\overline{t}}\left(p\right)$, it follows $M^{-}_{\overline{t}}\left(p\right) \cap M^{+}_{\overline{t}}\left(p\right) \neq \emptyset$, which, together with Lemma~\ref{lemma:migrated-to-p-subset-migrated-from-all-but-p}, contradicts Definition~\ref{def:round} --- the definition of a round.

  \item[Contradiction for Proposition~\ref{case:progression-connecting-lemma-2-2}] If $E_{\overline{t}}\left(p\right) \cap M^{-}_{j}\left(p\right) \neq \emptyset$, then there must be a step $m \in \left\{\overline{t}\left[0\right],\ldots,\overline{t}\left[L-1\right]\right\}$ such that $E_{m}\left(p\right) \cap M^{-}_{j}\left(p\right) \neq \emptyset$.
    Let $S = E_{m}\left(p\right) \cap M^{-}_{j}\left(p\right)$.
    It follows
    \begin{align*}
      S &= E_{m}\left(p\right) \cap M^{-}_{j}\left(p\right) \\
        &= \left(R_{m+1}\left(p\right) - R_{m}\right) \cap \left(R_{j+1}\left(p\right) \cap \left(R_{j} - R_{j}\left(p\right)\right)\right) \\
        &= R_{m+1}\left(p\right) \cap \overline{R_{m}} \cap R_{j+1}\left(p\right) \cap R_{j} \cap \overline{R_{j}\left(p\right)} \\
    \end{align*}
    Consider any node $\mu \in S$.

    If a node is not in state \textbf{ready} at step $m$, then it is either in state \textbf{not ready} or \textbf{executed}.
    Thus, at step $m$, $\mu$ is either in state \textbf{not ready} or \textbf{executed}.
    Because at step $m+1$ $\mu$ is in state \textbf{ready}, and since a node that is in state \textbf{executed} does not change its state, we deduce that $\mu$ is in state \textbf{not ready} at step $m$.
    Definition~\ref{def:node-states} then implies that until step $m$ (including $m$), $\mu$ has been in state \textbf{not ready}.

    By Definition~\ref{def:node-definitions}, a node can only be migrated at some step $i$ if it is ready at step $i$.
    Since $\mu$ is migrated at step $j$, then it must be ready at that step, implying $m < j$.
    Furthermore, if $m = j - 1$, then $S = \emptyset$, and so, it follows $m \in \left\{\overline{t}\left[0\right],\ldots,j-2\right\}$.
    
    Since a node that is \textbf{ready} can only become \textbf{executed}, and a node in state \textbf{executed} does not change its state, $\mu \in S$ implies $\forall i \in \left\{m+1,\ldots,j+1\right\}, \mu \in R_{i}$.
    Moreover, as $\mu \in R_{m+1}\left(p\right) \cap \overline{R_{j}\left(p\right)}$ and $m < j - 1$, it follows that there is a step $k \in \left\{m+1,\ldots,j\right\}$ such that $\mu \in R_{k}\left(p\right) \cap \overline{R_{k+1}\left(p\right)}$.
    Because $\forall i \in \left\{m+1,\ldots,j+1\right\}, \mu \in R_{i}$, it follows $\mu \in R_{k}\left(p\right) \cap \overline{R_{k+1}\left(p\right)} \cap R_{k+1}$.
    
    By Definition~\ref{def:node-definitions}, it follows $\mu \in M^{+}_{k}\left(p\right)$, implying $\mu \in M^{+}_{\overline{t}}\left(p\right)$.
    However, since $\mu \in M^{-}_{\overline{t}}\left(p\right)$ and $\mu \in M^{+}_{\overline{t}}\left(p\right)$, it follows $M^{-}_{\overline{t}}\left(p\right) \cap M^{+}_{\overline{t}}\left(p\right) \neq \emptyset$, which, together with Lemma~\ref{lemma:migrated-to-p-subset-migrated-from-all-but-p}, contradicts Definition~\ref{def:round} --- the definition of rounds. 
  \end{description}
\end{proof}

\begin{claim}
\label{claim:progression-connecting-lemma-3}
For any round $\overline{t}$ and $p \in Procs$, $C_{\overline{t}}\left(p\right) \cap M^{+}_{\overline{t}}\left(p\right) = \emptyset$.
\end{claim}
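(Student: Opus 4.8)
The plan is to argue by contradiction, reducing the round-level statement to the step level and then exploiting the monotonicity of node states together with the defining property of rounds. Suppose $C_{\overline{t}}\parens{p} \cap M^{+}_{\overline{t}}\parens{p} \neq \emptyset$ and fix a node $\mu$ in the intersection. By Definition~\ref{def:node-definitions-round} there is a step $i \in \braces{\overline{t}\bracks{0},\ldots,\overline{t}\bracks{L-1}}$ with $\mu \in C_{i}\parens{p}$ and a step $j$ in the same range with $\mu \in M^{+}_{j}\parens{p}$. Unfolding Definition~\ref{def:node-definitions}, $\mu \in C_{i}\parens{p}$ gives $\mu \in R_{i}\parens{p}$ and $\mu \notin R_{i+1}$ (so $\mu$ is \textbf{executed} from step $i+1$ onward), while $\mu \in M^{+}_{j}\parens{p}$ gives $\mu \in R_{j}\parens{p}$, $\mu \in R_{j+1}$, and $\mu \notin R_{j+1}\parens{p}$. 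I would then split on the order of $i$ and $j$.

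The two easy cases are $i = j$ and $i < j$. If $i = j$, then $\mu \notin R_{i+1}$ and $\mu \in R_{j+1} = R_{i+1}$ hold simultaneously, an immediate contradiction. If $i < j$, then $\mu$ is \textbf{executed} at step $i+1$; since by Definition~\ref{def:node-states} an \textbf{executed} node never changes state, $\mu \notin R_{k}$ for every $k \geq i+1$, and in particular $\mu \notin R_{j}$, whence $\mu \notin R_{j}\parens{p}$, contradicting $\mu \in R_{j}\parens{p}$.

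The remaining case $j < i$ is the main obstacle and is where the round structure is genuinely used. Here $\mu$ is first migrated away from $p$ at step $j$, so $\mu \in R_{j+1}$ but $\mu \notin R_{j+1}\parens{p}$, and only later executed by $p$, so $\mu \in R_{i}\parens{p}$ with $i > j$. Since a \textbf{ready} node can only become \textbf{executed}, and $\mu$ is ready both at step $j+1$ and at step $i$, $\mu$ stays \textbf{ready} throughout $\braces{j+1,\ldots,i}$; combined with $\mu \notin R_{j+1}\parens{p}$ and $\mu \in R_{i}\parens{p}$, its partition membership flips, so there must be some step $k \in \braces{j+1,\ldots,i-1}$ at which $\mu$ re-enters $p$'s partition, i.e.\ $\mu \in R_{k+1}\parens{p} \cap \overline{R_{k}\parens{p}} \cap R_{k}$. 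By Definition~\ref{def:node-definitions} this means $\mu \in M^{-}_{k}\parens{p}$, hence $\mu \in M^{-}_{\overline{t}}\parens{p}$. But then $\mu \in M^{+}_{\overline{t}}\parens{p} \cap M^{-}_{\overline{t}}\parens{p}$, which, via Lemma~\ref{lemma:migrated-to-p-subset-migrated-from-all-but-p}, contradicts the requirement in Definition~\ref{def:round} that no node is migrated more than once during a round. This closes the final case and completes the proof. The piece I expect to be slightly delicate is pinning down the step $k$ at which $\mu$ returns to $p$ and checking $\mu \in R_{k}$ there; this is exactly the state-tracking template already carried out in Claims~\ref{claim:r-next-requirement-claim-1} and~\ref{claim:progression-connecting-lemma-2}, which I would reuse.
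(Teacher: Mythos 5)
Your proof is correct, and it uses the same three atomic arguments as the paper's own proof, but it organizes them quite differently. The paper proceeds by induction on the steps of the round, proving the prefix-wise statement $\bigl[\bigcup_{i \in \{s_0,\ldots,s_1-1\}} C_i(p)\bigr] \cap \bigl[\bigcup_{i \in \{s_0,\ldots,s_1-1\}} M^{+}_i(p)\bigr] = \emptyset$; its induction step splits into a same-step term (killed by a direct set computation), a ``$C$ before $M^{+}$'' cross term (killed by the fact that executed nodes stay executed, so they cannot later be ready for migration), and an ``$M^{+}$ before $C$'' cross term (killed by exhibiting a return migration $\mu \in M^{-}_k(p)$, which together with Lemma~\ref{lemma:migrated-to-p-subset-migrated-from-all-but-p} contradicts the no-double-migration clause of Definition~\ref{def:round}). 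Your trichotomy on the order of the execution step $i$ and the migration step $j$ --- cases $i = j$, $i < j$, and $j < i$ --- corresponds exactly to these three parts, but dispenses with the induction scaffolding entirely: since the round-level sets are finite unions of the step-level sets, pairwise step-level disjointness is all that is needed, and your direct case analysis is the more economical way to get it. What the paper's induction buys is uniformity with the neighbouring appendix claims, which follow the same inductive template over prefix unions; what your route buys is brevity and a clearer view of why the claim is true.

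One small point to tighten: in your case $j < i$, the existence of a re-entry step $k \in \{j+1,\ldots,i-1\}$ presupposes $j+1 < i$. When $j+1 = i$, the premises $\mu \notin R_{j+1}(p)$ and $\mu \in R_i(p)$ contradict each other outright, so that degenerate sub-case closes immediately without any migration argument; the paper makes the analogous observation explicitly when it argues that $j < s_l - 1$. This is not a gap, just a sub-case worth naming.
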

\begin{proof}
  Let $s_{0} = \overline{t}\left[0\right]$.
  We prove by induction on a step $s_{1} \in \left\{\overline{t}\left[0\right]+1,\ldots,\overline{t}\left[L-1\right]+1\right\}$ that $\left[\bigcup_{i \in \left\{s_{0},\ldots,s_{1}-1\right\}}C_{i}\left(p\right)\right] \cap \left[\bigcup_{i \in \left\{s_{0},\ldots,s_{1}-1\right\}}M^{+}_{i}\left(p\right)\right] = \emptyset$.
  \begin{description}
  \item[Base case] Let $s_{1} = s_{0} + 1$.
    Then $\left[\bigcup_{i \in \left\{s_{0},\ldots,s_{1}-1\right\}}C_{i}\left(p\right)\right] \cap \left[\bigcup_{i \in \left\{s_{0},\ldots,s_{1}-1\right\}}M^{+}_{i}\left(p\right)\right] = \emptyset$ iff $C_{s_{0}}\left(p\right) \cap M^{+}_{s_{0}}\left(p\right) = \emptyset$.
    By Definition~\ref{def:node-definitions}, it follows
    \begin{align*}
      &C_{s_{0}}\left(p\right) \cap M^{+}_{s_{0}}\left(p\right) \\
      &= \left(R_{s_{0}}\left(p\right) - R_{s_{1}}\right) \cap \left(R_{s_{0}}\left(p\right) \cap \left(R_{s_{1}} - R_{s_{1}}\left(p\right)\right)\right) \\
      &= R_{s_{0}}\left(p\right) \cap \overline{R_{s_{1}}} \cap R_{s_{0}}\left(p\right) \cap R_{s_{1}} \cap \overline{R_{s_{1}}\left(p\right)} \\
      &= \emptyset.
    \end{align*}

  \item[Induction step] To prove the induction step, assume the lemma holds for a step $s_{l} > s_{0}$ and then prove that it also holds for $s_{l}+1$, where $(s_{l}+1) \in \left\{\overline{t}\left[0\right]+1,\ldots,\overline{t}\left[L-1\right]+1\right\}$.
    The induction hypothesis is
    \[\left[\bigcup_{i \in \left\{s_{0},\ldots,s_{l}-1\right\}}C_{i}\left(p\right)\right] \cap \left[\bigcup_{i \in \left\{s_{0},\ldots,s_{l}-1\right\}}M^{+}_{i}\left(p\right)\right] = \emptyset.\]
    Thus,
    \begin{align*}
      &\left[\bigcup_{i \in \left\{s_{0},\ldots,\left(s_{l}+1\right)-1\right\}}C_{i}\left(p\right)\right] \cap \left[\bigcup_{i \in \left\{s_{0},\ldots,\left(s_{l}+1\right)-1\right\}}M^{+}_{i}\left(p\right)\right] \\
      &= \left(C_{s_{l}}\left(p\right) \cup \left[\bigcup_{i \in \left\{s_{0},\ldots,s_{l}-1\right\}}C_{i}\left(p\right)\right]\right) \cap \left(M^{+}_{s_{l}}\left(p\right) \cup \left[\bigcup_{i \in \left\{s_{0},\ldots,s_{l}-1\right\}}M^{+}_{i}\left(p\right)\right]\right) \\
      &= \left(C_{s_{l}}\left(p\right) \cap \left(M^{+}_{s_{l}}\left(p\right) \cup \left[\bigcup_{i \in \left\{s_{0},\ldots,s_{l}-1\right\}}M^{+}_{i}\left(p\right)\right]\right)\right) \\
      &\qquad \cup \left(\left[\bigcup_{i \in \left\{s_{0},\ldots,s_{l}-1\right\}}C_{i}\left(p\right)\right] \cap \left(M^{+}_{s_{l}}\left(p\right) \cup \left[\bigcup_{i \in \left\{s_{0},\ldots,s_{l}-1\right\}}M^{+}_{i}\left(p\right)\right]\right)\right) \\
      &= \left(M^{+}_{s_{l}}\left(p\right) \cap C_{s_{l}}\left(p\right)\right) \cup \left(\left[\bigcup_{i \in \left\{s_{0},\ldots,s_{l}-1\right\}}M^{+}_{i}\left(p\right)\right] \cap C_{s_{l}}\left(p\right)\right) \\
      &\qquad \cup \left(M^{+}_{s_{l}}\left(p\right) \cap \left[\bigcup_{i \in \left\{s_{0},\ldots,s_{l}-1\right\}}C_{i}\left(p\right)\right]\right) \\
      &\qquad \cup \left(\left[\bigcup_{i \in \left\{s_{0},\ldots,s_{l}-1\right\}}C_{i}\left(p\right)\right] \cap \left[\bigcup_{i \in \left\{s_{0},\ldots,s_{l}-1\right\}}M^{+}_{i}\left(p\right)\right]\right) \\
      &= \left(M^{+}_{s_{l}}\left(p\right) \cap C_{s_{l}}\left(p\right)\right) \cup \left(\left[\bigcup_{i \in \left\{s_{0},\ldots,s_{l}-1\right\}}M^{+}_{i}\left(p\right)\right] \cap C_{s_{l}}\left(p\right)\right) \\
      &\qquad \cup \left(M^{+}_{s_{l}}\left(p\right) \cap \left[\bigcup_{i \in \left\{s_{0},\ldots,s_{l}-1\right\}}C_{i}\left(p\right)\right]\right) \\
      &= \left(\left(R_{s_{l}}\left(p\right) - R_{s_{l}+1}\right) \cap \left(R_{s_{l}}\left(p\right) \cap \left(R_{s_{l}+1} - R_{s_{l}+1}\left(p\right)\right)\right)\right) \cup \left(\left[\bigcup_{i \in \left\{s_{0},\ldots,s_{l}-1\right\}}M^{+}_{i}\left(p\right)\right] \cap C_{s_{l}}\left(p\right)\right) \\
      &\qquad \cup \left(M^{+}_{s_{l}}\left(p\right) \cap \left[\bigcup_{i \in \left\{s_{0},\ldots,s_{l}-1\right\}}C_{i}\left(p\right)\right]\right) \\
      &= \left(R_{s_{l}}\left(p\right) \cap \overline{R_{s_{l}+1}} \cap R_{s_{l}}\left(p\right) \cap R_{s_{l}+1} \cap \overline{R_{s_{l}+1}\left(p\right)}\right) \cup \left(\left[\bigcup_{i \in \left\{s_{0},\ldots,s_{l}-1\right\}}M^{+}_{i}\left(p\right)\right] \cap C_{s_{l}}\left(p\right)\right)\\
      &\qquad \cup \left(M^{+}_{s_{l}}\left(p\right) \cap \left[\bigcup_{i \in \left\{s_{0},\ldots,s_{l}-1\right\}}C_{i}\left(p\right)\right]\right) \\
      &= \left(\left[\bigcup_{i \in \left\{s_{0},\ldots,s_{l}-1\right\}}M^{+}_{i}\left(p\right)\right] \cap C_{s_{l}}\left(p\right)\right) \cup \left(M^{+}_{s_{l}}\left(p\right) \cap \left[\bigcup_{i \in \left\{s_{0},\ldots,s_{l}-1\right\}}C_{i}\left(p\right)\right]\right) \\
    \end{align*}

    By Definition~\ref{def:node-definitions}, for any step $i$, $C_{i}\left(p\right)$ is composed by nodes that are attached to $p$ at step $i$ (implying they are \textbf{ready} at step $i$), but are no longer in state \textbf{ready} at step $i+1$.
    Thus, by Definition~\ref{def:node-states}, since a node that is in state \textbf{ready} can only change its state to \textbf{executed}, and since a node that is \textbf{executed} can not become \textbf{not ready} nor \textbf{ready}, it follows $\left[\bigcup_{i \in \left\{s_{0},\ldots,s_{l}-1\right\}}C_{i}\left(p\right)\right] \subseteq Executed_{s_{l}}$ (the set of nodes in state \textbf{executed} at step $s_{l}$).
    On the other hand, by Definition~\ref{def:node-definitions}, a node can only be migrated from $p$ at step $s_{l}$ if it is ready at that step, implying $M^{+}_{s_{l}}\left(Procs\right) \subseteq R_{s_{l}}$.
    With this, because a node can only be in one state at each step, it follows $Executed_{s_{l}} \cap R_{s_{l}} = \emptyset$, implying $\left[\bigcup_{i \in \left\{s_{0},\ldots,s_{l}-1\right\}}C_{i}\left(p\right)\right] \cap M^{+}_{s_{l}}\left(p\right) = \emptyset$.
    As such, to conclude this proof it only remains to show that $\left[\bigcup_{i \in \left\{s_{0},\ldots,s_{l}-1\right\}}M^{+}_{i}\left(p\right)\right] \cap C_{s_{l}}\left(p\right) = \emptyset$.

    Let $S = \left[\bigcup_{i \in \left\{s_{0},\ldots,s_{l}-1\right\}}M^{+}_{i}\left(p\right)\right] \cap C_{s_{l}}\left(p\right)$.
    For the purpose of contradiction, let us assume $S \neq \emptyset$.
    Thus, there is a step $j \in \left\{s_{0},\ldots,s_{l}-1\right\}$ such that $M^{+}_{j}\left(p\right) \cap C_{s_{l}}\left(p\right) \neq \emptyset$.
    Let $\mu$ be some node such that $\mu \in M^{+}_{j}\left(p\right) \cap C_{s_{l}}\left(p\right)$.
    By Definition~\ref{def:node-definitions}, it follows
    \begin{align*}
      &M^{+}_{j}\left(p\right) \cap C_{s_{l}}\left(p\right) \\
      &= \left(R_{j}\left(p\right) \cap \left(R_{j+1} - R_{j+1}\left(p\right)\right)\right) \cap \left(R_{s_{l}}\left(p\right) - R_{s_{l}+1}\right) \\
      &= R_{j}\left(p\right) \cap R_{j+1} \cap \overline{R_{j+1}\left(p\right)} \cap R_{s_{l}}\left(p\right) \cap \overline{R_{s_{l}+1}}
    \end{align*}
    which implies $\mu \in R_{j}\left(p\right) \cap R_{j+1} \cap \overline{R_{j+1}\left(p\right)} \cap R_{s_{l}}\left(p\right) \cap \overline{R_{s_{l}+1}}$.
    If $j$ were $s_{l} - 1$, it would follow $R_{j}\left(p\right) \cap R_{j+1} \cap \overline{R_{j+1}\left(p\right)} \cap R_{s_{l}}\left(p\right) \cap \overline{R_{s_{l}+1}} = \emptyset$, and so $j < s_{l} - 1$.
    Since a node that is \textbf{ready} can only become \textbf{executed}, and a node in state \textbf{executed} does not change its state, then $\forall i \in \left\{j,\ldots,s_{l}-1\right\}, \mu \in R_{i}$.
    Moreover, as $\mu \in \overline{R_{j+1}\left(p\right)} \cap R_{s_{l}}\left(p\right)$ and $s_{l} > j + 1$, it follows that there is a step $k \in \left\{j+1,\ldots,s_{l}-1\right\}$ such that $\mu \in R_{k+1}\left(p\right) \cap \overline{R_{k}\left(p\right)} \cap R_{k}$.
    By Definition~\ref{def:node-definitions}, it follows $\mu \in M^{-}_{k}\left(p\right)$, implying $\mu \in M^{-}_{\overline{t}}\left(p\right)$.
    However, since $\mu \in M^{-}_{\overline{t}}\left(p\right)$ and $\mu \in M^{+}_{\overline{t}}\left(p\right)$, it follows $M^{-}_{\overline{t}}\left(p\right) \cap M^{+}_{\overline{t}}\left(p\right) \neq \emptyset$, which, together with Lemma~\ref{lemma:migrated-to-p-subset-migrated-from-all-but-p}, contradicts Definition~\ref{def:round} (specifically, that no node is migrated more than once during the same round).
  \end{description}
\end{proof}

\begin{claim}
\label{claim:progression-connecting-lemma-4}
For any round $\overline{t}$ and $p \in Procs$, $R_{\overline{t}}\left(p\right) \cup E_{\overline{t}}\left(p\right) \cup M^{-}_{\overline{t}}\left(p\right) \supseteq C_{\overline{t}}\left(p\right) \cup M^{+}_{\overline{t}}\left(p\right)$.
\end{claim}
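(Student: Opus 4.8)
The plan is to reduce the claim to the step-level progression result already established in Lemma~\ref{lemma:multiple-step-progression-lemma}. The underlying intuition is simple: every node that $p$ executes or migrates away during round $\overline{t}$ must be attached to $p$ at the step at which this happens; and the only ways a node can come to be attached to $p$ during the round are to have been attached at the beginning, to have been enabled by $p$, or to have been migrated to $p$. Those three possibilities are exactly $R_{\overline{t}}(p)$, $E_{\overline{t}}(p)$ and $M^{-}_{\overline{t}}(p)$, which is what Lemma~\ref{lemma:multiple-step-progression-lemma} accounts for.

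First I would establish the elementary step-wise containment. For any step $i$, Definition~\ref{def:node-definitions} gives $C_i(p) = R_i(p) - R_{i+1} \subseteq R_i(p)$ and $M^{+}_i(p) = R_i(p) \cap (R_{i+1} - R_{i+1}(p)) \subseteq R_i(p)$, hence $C_i(p) \cup M^{+}_i(p) \subseteq R_i(p)$. Taking the union over the steps $i \in \{\overline{t}[0],\ldots,\overline{t}[L-1]\}$ of the round and applying Definition~\ref{def:node-definitions-round}, this yields
\[
C_{\overline{t}}(p) \cup M^{+}_{\overline{t}}(p) \subseteq \bigcup_{i \in \{\overline{t}[0],\ldots,\overline{t}[L-1]\}} R_i(p).
\]

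Next I would invoke Lemma~\ref{lemma:multiple-step-progression-lemma} with $s_0 = \overline{t}[0]$ and $s_1 = \overline{t+1}[0] = \overline{t}[L-1]+1$. Its left-hand side $\bigcup_{i \in \{s_0,\ldots,s_1\}} R_i(p)$ contains the union displayed above, while its right-hand side is $R_{\overline{t}[0]}(p) \cup \bigcup_{i \in \{\overline{t}[0],\ldots,\overline{t}[L-1]\}} \bigl(E_i(p) \cup M^{-}_i(p)\bigr)$, which by Definition~\ref{def:node-definitions-round} is precisely $R_{\overline{t}}(p) \cup E_{\overline{t}}(p) \cup M^{-}_{\overline{t}}(p)$. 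Chaining the two inclusions then gives $C_{\overline{t}}(p) \cup M^{+}_{\overline{t}}(p) \subseteq R_{\overline{t}}(p) \cup E_{\overline{t}}(p) \cup M^{-}_{\overline{t}}(p)$, which is the claim.

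I do not expect a genuine obstacle, since the result is essentially a repackaging of the step progression lemma. The only point demanding care is the bookkeeping of index ranges: one must verify that the set $\{\overline{t}[0],\ldots,\overline{t}[L-1]\}$ appearing in the round-level definitions coincides with the set $\{s_0,\ldots,s_1-1\}$ produced by the lemma when $s_1 = \overline{t+1}[0]$, and observe that the extra step $\overline{t}[L-1]+1$ picked up on the left-hand side of the lemma is harmless, as it only enlarges the set on the larger side of the containment.
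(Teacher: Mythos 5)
Your proof is correct and follows essentially the same route as the paper: both reduce the claim to the step-level containments $C_{i}\left(p\right) \cup M^{+}_{i}\left(p\right) \subseteq R_{i}\left(p\right)$ and then invoke Lemma~\ref{lemma:multiple-step-progression-lemma} to bound the union of the $R_{i}\left(p\right)$ over the round's steps by $R_{\overline{t}}\left(p\right) \cup E_{\overline{t}}\left(p\right) \cup M^{-}_{\overline{t}}\left(p\right)$. The only cosmetic difference is that the paper handles the $C_{\overline{t}}\left(p\right)$ part directly via Requirement~\ref{requirement:node-execution-constraints}, whereas you obtain it from the definition $C_{i}\left(p\right) = R_{i}\left(p\right) - R_{i+1}$ at each step, which is equally valid.
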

\begin{proof}
  By Requirement~\ref{requirement:node-execution-constraints}, it follows $R_{\overline{t}}\left(p\right) \cup E_{\overline{t}}\left(p\right) \cup M^{-}_{\overline{t}}\left(p\right) \supseteq C_{\overline{t}}\left(p\right)$.
  Thus, it suffices to show that $R_{\overline{t}}\left(p\right) \cup E_{\overline{t}}\left(p\right) \cup M^{-}_{\overline{t}}\left(p\right) \supseteq M^{+}_{\overline{t}}\left(p\right)$.
  By Definition~\ref{def:node-definitions}, for any step $i$, $M^{+}_{i}\left(p\right) = R_{i}\parens{p} \cap \parens{R_{i+1} - R_{i+1}\parens{p}}$, implying $R_{i}\left(p\right) \supseteq M^{+}_{i}\left(p\right)$.
  To conclude this proof, let $s_{0} = \overline{t}\left[0\right]$ and $s_{1} = \overline{t}\left[L-1\right]$ in Lemma~\ref{lemma:multiple-step-progression-lemma}.
\end{proof}

\begin{proof}[Proof of Lemma~\ref{lemma:connecting-lemma}]
  First, note that $\left|E_{\overline{t}}\left(p\right)\right| < \left|C_{\overline{t+1}}\left(p\right)\right| + \abs[\big]{M^{+}_{\overline{t}}\left(p\right)}$ iff
  \begin{align*}
    \left|E_{\overline{t}}\left(p\right)\right| + \left|R_{\overline{t}}\left(p\right)\right| + \abs[\big]{M^{-}_{\overline{t}}\left(p\right)} - \left|C_{\overline{t}}\left(p\right)\right| - \abs[\big]{M^{+}_{\overline{t}}\left(p\right)} &< \left|C_{\overline{t+1}}\left(p\right)\right| + \abs[\big]{M^{+}_{\overline{t}}\left(p\right)} + \left|R_{\overline{t}}\left(p\right)\right| + \abs[\big]{M^{-}_{\overline{t}}\left(p\right)} \\
                                                                                                                                                                                                                                              &\quad\, - \left|C_{\overline{t}}\left(p\right)\right| - \abs[\big]{M^{+}_{\overline{t}}\left(p\right)} \\
                                                                                                                                                                                                                                              &= \left|C_{\overline{t+1}}\left(p\right)\right| + \left|R_{\overline{t}}\left(p\right)\right| + \abs[\big]{M^{-}_{\overline{t}}\left(p\right)} - \left|C_{\overline{t}}\left(p\right)\right|
  \end{align*}
  
  Noting that:
  \begin{enumerate}
  \item Claim~\ref{claim:progression-connecting-lemma-1} implies \[\left|E_{\overline{t}}\left(p\right)\right| + \left|R_{\overline{t}}\left(p\right)\right| = \left|E_{\overline{t}}\left(p\right) \cup R_{\overline{t}}\left(p\right)\right|;\]
  \item Claim~\ref{claim:progression-connecting-lemma-2} implies \[\left|R_{\overline{t}}\left(p\right) \cup E_{\overline{t}}\left(p\right)\right| + \left|M^{-}_{\overline{t}}\left(p\right)\right| = \left|R_{\overline{t}}\left(p\right) \cup E_{\overline{t}}\left(p\right) \cup M^{-}_{\overline{t}}\left(p\right)\right|;\]
  \item Claim~\ref{claim:progression-connecting-lemma-3} implies \[\left|C_{\overline{t}}\left(p\right)\right| + \left|M^{+}_{\overline{t}}\left(p\right)\right| = \left|C_{\overline{t}}\left(p\right) \cup M^{+}_{\overline{t}}\left(p\right)\right|;\]
  \item Claim~\ref{claim:progression-connecting-lemma-4} implies
    \begin{align*}
      \left|E_{\overline{t}}\left(p\right) \cup R_{\overline{t}}\left(p\right) \cup M^{-}_{\overline{t}}\left(p\right)\right| &- \left|C_{\overline{t}}\left(p\right) \cup M^{+}_{\overline{t}}\left(p\right)\right| = \\
      &\left|\left(E_{\overline{t}}\left(p\right) \cup R_{\overline{t}}\left(p\right) \cup M^{-}_{\overline{t}}\left(p\right)\right) - \left(C_{\overline{t}}\left(p\right) \cup M^{+}_{\overline{t}}\left(p\right)\right)\right|; \text{and}
    \end{align*}
  \item Lemma~\ref{lemma:r-next} implies \[\left|R_{\overline{t+1}}\left(p\right)\right| = \left|\left(E_{\overline{t}}\left(p\right) \cup R_{\overline{t}}\left(p\right) \cup M^{-}_{\overline{t}}\left(p\right)\right) - \left(C_{\overline{t}}\left(p\right) \cup M^{+}_{\overline{t}}\left(p\right)\right)\right|,\]    
  \end{enumerate}
  it follows $\left|R_{\overline{t+1}}\left(p\right)\right| = \left|E_{\overline{t}}\left(p\right)\right| + \left|R_{\overline{t}}\left(p\right)\right| + \left|M^{-}_{\overline{t}}\left(p\right)\right| - \left|C_{\overline{t}}\left(p\right)\right| - \left|M^{+}_{\overline{t}}\left(p\right)\right|$.
  To conclude this proof, note that Requirement~\ref{requirement:node-execution-constraints} implies $\left|R_{\overline{t}}\left(p\right)\right| - \left|C_{\overline{t}}\left(p\right)\right| = \left|R_{\overline{t}}\left(p\right) - C_{\overline{t}}\left(p\right)\right|$ and $\left|R_{\overline{t+1}}\left(p\right)\right| - \left|C_{\overline{t+1}}\left(p\right)\right| = \left|R_{\overline{t+1}}\left(p\right) - C_{\overline{t+1}}\left(p\right)\right|$, and so, it follows
  \begin{align*}
     \left|R_{\overline{t+1}}\left(p\right)\right| &< \left|C_{\overline{t+1}}\left(p\right)\right| + \left|R_{\overline{t}}\left(p\right)\right| + \abs[\big]{M^{-}_{\overline{t}}\left(p\right)} - \left|C_{\overline{t}}\left(p\right)\right| \\
    &\text{iff} \\
    \left|R_{\overline{t+1}}\left(p\right) - C_{\overline{t+1}}\left(p\right)\right| &< \left|R_{\overline{t}}\left(p\right) - C_{\overline{t}}\left(p\right)\right| + \abs[\big]{M^{-}_{\overline{t}}\left(p\right)}.
  \end{align*}
\end{proof}

\section{The lock-free deque semantics}
\label{sec:deque-semantics}
In this section, we present the specification of the relaxed semantics associated with the lock free deque's implementation as given in~\cite{DBLP:journals/mst/AroraBP01}.
The deque implements three methods:
\begin{enumerate}
  \item \textsl{pushBottom} -- adds an item to the bottom of the deque and does not return.
  \item \textsl{popBottom} -- returns the bottom-most item from the deque, or \textsc{empty}, if there is no node.
  \item \textsl{popTop} -- attempts to return the topmost item from the deque, or \textsc{empty}, if there is no node.
    If the attempt succeeds, a node is returned. Otherwise, the special value \textsc{abort} is returned.
\end{enumerate}

The implementation is said to be \emph{constant-time} iff any invocation to each of the three methods takes at most a constant number of steps to return, implying the sequence of instructions composing the invocation has constant length.

An invocation to one of the deque's methods is defined by a 4-tuple establishing:
\begin{enumerate*}
  \item the method invoked;
  \item the invocation's beginning time;
  \item the invocation's completion time; and
  \item the return value, if it exists.
\end{enumerate*}

A set of invocations meets the \emph{relaxed semantics} iff there is a set of \emph{linearization times} for the corresponding non-aborting invocations for which:
\begin{enumerate*}
  \item every non-aborting invocation's linearization time lies within the initiation and completion times of the respective invocation;
  \item no two linearization times coincide;
  \item the return values for each non-aborting invocation are consistent with a serial execution of the methods in the order given by the linearization times of the non-aborting invocations; and
  \item for each aborted \textsl{popTop} invocation $x$ to a deque $d$, there is another invocation removing the topmost item from $d$ whose linearization time falls between the beginning and completion times of $x$'s invocation.
\end{enumerate*}

A set of invocations is said to be \emph{good} iff \textsl{pushBottom} and \textsl{popBottom} are never invoked concurrently.
The deque implementation presented in \cite{DBLP:journals/mst/AroraBP01} has been proven to satisfy the relaxed semantics on any good set of invocations.
Note that any set of invocations made during the execution of a computation scheduled by either \ws\ or \wss\ is good, as the \textsl{pushBottom} and \textsl{popBotom} methods are exclusively invoked by the (unique) owner of the deque.
Thus, throughout the paper we simply assume that the relaxed semantics are met.

\section{Full proofs for the results obtained in Section~4}
\label{sec:proofs-section-wss-analysis}

\subsection{Full proof for Lemma~\ref{lemma:lower-bounds-expected-spreads-p}}
\label{sub:5-proof-lemma-1}

%%%%%%%%%%%%%%%%%%%%%%%%%%%%%%%%%%%%%%%%%%%%%%%%%%%%%%%%%%%%%%%%%%%%%%%%%%%%%%%%%%%%%%%%%%%%%%%%%%%%%%%%%%%%%%%%%%%%%%%%%%%%%%%%%%%%%%%%%%%%%
%                                                                                                                                           %
%                                                                                                                                           %
%                                                                                                                                           %
%                                                       Worst case for p's donations                                                        %
%                                                                                                                                           %
%                                                                                                                                           %
%                                                                                                                                           %
%%%%%%%%%%%%%%%%%%%%%%%%%%%%%%%%%%%%%%%%%%%%%%%%%%%%%%%%%%%%%%%%%%%%%%%%%%%%%%%%%%%%%%%%%%%%%%%%%%%%%%%%%%%%%%%%%%%%%%%%%%%%%%%%%%%%%%%%%%%%%

First, we prove that the greater is the number of processors making steal, the smaller are the chances that $p$'s spread attempts succeeds.

\begin{lemma}
  \label{lemma:p-spread-attempt-probability-decreases-with-more-spreads}
  Let $spreads\left(p,\alpha,d\right)$ be a function corresponding to the expected number of nodes that $p$ spreads during any round, where the ratio of idle processors of the round is $\alpha$ and the number of processors enabling two nodes is $d$.
  Then, $spreads\left(p,\alpha,d\right) \geq spreads\left(p,\alpha,P\left(1-\alpha\right)\right)$.
\end{lemma}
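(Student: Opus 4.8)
The plan is to show that $spreads\left(p,\alpha,d\right)$ is non-increasing in the number of spreaders $d$, and then to invoke the bound $d \leq P(1-\alpha)$. This bound holds because every processor enabling two nodes necessarily executed a node during the round and is therefore \emph{busy}, and there are exactly $P(1-\alpha)$ busy processors when the idle ratio is $\alpha$. Since $p$ itself enables two nodes it performs a single spread attempt, so $spreads\left(p,\alpha,d\right)$ is precisely the probability that this attempt succeeds. First I would make the success event explicit: by the definition of \textsc{handleExtraNode} in Algorithm~\ref{algo:wss}, $p$'s spread succeeds iff $p$ targets a processor $q$ that is idle and still flagged \textsc{idle} throughout the second phase, and $p$ wins the \textsc{CAS} competition against every other spreader targeting the same $q$. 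The crucial structural observation is that the set $T$ of valid donation targets is fixed by the first phase alone (an idle processor retains state \textsc{idle} into phase II exactly when its steal attempt failed), and the steal outcomes are independent of how many busy processors happen to enable two nodes; hence the law of $T$ does not depend on $d$.

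Given this, it suffices to prove the single-step inequality $spreads\left(p,\alpha,d\right) \geq spreads\left(p,\alpha,d+1\right)$ for every admissible $d$, after which the claimed bound follows by chaining these inequalities from $d$ up to $P(1-\alpha)$. I would prove the single step by a coupling between a scenario with $d$ spreaders and one in which the same $d$ spreaders are present together with one additional spreader $r$. To handle the \textsc{CAS} tie-breaking cleanly I would represent the uniformly random winner by independent uniform priorities: attach to each spreader an i.i.d.\ uniform rank, and declare that among the spreaders targeting a given $q$ the one of largest rank wins. The argmax of i.i.d.\ uniforms is uniform over the competing set, so this exactly reproduces the assumption that the successful processor is chosen uniformly among those eligible at that step (and all spreaders targeting $q$ are eligible, since they execute their \textsc{CAS} simultaneously while the state flag of $q$ still equals \textsc{idle}).

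In the coupling I keep the target choice and the rank of each of the shared $d$ spreaders identical across the two scenarios, and draw $r$'s target and rank independently. The monotonicity is then immediate: on any sample point for which $p$ wins in the $(d+1)$-spreader world, $p$ targets some $q \in T$ and has the largest rank among all spreaders aiming at $q$; deleting $r$ only removes a potential competitor at $q$, so $p$ still targets $q \in T$ and still has the largest rank among the remaining competitors, whence $p$ also wins in the $d$-spreader world. Thus the success event in the $(d+1)$ world is contained in that of the $d$ world under the coupling, so the probability that $p$ wins with $d+1$ spreaders is at most the probability that $p$ wins with $d$ spreaders; that is, $spreads\left(p,\alpha,d\right) \geq spreads\left(p,\alpha,d+1\right)$. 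Iterating from $d$ to $P(1-\alpha)$ yields $spreads\left(p,\alpha,d\right) \geq spreads\left(p,\alpha,P(1-\alpha)\right)$.

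I expect the main obstacle to be the precise justification of the \textsc{CAS} coupling, in particular arguing that the competing set at each target is exactly the set of spreaders that selected it and that the uniform-winner assumption is faithfully captured by the largest-rank rule even when the competitor sets differ between the two coupled worlds. A secondary point to nail down is the independence of $T$ from $d$, which rests on the phase structure of Algorithm~\ref{algo:wss}: a processor's \textsc{idle} flag during phase II is determined by its (steal-only) phase-I behaviour and is therefore unaffected by how many busy processors enable two nodes. As a fallback, one can bypass the coupling altogether by writing $spreads\left(p,\alpha,d\right) = \frac{\mathrm{E}\left[|T|\right]}{P}\,\mathrm{E}\left[\frac{1}{1+\mathrm{Bin}(d-1,1/P)}\right]$ and using that $\mathrm{Bin}(d-1,1/P)$ is stochastically increasing in $d$ together with the monotonicity of $x \mapsto 1/(1+x)$; I would retain this computational derivation in case the combinatorial argument requires more care.
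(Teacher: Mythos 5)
Your proof is correct and takes essentially the same route as the paper's own argument: both rest on the bound $d \leq P\left(1-\alpha\right)$ (every spreader is a busy processor) together with monotonicity of $p$'s success probability in the number of competing spread attempts, which follows from the synchronized \textsc{CAS} competition whose winner is chosen uniformly among the spreaders targeting each still-\textsc{idle} processor. The only difference is one of rigor: the paper merely asserts this monotonicity informally (``the greater the number of spread attempts that target $q$ the smaller are the chances for $p$'s spread attempt to succeed''), whereas you actually establish it via the rank-based coupling (or the fallback identity $\frac{\mathrm{E}\left[\left|T\right|\right]}{P}\,\mathrm{E}\left[1/\left(1+\mathrm{Bin}\left(d-1,1/P\right)\right)\right]$ with stochastic monotonicity), so your write-up is, if anything, more complete than the paper's.
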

\begin{proof}[Proof of Lemma~\ref{lemma:p-spread-attempt-probability-decreases-with-more-spreads}]
  If $p$ targets a processor whose \algid{state} flag is set to \textsc{working}, then its spread attempt fails.
  Thus, in this case $p$ would not spread a node, regardless of the number of processors that make a spread attempt.
  However, if $p$ targets a processor whose \algid{state} flag is set to \textsc{idle}, then its attempt has a chance to succeed.
  We now consider the two possible situations:
  \begin{description}[leftmargin=0cm,noitemsep,topsep=5pt,parsep=5pt,partopsep=0pt]
  \item[$d = P\left(1 - \alpha\right)$] --- In this case, $spreads\left(p,\alpha,d\right) = spreads\left(p,\alpha,P\left(1-\alpha\right)\right)$.
  \item[$d \neq P \left(1 - \alpha\right)$] --- By definition there are $P \left(1 - \alpha\right)$ busy processors, implying that $d \leq P \left(1 - \alpha\right)$.
    Thus, for this case we conclude $d < P \left(1 - \alpha\right)$.
    Now, suppose $p$ targets some processor $q$ whose \algid{state} flag is set to \textsc{idle}.
    Thanks to the synchronous environment we have artificially created, and assuming that any call to \algid{UniformlyRandomNumber} takes the same number of steps, then every processor executes the \algid{CAS} instruction --- whose success dictates the success of the spread attempt --- at the same step (line 30 of Algorithm~\ref{algo:wss}).
    Finally, as a consequence of our assumptions regarding the \algid{CAS} instruction (see the first paragraph of Section~\ref{sec:analyzing-wss}) and since processors target donees uniformly at random, the greater the number of spread attempts that target $q$ the smaller are the chances for $p$'s spread attempt to succeed, concluding the proof of this lemma.
  \end{description}
\end{proof}

\begin{lemma}
\label{lemma:bab-lower bounds on the expected non-empty bins for a fixed number of balls}
Suppose there are $B$ bins, each of which is painted either red or green, and let $B_{R}$ and $B_{G}$ denote the initial number of red and green bins, respectively.
Additionally, let $\alpha$ denote the initial ratio of red bins, meaning $\alpha = \frac{B_{R}}{B}$ and $B \left(1 - \alpha\right) = B_{G}$.
Now, suppose there are $B_{R}$ cubes and $B_{G}$ balls.
First, each cube is tossed, independently and uniformly at random, into the bins.
After tossing all the $B_{R}$ cubes, count the number of cubes that landed in green bins, and, for each such cube, a red bin is painted green.
After finishing all the paintings, each of the $B_{G}$ balls is tossed, independently and uniformly at random, into the bins.
\\
Let $Y$ denote the number of bins that are still red, with at least one ball.
Then, \[E\left[Y\right] \geq B \alpha^{2} \left(1 - \euler^{- \left(1 - \alpha\right)}\right).\]
\end{lemma}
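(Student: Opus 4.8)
The plan is to track a single random quantity — the number of bins that remain red after the painting phase — and then handle the ball tosses by an independence-and-linearity argument, applying the earlier balls-and-bins bound (Lemma~\ref{lemma:bab-lower bounds on the probability that i is non-empty}) to control the probability that a fixed bin ends up nonempty. First I would let $K$ denote the number of cubes that land in green bins. Since the $B_R$ cubes are tossed independently and uniformly, and the green bins occupy a fraction $B_G/B = 1-\alpha$ of all bins, $K$ is $\mathrm{Binomial}\parens{B_R, 1-\alpha}$, so $\mathrm{E}\bracks{K} = B_R\parens{1-\alpha} = B\alpha\parens{1-\alpha}$. Exactly one red bin is repainted green per such cube, and since there are $B_R$ cubes in total we always have $K \leq B_R$; hence the number of bins still red after painting is the random variable $\abs{\mathcal{R}} = B_R - K$, and by linearity $\mathrm{E}\bracks{\abs{\mathcal{R}}} = B_R - \mathrm{E}\bracks{K} = B\alpha - B\alpha\parens{1-\alpha} = B\alpha^{2}$.

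Next I would introduce, for each bin $b$, the indicator $Z_b$ that $b$ receives at least one of the $B_G$ balls, and write $Y = \sum_{b=1}^{B} \mathbbm{1}\bracks{b \in \mathcal{R}}\, Z_b$. The crucial structural observation is that the ball-tossing stage is performed independently of where the cubes landed, so the coloring event $\braces{b \in \mathcal{R}}$ and the occupancy indicator $Z_b$ are independent for each fixed $b$. Consequently $\mathrm{E}\bracks{\mathbbm{1}\bracks{b \in \mathcal{R}}\, Z_b} = \Pr\braces{b \in \mathcal{R}}\cdot \mathrm{E}\bracks{Z_b}$, and summing over $b$ gives $\mathrm{E}\bracks{Y} = \mathrm{E}\bracks{Z_b}\cdot \mathrm{E}\bracks{\abs{\mathcal{R}}}$, where $\mathrm{E}\bracks{Z_b}$ is the common per-bin nonempty probability $1 - \parens{1 - 1/B}^{B_G}$.

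Finally, I would bound the per-bin probability: since $B_G = B\parens{1-\alpha}$ balls are tossed uniformly into $B$ bins, applying Lemma~\ref{lemma:bab-lower bounds on the probability that i is non-empty} with ratio $1-\alpha$ yields $\mathrm{E}\bracks{Z_b} = 1 - \parens{1 - 1/B}^{B\parens{1-\alpha}} \geq 1 - \euler^{-\parens{1-\alpha}}$. Combining this with $\mathrm{E}\bracks{\abs{\mathcal{R}}} = B\alpha^{2}$ gives $\mathrm{E}\bracks{Y} \geq B\alpha^{2}\parens{1 - \euler^{-\parens{1-\alpha}}}$, as claimed. The only delicate point is the independence step: one must argue carefully that conditioning on the entire coloring outcome does not bias the ball tosses, so that $\mathcal{R}$ and the $Z_b$ decouple; I expect this to be the main thing to state precisely, while the binomial mean computation and the invocation of the earlier balls-and-bins lemma are routine.
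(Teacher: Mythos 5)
Your proof is correct and takes essentially the same approach as the paper's: both hinge on the independence of the ball tosses from the cube tosses and painting, compute the expected number of still-red bins as $B\alpha^{2}$ (via the binomial mean $B\alpha\left(1-\alpha\right)$ of the repainted count), bound the per-bin occupancy probability below by $1-\euler^{-\left(1-\alpha\right)}$ exactly as in Lemma~\ref{lemma:bab-lower bounds on the probability that i is non-empty}, and multiply the two quantities. The only cosmetic difference is how the decoupling is executed --- the paper conditions on the number of repainted bins and applies the law of total expectation, whereas you factor each per-bin term $\Pr\braces{b \in \mathcal{R}}\cdot \mathrm{E}\bracks{Z_{b}}$ directly and sum; the two are interchangeable here.
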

\begin{proof}[Proof of Lemma~\ref{lemma:bab-lower bounds on the expected non-empty bins for a fixed number of balls}]
Let $C_{G hit}$ and $B_{R \mapsto G}$ be two random variables, corresponding, respectively, to the number of cubes that land in green bins and to the number of red bins that are painted green.
Then, $B_{R \mapsto G} = C_{G hit}$, and thus
\begin{align*}
  E\left[B_{R \mapsto G}\right] &= E\left[C_{G hit}\right] \\
                                &= B \alpha \left(1 - \alpha\right).
\end{align*}
Similarly to Lemma~\ref{lemma:bab-lower bounds on the probability that i is non-empty}, for a bin $b_{i}$ let $Y_{i}$ be an indicator variable, defined as \[Y_{i} = \left\{\begin{matrix}
 \,1 & \text{if at least one ball lands in } b_{i};\\
 \,0 & \text{otherwise.}
\end{matrix}\right.\]
Thus, the probability that none of the $B_{G}$ balls lands in $b_{i}$ is
\begin{align*}
  P\left\{Y_{i} = 0\right\} &= \left(1 - \frac{1}{B}\right)^{B\left(1 - \alpha\right)} \\
                            &\leq \euler^{- \left(1 - \alpha\right)}.
\end{align*}
Since the probability that no ball lands in $b_{i}$ is independent from the number of red bins painted green (\textit{i.e.}~$Y_{i}$ is independent from $B_{R \mapsto G}$), then, for any $m$,
\[P\left\{Y_{i} = 0 | B_{R \mapsto G} = m\right\} = P\left\{Y_{i} = 0\right\}, \quad \text{and,} \quad P\left\{Y_{i} = 1 | B_{R \mapsto G} = m\right\} = P\left\{Y_{i} = 1\right\}.\]
It follows
\begin{align*}
  E\left[Y_{i} | B_{R \mapsto G} = m\right] &= 0 . P\left\{Y_{i} = 0 | B_{R \mapsto G} = m\right\} + 1 . P\left\{Y_{i} = 1 | B_{R \mapsto G} = m\right\} \\
                                            &= P\left\{Y_{i} = 1\right\} \\
                                            &\geq 1 - \euler^{- \left(1 - \alpha\right)}.
\end{align*}
Consider \[Y = \sum_{i = 1}^{B_{R} - B_{R \mapsto G}} Y_{i},\] corresponding to the number of bins that are still red with at least one ball.
By the linearity of expectation, it follows
\begin{align*}
E\left[Y | B_{R \mapsto G} = m\right] &= E\left[Y_{1} + Y_{2} + \ldots + Y_{B_{R} - m} | B_{R \mapsto G} = m\right] \\
&= \sum_{i = 1}^{B_{R} - m} E\left[Y_{i} | B_{R \mapsto G} = m\right] \\
&\geq \sum_{i = 1}^{B_{R} - m} \left(1 - \euler^{- \left(1 - \alpha\right)}\right) \\
&= \left(B_{R} - m\right) \left(1 - \euler^{- \left(1 - \alpha\right)}\right). \\
\end{align*}
To conclude this proof, by the law of total expectation it follows
\begin{align*}
  E\left[Y\right] &= \sum_{m = 0}^{B_{R}} \left(E\left[Y | B_{R \mapsto G} = m\right] P\left\{B_{R \mapsto G} = m\right\}\right) \\
                  &\geq \sum_{m = 0}^{B_{R}} \left(\left(B_{R} - m\right) \left(1 - \euler^{- \left(1 - \alpha\right)}\right) P\left\{B_{R \mapsto G} = m\right\}\right) \\
                  &= \left(1 - \euler^{- \left(1 - \alpha\right)}\right) \sum_{m = 0}^{B_{R}} \left(\left(B_{R} - m\right) P\left\{B_{R \mapsto G} = m\right\}\right) \\
                  &= \left(1 - \euler^{- \left(1 - \alpha\right)}\right) \left(\sum_{m = 0}^{B_{R}} \left(B_{R} P\left\{B_{R \mapsto G} = m\right\}\right) - \sum_{m = 0}^{B_{R}} \left(m P\left\{B_{R \mapsto G} = m\right\}\right)\right) \\
                  &= \left(1 - \euler^{- \left(1 - \alpha\right)}\right) \left(B_{R} - E\left[B_{R \mapsto G}\right]\right) \\
                  &= \left(1 - \euler^{- \left(1 - \alpha\right)}\right) \left(B \alpha - B \left(1 - \alpha\right) \alpha\right) \\
                  &= B \alpha^{2} \left(1 - \euler^{- \left(1 - \alpha\right)}\right)
\end{align*}
\end{proof}

%%%%%%%%%%%%%%%%%%%%%%%%%%%%%%%%%%%%%%%%%%%%%%%%%%%%%%%%%%%%%%%%%%%%%%%%%%%%%%%%%%%%%%%%%%%%%%%%%%%%%%%%%%%%%%%%%%%%%%%%%%%%%%%%%%%%%%%%%%%%%
%                                                                                                                                           %
%                                                                                                                                           %
%                                                                                                                                           %
%                                                       Worst case for p's donations                                                        %
%                                                                                                                                           %
%                                                                                                                                           %
%                                                                                                                                           %
%%%%%%%%%%%%%%%%%%%%%%%%%%%%%%%%%%%%%%%%%%%%%%%%%%%%%%%%%%%%%%%%%%%%%%%%%%%%%%%%%%%%%%%%%%%%%%%%%%%%%%%%%%%%%%%%%%%%%%%%%%%%%%%%%%%%%%%%%%%%%

We now obtain lower bounds on the total number of spreads (or donations) made to processors during the second phase of some scheduling iteration, assuming that all busy processors make a spread attempt.

\begin{lemma}
  \label{lemma:lower-bounds-total-number-of-spreads}
  Consider any round $\overline{t}$ during a computation's execution, and let $B_{\overline{t}}$ be the set of processors that are busy during $\overline{t}$.
  If $\forall p \in B_{\overline{t}}, \left|E_{\overline{t}}\left(p\right)\right| = 2$, then $\mathrm{E}\bracks{|Spread^{+}_{\overline{t}}\left(B_{\overline{t}}\right)|} \geq P \alpha_{\overline{t}}^{2} \left(1 - \euler^{- \left(1 - \alpha_{\overline{t}}\right)}\right)$.  
\end{lemma}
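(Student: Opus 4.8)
The plan is to reduce the claim to the balls-and-bins estimate of Lemma~\ref{lemma:bab-lower bounds on the expected non-empty bins for a fixed number of balls} by modelling the round's steal and spread attempts as the cube and ball tosses of that lemma. First I would fix the populations: by Definition~\ref{def:processor-state-round} there are exactly $P\alpha_{\overline{t}}$ idle and $P\parens{1 - \alpha_{\overline{t}}}$ busy processors, and by Lemma~\ref{lemma:wss-base-results} every busy processor has a non-empty deque while every idle processor makes exactly one steal attempt. Since we assume $\abs{E_{\overline{t}}\parens{p}} = 2$ for every busy $p$, each busy processor runs \algid{handleExtraNode} once and hence makes exactly one spread attempt. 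This yields $P\alpha_{\overline{t}}$ steal attempts and $P\parens{1 - \alpha_{\overline{t}}}$ spread attempts, all directed at uniformly random processors.

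I would then set up the correspondence with the lemma, taking the $P$ processors as the $B = P$ bins, the idle processors as the red bins ($B_R = P\alpha_{\overline{t}}$, with the $P\alpha_{\overline{t}}$ steal attempts playing the role of the cubes) and the busy processors as the green bins ($B_G = P\parens{1 - \alpha_{\overline{t}}}$, with the $P\parens{1 - \alpha_{\overline{t}}}$ spread attempts playing the role of the balls). A cube landing in a green bin models an idle processor targeting a busy processor: since busy processors hold work, such a steal may succeed and turn that idle processor working before the donation \algid{CAS} fires. The key conservative step is that painting \emph{one} red bin green for \emph{each} such cube over-counts the idle processors that actually leave the \textsc{idle} state, because at most one idle processor becomes working per successful steal and successful steals are at most the steals hitting busy processors. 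Hence the number of processors still in state \textsc{idle} when the \algid{CAS} of line~30 of Algorithm~\ref{algo:wss} executes is at least the number of still-red bins in the lemma's process.

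Next I would account for the spreads. A ball landing in a still-red bin models a spread attempt whose \algid{CAS} targets a still-idle processor; by the \algid{CAS} semantics at most one attempt per target succeeds, and since the target is idle exactly one does succeed when at least one ball lands there. Thus the number of successful donations equals the number of still-red bins receiving at least one ball. As the steal targeting, the spread targeting, and the \algid{CAS} tie-breaking are independent and uniform, I would couple the real execution with the lemma's process on these tosses; since the real process leaves at least as many bins red, the realised spread count dominates the lemma's variable $Y$ in expectation. Applying Lemma~\ref{lemma:bab-lower bounds on the expected non-empty bins for a fixed number of balls} then gives $\mathrm{E}\bracks{\abs{Spread^{+}_{\overline{t}}\parens{B_{\overline{t}}}}} \geq \mathrm{E}\bracks{Y} \geq P\alpha_{\overline{t}}^{2}\parens[\big]{1 - \euler^{-\parens{1 - \alpha_{\overline{t}}}}}$.

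The main obstacle is making the coupling fully rigorous: I must argue that over-counting the idle-to-working conversions through the per-cube painting rule truly yields a stochastic lower bound on the number of surviving idle processors, and that the independence of the spread targeting from the steal outcomes lets me replace the correlated real configuration by the symmetric bin model without loss. Lemma~\ref{lemma:p-spread-attempt-probability-decreases-with-more-spreads} is what licenses restricting attention to the case where every busy processor spreads, i.e.\ $d = P\parens{1 - \alpha_{\overline{t}}}$, which is exactly the hypothesis in force here.
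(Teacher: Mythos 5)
Your proposal is correct and takes essentially the same route as the paper: the paper's proof consists precisely of the analogy with Lemma~\ref{lemma:bab-lower bounds on the expected non-empty bins for a fixed number of balls} (bins = processors, red/green = idle/busy, cubes = steal attempts, balls = spread attempts), justified by the observation that all steals and state updates occur in phase I while all spread CASes occur in phase II, and your coupling/domination argument merely makes explicit the conservative over-counting that the paper leaves implicit. One small slip: Lemma~\ref{lemma:wss-base-results} guarantees a non-empty deque only for \emph{non-self-stable} processors ($p \in U_{\overline{t}}$), not for every busy processor (a self-stable busy processor can have an empty deque), but this misattribution is harmless because your own conservative step --- successful steals are at most the steals hitting busy processors --- already absorbs steal attempts that fail against empty deques.
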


\begin{proof}[Proof of Lemma~\ref{lemma:lower-bounds-total-number-of-spreads}]
  We prove this result by making an analogy with Lemma~\ref{lemma:bab-lower bounds on the expected non-empty bins for a fixed number of balls}:
  \begin{enumerate*}
  \item the number of bins $B$ corresponds to the number of processors $P$;
  \item the initial ratio of red and green bins correspond, respectively, to the ratio of idle and busy processors during the round;
  \item each cube toss corresponds to a steal attempt;
  \item each red bin that is painted green corresponds to a processor that was idle but whose steal attempt succeeded, and thus changed its \algid{state} flag to \textsc{working}; and
  \item each ball toss corresponds to a spread attempt.
  \end{enumerate*}
  Note that we can make this analogy because all steal attempts (and consequent \algid{state} flag updates) take place during the first phase of scheduling iterations while all spread attempts take place during the second phase of scheduling iterations. 
  Thus, $\mathrm{E}\bracks{|Spread^{+}_{\overline{t}}\left(B_{\overline{t}}\right)|} \geq P \alpha_{\overline{t}}^{2} \left(1 - \euler^{- \left(1 - \alpha_{\overline{t}}\right)}\right)$.
\end{proof}

\begin{proof}[Proof of Lemma~\ref{lemma:lower-bounds-expected-spreads-p}]
  By Lemma~\ref{lemma:p-spread-attempt-probability-decreases-with-more-spreads} it follows that $\mathrm{E}\bracks{|Spread^{+}_{\overline{t}}\left(p_{\overline{t}}\right)|}$ is the smallest if all busy processors enabled two nodes, and thus made a spread attempt.
  By Lemma~\ref{lemma:lower-bounds-total-number-of-spreads}, the expected number of nodes spread during a round such that all busy processors make a spread attempt is at least $P \alpha_{\overline{t}}^{2} \left(1 - \euler^{- \left(1 - \alpha_{\overline{t}}\right)}\right)$.
  Since, as we already noted, all processors have the same chances to make a successful spread attempt, and because each spread attempt may migrate at most one node, it follows that the expected number of nodes spread by each processor that makes a spread attempt is the same.
  Thus, since the expected number of nodes that $p$ spreads is the smallest if all processors make a spread attempt, then, letting $B_{\overline{t}}$ denote the set of processors that are busy during $\overline{t}$, it follows
  \[\mathrm{E}\bracks{|Spread^{+}_{\overline{t}}\left(p_{\overline{t}}\right)|} = \frac{\mathrm{E}\bracks{|Spread^{+}_{\overline{t}}\left(B_{\overline{t}}\right)|}}{P(1 - \alpha_{\overline{t}})} \geq \frac{\alpha_{\overline{t}}^{2}}{1 - \alpha_{\overline{t}}} \left(1 - \euler^{- \left(1 - \alpha_{\overline{t}}\right)}\right).\]
\end{proof}

\subsection{Full proof for Lemma~\ref{lemma:wss-nodes-stolen-spread-round-intersection}}
\label{sub:5-proof-lemma-2}

\begin{proof}[Proof of Lemma~\ref{lemma:wss-nodes-stolen-spread-round-intersection}]
  As proved in Claim~\ref{claim:progression-connecting-lemma-1}, $R_{\overline{t}}\left(p\right) \cap E_{\overline{t}}\left(p\right) = \emptyset$.
  To conclude the proof of this lemma, note that, from Definitions~\ref{def:nodes-stolen} and \ref{def:nodes-spread}, and by the definition of Algorithm~\ref{algo:wss} we have $Stolen^{+}_{\overline{t}}\left(p\right) \subseteq R_{\overline{t}}\left(p\right)$ and $Spread^{+}_{\overline{t}}\left(p\right) \subseteq E_{\overline{t}}\left(p\right)$ and so the lemma holds.  
\end{proof}

\subsection{Full proof for Lemma~\ref{lemma:wss-inequation-proof}}
\label{sub:5-proof-lemma-3}

\begin{claim}
\label{lemma:v-alpha increasing}
Let \[v\left(\alpha\right) = \frac{-2 + \euler^{\alpha-1}\left(2 + \alpha\left(4 - 5\alpha + \alpha^{3}\right)\right)}{{\left(\alpha - 1\right)}^{3}}.\]
Then, $\forall \alpha \in \left[0.7375; 1\right[  \qquad v\left(\alpha\right) \geq 0$.
\end{claim}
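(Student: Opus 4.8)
The plan is to reduce the claim to a statement about the sign of the numerator of $v$, and then analyze that numerator by a monotonicity argument. First I would record that on $[0.7375;1[$ we have $\alpha - 1 < 0$, hence $(\alpha-1)^{3} < 0$. Consequently, since $v(\alpha)$ is the quotient of its numerator by the negative quantity $(\alpha-1)^{3}$, the inequality $v(\alpha) \geq 0$ holds if and only if the numerator satisfies
\[N(\alpha) := -2 + \euler^{\alpha-1}\parens[\big]{2 + \alpha\parens{4 - 5\alpha + \alpha^{3}}} \leq 0.\]
Writing $P(\alpha) = \alpha^{4} - 5\alpha^{2} + 4\alpha + 2$ for the polynomial multiplying $\euler^{\alpha-1}$, so that $N(\alpha) = \euler^{\alpha-1}P(\alpha) - 2$, it therefore suffices to prove $\euler^{\alpha-1}P(\alpha) \leq 2$ throughout $[0.7375;1[$.

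The engine of the proof is a boundary value together with monotonicity. At $\alpha = 1$ we have $P(1) = 1 - 5 + 4 + 2 = 2$ and $\euler^{0} = 1$, so $h(\alpha) := \euler^{\alpha-1}P(\alpha)$ equals exactly $2$ at $\alpha = 1$. Thus it is enough to show that $h$ is increasing on the interval, since then $h(\alpha) < h(1) = 2$ for every $\alpha < 1$. Differentiating, $h'(\alpha) = \euler^{\alpha-1}\parens[\big]{P(\alpha) + P'(\alpha)}$, and a direct computation gives $P(\alpha) + P'(\alpha) = \alpha^{4} + 4\alpha^{3} - 5\alpha^{2} - 6\alpha + 6$. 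The crucial algebraic observation is that this quartic factors cleanly as
\[P(\alpha) + P'(\alpha) = (\alpha-1)^{2}\parens{\alpha^{2} + 6\alpha + 6},\]
which I would verify by expansion (equivalently, by noting that $\alpha = 1$ is a double root of $P + P'$).

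Since $(\alpha-1)^{2} > 0$ for $\alpha \neq 1$, and since the quadratic $\alpha^{2} + 6\alpha + 6$ has both roots at $-3 \pm \sqrt{3} < 0$ and is therefore strictly positive for all $\alpha > 0$, we obtain $h'(\alpha) > 0$ throughout $[0.7375;1[$ (indeed for every $\alpha$ with $-3+\sqrt{3} < \alpha < 1$, so the specific endpoint $0.7375$ is not even needed for this step). Hence $h$ is strictly increasing, giving $h(\alpha) < h(1) = 2$, that is $N(\alpha) < 0$; combined with $(\alpha-1)^{3} < 0$ this yields $v(\alpha) > 0 \geq 0$, as required. The only delicate point is the sign bookkeeping when passing from $v$ to $N$ through division by the negative cubic $(\alpha-1)^{3}$; apart from that, the entire argument collapses to the factorization above, which is where I expect the genuine work (spotting the double root) to lie, the remainder being a one-line monotonicity conclusion.
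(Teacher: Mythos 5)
Your proposal is correct and is essentially the paper's own argument: the paper likewise reduces to the sign of the numerator $f(\alpha) = e^{\alpha-1}\bigl(2+\alpha(4-5\alpha+\alpha^{3})\bigr)-2$, computes its derivative in the same factored form $e^{\alpha-1}(1-\alpha)^{2}(6+6\alpha+\alpha^{2})$ (your factorization of $P+P'$), concludes $f(\alpha)\leq f(1)=0$ by monotonicity, and divides by the negative cubic $(\alpha-1)^{3}$. The only cosmetic difference is that you shift by the constant $2$ and track $h=e^{\alpha-1}P$ instead of $f=h-2$, and you spell out the double-root factorization that the paper merely asserts.
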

\begin{proof}
Let \[f\left(\alpha\right) = -2 + \euler^{\alpha-1}\left(2 + \alpha\left(4 - 5\alpha + \alpha^{3}\right)\right)\]
and \[g\left(\alpha\right) = {\left(\alpha - 1\right)}^{3}.\]
Thus, \[v\left(\alpha\right) = \frac{f\left(\alpha\right)}{g\left(\alpha\right)}.\]
Since \[\frac{\mathrm{d} f\left(\alpha\right)}{\mathrm{d} \alpha} =  \euler^{-1 + \alpha}{\left(1 - \alpha\right)}^{2}\left(6 + 6\alpha + \alpha^{2}\right)\]
it follows that $\forall \alpha \in \left[0.7375; 1\right[$, $f\left(\alpha\right)$ is non-decreasing. \\
Consequently, $\forall \alpha \in \left[0.7375; 1\right[$
\begin{align*}
f\left(\alpha\right) &\leq f\left(1\right) \\
&= -2 + \euler^{1-1}\left(2 + 1\left(4 - 5.1 + 1^{3}\right)\right) \\
&= 0 \\
\end{align*}
Since, $\forall \alpha \in \left[0.7375; 1\right[$
\[g\left(\alpha\right) = {\left(-1 + \alpha\right)}^{3} < 0\]
we have that, $\forall \alpha \in \left[0.7375; 1\right[$
\[v\left(\alpha\right) = \frac{f\left(\alpha\right)}{g\left(\alpha\right)} \geq 0,\]
concluding the proof of the claim.
\end{proof}

\begin{proof}[Proof of Lemma~\ref{lemma:wss-inequation-proof}]
\begin{align*}
1 < 1 - \euler^{-\alpha} + \frac{\alpha^{2}}{1 - \alpha} \left(1 - \euler^{-\left(1 - \alpha\right)}\right)
\end{align*}
iff
\begin{align*}
0 < - \euler^{-\alpha} + \frac{\alpha^{2}}{1 - \alpha} \left(1 - \euler^{- \left(1 - \alpha\right)}\right)
\end{align*}
Let \[s\left(\alpha\right) = - \euler^{-\alpha} + \frac{\alpha^{2}}{1 - \alpha} \left(1 - \euler^{- \left(1 - \alpha\right)}\right)\]
Then \[\frac{\mathrm{d} s\left(\alpha\right)}{\mathrm{d} \alpha} = -1 + \euler^{-\alpha} + \frac{1 + \euler^{-1+\alpha}\alpha\left(-2 + \alpha^{2}\right )}{{\left(-1 + \alpha \right)}^{2}} \]
Let $t\left(\alpha\right)$ be defined as the last two terms of $\frac{\mathrm{d} s\left(\alpha\right)}{\mathrm{d} \alpha}$:
\[t\left(\alpha\right) = \frac{1 + \euler^{-1+\alpha}\alpha\left(-2 + \alpha^{2}\right )}{{\left(-1 + \alpha \right)}^{2}}\]
To prove that $t\left(\alpha\right)$ is non-decreasing $\forall \alpha \in \left[0.7375; 1\right[$, we compute its derivative.
\[\frac{\mathrm{d} t\left(\alpha\right)}{\mathrm{d} \alpha} =  \frac{-2 + \euler^{\alpha-1}\left(2 + \alpha\left(4 - 5\alpha + \alpha^{3}\right)\right)}{{\left(- 1 + \alpha \right)}^{3}} \]
By Claim~\ref{lemma:v-alpha increasing}, $\forall \alpha \in \left[0.7375; 1\right[$ we have $\frac{\mathrm{d} t\left(\alpha\right)}{\mathrm{d} \alpha} \geq 0$,
meaning that $t\left(\alpha\right)$ is non-decreasing for that interval.\\
It follows that $\forall \alpha \in \left[0.7375; 1\right[$ we have
\begin{align*}
t\left(\alpha\right) &= \frac{1 + \euler^{-1+\alpha}\alpha\left( -2 + \alpha^{2} \right )}{{\left(- 1 + \alpha \right)}^{2}} \\
&\geq \frac{1 + \euler^{-1+0.7375}0.7375\left( -2 + 0.7375^{2} \right )}{{\left(-1 + 0.7375 \right)}^{2}} \\
&> 2.5
\end{align*}

Consequently,
\begin{align*}
\frac{\mathrm{d} s\left(\alpha\right)}{\mathrm{d} \alpha} &= -1 + \euler^{-\alpha} + \frac{1 + \euler^{-1+\alpha}\alpha\left( -2 + \alpha^{2} \right )}{{\left(- 1 + \alpha \right)}^{2}} \\
&= -1 + \euler^{-\alpha} + t\left(\alpha\right)\\
&> \euler^{-\alpha} + 2.5\\
&> 0
\end{align*}
Thus, $\forall \alpha \in \left[0.7375; 1\right[$, $s\left(\alpha\right)$ is strictly increasing.
To conclude this proof, it only remains to note that for that same interval we have
\begin{align*}
s\left(\alpha\right) &\geq s\left(0.7375\right) \\
&= -\euler^{-0.7375} + \frac{{0.7375}^{2}}{1 - 0.7375} \left(1 - \euler^{- \left(1 - {0.7375}\right)}\right) \\
&> 0.00006 \\
&> 0
\end{align*}
\end{proof}

\end{document}